\newtheorem{theorem}{Theorem}[section]
\newtheorem{axiom}[theorem]{Axiom}
\newtheorem{conjecture}[theorem]{Conjecture}
\newtheorem{corollary}[theorem]{Corollary}
\newtheorem{definition}[theorem]{Definition}
\newtheorem{example}[theorem]{Example}
\newtheorem{exercise}[theorem]{Exercise}
\newtheorem{lemma}[theorem]{Lemma}
\newtheorem{proposition}[theorem]{Proposition}
\newtheorem{remark}[theorem]{Remark}
\newenvironment{proof}[1][Proof]{\noindent\textbf{#1.} }{\ \rule{0.5em}{0.5em}}
\numberwithin{equation}{section}
\numberwithin{figure}{section}
\let\pdfoutput=\undefined\fi
\chardef\@x10\chardef\@xv60
\def\tcitime{
\def\@time{%
  \@minute\time\@hour\@minute\divide\@hour\@xv
  \ifnum\@hour<\@x 0\fi\the\@hour:%
  \multiply\@hour\@xv\advance\@minute-\@hour
  \ifnum\@minute<\@x 0\fi\the\@minute
  }}%
\def\x@hyperref#1#2#3{%
   \catcode`\~ = 12
   \catcode`\$ = 12
   \catcode`\_ = 12
   \catcode`\# = 12
   \catcode`\& = 12
   \y@hyperref{#1}{#2}{#3}%
}
\def\y@hyperref#1#2#3#4{%
   #2\ref{#4}#3
   \catcode`\~ = 13
   \catcode`\$ = 3
   \catcode`\_ = 8
   \catcode`\# = 6
   \catcode`\& = 4
}
\def\QCTOpt[#1]#2{%
  \def\QCTOptB{#1}
  \def\QCTOptA{#2}
}
\def\QCTNOpt#1{%
  \def\QCTOptA{#1}
  \let\QCTOptB\empty
}
\def\Qct{%
  \@ifnextchar[{%
    \QCTOpt}{\QCTNOpt}
}
\def\QCBOpt[#1]#2{%
  \def\QCBOptB{#1}%
  \def\QCBOptA{#2}%
}
\def\QCBNOpt#1{%
  \def\QCBOptA{#1}%
  \let\QCBOptB\empty
}
\def\Qcb{%
  \@ifnextchar[{%
    \QCBOpt}{\QCBNOpt}%
}
\def\PrepCapArgs{%
  \ifx\QCBOptA\empty
    \ifx\QCTOptA\empty
      {}%
    \else
      \ifx\QCTOptB\empty
        {\QCTOptA}%
      \else
        [\QCTOptB]{\QCTOptA}%
      \fi
    \fi
  \else
    \ifx\QCBOptA\empty
      {}%
    \else
      \ifx\QCBOptB\empty
        {\QCBOptA}%
      \else
        [\QCBOptB]{\QCBOptA}%
      \fi
    \fi
  \fi
}
\def\GRAPHICSPS#1{%
 \ifcase\GRAPHICSTYPE
   \special{ps: #1}%
 \or
   \special{language "PS", include "#1"}%
 \fi
}%
\def\graffile#1#2#3#4{%
    \bgroup
	   \@inlabelfalse
       \leavevmode
       \@ifundefined{bbl@deactivate}{\def~{\string~}}{\activesoff}%
        \raise -#4 \BOXTHEFRAME{%
           \hbox to #2{\raise #3\hbox to #2{\null #1\hfil}}}%
    \egroup
}%
\def\draftbox#1#2#3#4{%
 \leavevmode\raise -#4 \hbox{%
  \frame{\rlap{\protect\tiny #1}\hbox to #2%
   {\vrule height#3 width\z@ depth\z@\hfil}%
  }%
 }%
}%
\let\nographics=\@msidraft
\newif\ifwasdraft
\def\GRAPHIC#1#2#3#4#5{%
   \ifnum\@msidraft=\@ne\draftbox{#2}{#3}{#4}{#5}%
   \else\graffile{#1}{#3}{#4}{#5}%
   \fi
}
\def\addtoLaTeXparams#1{%
    \edef\LaTeXparams{\LaTeXparams #1}}%
\newif\ifBoxFrame \BoxFramefalse
\newif\ifOverFrame \OverFramefalse
\newif\ifUnderFrame \UnderFramefalse
\def\BOXTHEFRAME#1{%
   \hbox{%
      \ifBoxFrame
         \frame{#1}%
      \else
         {#1}%
      \fi
   }%
}
\def\doFRAMEparams#1{\BoxFramefalse\OverFramefalse\UnderFramefalse\readFRAMEparams#1\end}%
\def\readFRAMEparams#1{%
 \ifx#1\end%
  \let\next=\relax
  \else
  \ifx#1i\dispkind=\z@\fi
  \ifx#1d\dispkind=\@ne\fi
  \ifx#1f\dispkind=\tw@\fi
  \ifx#1t\addtoLaTeXparams{t}\fi
  \ifx#1b\addtoLaTeXparams{b}\fi
  \ifx#1p\addtoLaTeXparams{p}\fi
  \ifx#1h\addtoLaTeXparams{h}\fi
  \ifx#1X\BoxFrametrue\fi
  \ifx#1O\OverFrametrue\fi
  \ifx#1U\UnderFrametrue\fi
  \ifx#1w
    \ifnum\@msidraft=1\wasdrafttrue\else\wasdraftfalse\fi
    \@msidraft=\@ne
  \fi
  \let\next=\readFRAMEparams
  \fi
 \next
 }%
\def\IFRAME#1#2#3#4#5#6{%
      \bgroup
      \let\QCTOptA\empty
      \let\QCTOptB\empty
      \let\QCBOptA\empty
      \let\QCBOptB\empty
      #6%
      \parindent=0pt
      \leftskip=0pt
      \rightskip=0pt
      \setbox0=\hbox{\QCBOptA}%
      \@tempdima=#1\relax
      \ifOverFrame
          \typeout{This is not implemented yet}%
          \show\HELP
      \else
         \ifdim\wd0>\@tempdima
            \advance\@tempdima by \@tempdima
            \ifdim\wd0 >\@tempdima
               \setbox1 =\vbox{%
                  \unskip\hbox to \@tempdima{\hfill\GRAPHIC{#5}{#4}{#1}{#2}{#3}\hfill}%
                  \unskip\hbox to \@tempdima{\parbox[b]{\@tempdima}{\QCBOptA}}%
               }%
               \wd1=\@tempdima
            \else
               \textwidth=\wd0
               \setbox1 =\vbox{%
                 \noindent\hbox to \wd0{\hfill\GRAPHIC{#5}{#4}{#1}{#2}{#3}\hfill}\\%
                 \noindent\hbox{\QCBOptA}%
               }%
               \wd1=\wd0
            \fi
         \else
            \ifdim\wd0>0pt
              \hsize=\@tempdima
              \setbox1=\vbox{%
                \unskip\GRAPHIC{#5}{#4}{#1}{#2}{0pt}%
                \break
                \unskip\hbox to \@tempdima{\hfill \QCBOptA\hfill}%
              }%
              \wd1=\@tempdima
           \else
              \hsize=\@tempdima
              \setbox1=\vbox{%
                \unskip\GRAPHIC{#5}{#4}{#1}{#2}{0pt}%
              }%
              \wd1=\@tempdima
           \fi
         \fi
         \@tempdimb=\ht1
         \advance\@tempdimb by -#2
         \advance\@tempdimb by #3
         \leavevmode
         \raise -\@tempdimb \hbox{\box1}%
      \fi
      \egroup%
}%
\def\DFRAME#1#2#3#4#5{%
  \vspace\topsep
  \hfil\break
  \bgroup
     \leftskip\@flushglue
	 \rightskip\@flushglue
	 \parindent\z@
	 \parfillskip\z@skip
     \let\QCTOptA\empty
     \let\QCTOptB\empty
     \let\QCBOptA\empty
     \let\QCBOptB\empty
	 \vbox\bgroup
        \ifOverFrame 
           #5\QCTOptA\par
        \fi
        \GRAPHIC{#4}{#3}{#1}{#2}{\z@}%
        \ifUnderFrame 
           \break#5\QCBOptA
        \fi
	 \egroup
  \egroup
  \vspace\topsep
  \break
}%
\def\FFRAME#1#2#3#4#5#6#7{%
  \@ifundefined{floatstyle}
    {
     \begin{figure}[#1]%
    }
    {
	 \ifx#1h
      \begin{figure}[H]%
	 \else
      \begin{figure}[#1]%
	 \fi
	}
  \let\QCTOptA\empty
  \let\QCTOptB\empty
  \let\QCBOptA\empty
  \let\QCBOptB\empty
  \ifOverFrame
    #4
    \ifx\QCTOptA\empty
    \else
      \ifx\QCTOptB\empty
        \caption{\QCTOptA}%
      \else
        \caption[\QCTOptB]{\QCTOptA}%
      \fi
    \fi
    \ifUnderFrame\else
      \label{#5}%
    \fi
  \else
    \UnderFrametrue%
  \fi
  \begin{center}\GRAPHIC{#7}{#6}{#2}{#3}{\z@}\end{center}%
  \ifUnderFrame
    #4
    \ifx\QCBOptA\empty
      \caption{}%
    \else
      \ifx\QCBOptB\empty
        \caption{\QCBOptA}%
      \else
        \caption[\QCBOptB]{\QCBOptA}%
      \fi
    \fi
    \label{#5}%
  \fi
  \end{figure}%
 }%
\def\makeactives{
  \catcode`\"=\active
  \catcode`\;=\active
  \catcode`\:=\active
  \catcode`\'=\active
  \catcode`\~=\active
}
   \gdef\activesoff{%
      \def"{\string"}%
      \def;{\string;}%
      \def:{\string:}%
      \def'{\string'}%
      \def~{\string~}%
    }
\def\FRAME#1#2#3#4#5#6#7#8{%
 \bgroup
 \ifnum\@msidraft=\@ne
   \wasdrafttrue
 \else
   \wasdraftfalse%
 \fi
 \def\LaTeXparams{}%
 \dispkind=\z@
 \def\LaTeXparams{}%
 \doFRAMEparams{#1}%
 \ifnum\dispkind=\z@\IFRAME{#2}{#3}{#4}{#7}{#8}{#5}\else
  \ifnum\dispkind=\@ne\DFRAME{#2}{#3}{#7}{#8}{#5}\else
   \ifnum\dispkind=\tw@
    \edef\@tempa{\noexpand\FFRAME{\LaTeXparams}}%
    \@tempa{#2}{#3}{#5}{#6}{#7}{#8}%
    \fi
   \fi
  \fi
  \ifwasdraft\@msidraft=1\else\@msidraft=0\fi{}%
  \egroup
 }%
\def\TEXUX#1{"texux"}
\def\limfunc#1{\mathop{\rm #1}}%
\def\func#1{\mathop{\rm #1}\nolimits}%
\long\def\QQQ#1#2{%
     \long\expandafter\def\csname#1\endcsname{#2}}%
\long\def\QQA#1#2{}%
\def\QTR#1#2{{\csname#1\endcsname {#2}}}%
\def\EXPAND#1[#2]#3{}%
\def\NOEXPAND#1[#2]#3{}%
\def\LaTeXparent#1{}%
\def\ChildStyles#1{}%
\def\ChildDefaults#1{}%
\def\QTagDef#1#2#3{}%
  \providecommand{\UNICODE}[2][]{\protect\rule{.1in}{.1in}}
  \providecommand{\U}[1]{\protect\rule{.1in}{.1in}}
\def\QQfnmark#1{\footnotemark}
 \def\abstract{%
  \if@twocolumn
   \section*{Abstract (Not appropriate in this style!)}%
   \else \small 
   \begin{center}{\bf Abstract\vspace{-.5em}\vspace{\z@}}\end{center}%
   \quotation 
   \fi
  }%
   \def\registered{\relax\ifmmode{}\r@gistered
                    \else$\m@th\r@gistered$\fi}%
 \def\r@gistered{^{\ooalign
  {\hfil\raise.07ex\hbox{$\scriptstyle\rm\text{R}$}\hfil\crcr
  \mathhexbox20D}}}}{}%
\newdimen\theight
\def\newfmtname{LaTeX2e}
  \DeclareOldFontCommand{\rm}{\normalfont\rmfamily}{\mathrm}
  \DeclareOldFontCommand{\sf}{\normalfont\sffamily}{\mathsf}
  \DeclareOldFontCommand{\tt}{\normalfont\ttfamily}{\mathtt}
  \DeclareOldFontCommand{\bf}{\normalfont\bfseries}{\mathbf}
  \DeclareOldFontCommand{\it}{\normalfont\itshape}{\mathit}
  \DeclareOldFontCommand{\sl}{\normalfont\slshape}{\@nomath\sl}
  \DeclareOldFontCommand{\sc}{\normalfont\scshape}{\@nomath\sc}
\def\alpha{{\Greekmath 010B}}%
\def\beta{{\Greekmath 010C}}%
\def\gamma{{\Greekmath 010D}}%
\def\delta{{\Greekmath 010E}}%
\def\epsilon{{\Greekmath 010F}}%
\def\zeta{{\Greekmath 0110}}%
\def\eta{{\Greekmath 0111}}%
\def\theta{{\Greekmath 0112}}%
\def\iota{{\Greekmath 0113}}%
\def\kappa{{\Greekmath 0114}}%
\def\lambda{{\Greekmath 0115}}%
\def\mu{{\Greekmath 0116}}%
\def\nu{{\Greekmath 0117}}%
\def\xi{{\Greekmath 0118}}%
\def\pi{{\Greekmath 0119}}%
\def\rho{{\Greekmath 011A}}%
\def\sigma{{\Greekmath 011B}}%
\def\tau{{\Greekmath 011C}}%
\def\upsilon{{\Greekmath 011D}}%
\def\phi{{\Greekmath 011E}}%
\def\chi{{\Greekmath 011F}}%
\def\psi{{\Greekmath 0120}}%
\def\omega{{\Greekmath 0121}}%
\def\varepsilon{{\Greekmath 0122}}%
\def\vartheta{{\Greekmath 0123}}%
\def\varpi{{\Greekmath 0124}}%
\def\varrho{{\Greekmath 0125}}%
\def\varsigma{{\Greekmath 0126}}%
\def\varphi{{\Greekmath 0127}}%
\def\nabla{{\Greekmath 0272}}
\def\FindBoldGroup{%
   {\setbox0=\hbox{$\mathbf{x\global\edef\theboldgroup{\the\mathgroup}}$}}%
}
\def\Greekmath#1#2#3#4{%
    \if@compatibility
        \ifnum\mathgroup=\symbold
           \mathchoice{\mbox{\boldmath$\displaystyle\mathchar"#1#2#3#4$}}%
                      {\mbox{\boldmath$\textstyle\mathchar"#1#2#3#4$}}%
                      {\mbox{\boldmath$\scriptstyle\mathchar"#1#2#3#4$}}%
                      {\mbox{\boldmath$\scriptscriptstyle\mathchar"#1#2#3#4$}}%
        \else
           \mathchar"#1#2#3#4%
        \fi 
    \else 
        \FindBoldGroup
        \ifnum\mathgroup=\theboldgroup 
           \mathchoice{\mbox{\boldmath$\displaystyle\mathchar"#1#2#3#4$}}%
                      {\mbox{\boldmath$\textstyle\mathchar"#1#2#3#4$}}%
                      {\mbox{\boldmath$\scriptstyle\mathchar"#1#2#3#4$}}%
                      {\mbox{\boldmath$\scriptscriptstyle\mathchar"#1#2#3#4$}}%
        \else
           \mathchar"#1#2#3#4%
        \fi     	    
	  \fi}
\newif\ifGreekBold  \GreekBoldfalse
\let\SAVEPBF=\pbf
\def\pbf{\GreekBoldtrue\SAVEPBF}%
  \newcounter{equationnumber}  
  \def\mathletters{%
     \addtocounter{equation}{1}
     \edef\@currentlabel{\theequation}%
     \setcounter{equationnumber}{\c@equation}
     \setcounter{equation}{0}%
     \edef\theequation{\@currentlabel\noexpand\alph{equation}}%
  }
    \def\BibTeX{{\rm B\kern-.05em{\sc i\kern-.025em b}\kern-.08em
                 T\kern-.1667em\lower.7ex\hbox{E}\kern-.125emX}}}{}%
\def\AmS{{\protect\usefont{OMS}{cmsy}{m}{n}%
                A\kern-.1667em\lower.5ex\hbox{M}\kern-.125emS}}}{}%
\def\@@eqncr{\let\@tempa\relax
    \ifcase\@eqcnt \def\@tempa{& & &}\or \def\@tempa{& &}%
      \else \def\@tempa{&}\fi
     \@tempa
     \if@eqnsw
        \iftag@
           \@taggnum
        \else
           \@eqnnum\stepcounter{equation}%
        \fi
     \fi
     \global\tag@false
     \global\@eqnswtrue
     \global\@eqcnt\z@\cr}
\def\TCItag{\@ifnextchar*{\@TCItagstar}{\@TCItag}}
\def\@TCItag#1{%
    \global\tag@true
    \global\def\@taggnum{(#1)}}
\def\@TCItagstar*#1{%
    \global\tag@true
    \global\def\@taggnum{#1}}
\def\QATOPD#1#2#3#4{{#3 \atopwithdelims#1#2 #4}}%
\def\tsum{\mathop{\textstyle \sum }}%
\def\tprod{\mathop{\textstyle \prod }}%
\def\tbigoplus{\mathop{\textstyle \bigoplus }}%
\def\ExitTCILatex{\makeatother }
\if@compatibility\message{amsmath already loaded}\fi\aftergroup\ExitTCILatex}
\if@compatibility\message{amstex already loaded}\fi\aftergroup\ExitTCILatex}
\if@compatibility\message{amsgen already loaded}\fi\aftergroup\ExitTCILatex}
\let\DOTSI\relax
\def\RIfM@{\relax\ifmmode}%
\def\FN@{\futurelet\next}%
\def\iint{\DOTSI\intno@\tw@\FN@\ints@}%
\def\iiint{\DOTSI\intno@\thr@@\FN@\ints@}%
\def\iiiint{\DOTSI\intno@4 \FN@\ints@}%
\def\idotsint{\DOTSI\intno@\z@\FN@\ints@}%
\def\ints@{\findlimits@\ints@@}%
\newif\iflimtoken@
\newif\iflimits@
\def\findlimits@{\limtoken@true\ifx\next\limits\limits@true
 \else\ifx\next\nolimits\limits@false\else
 \limtoken@false\ifx\ilimits@\nolimits\limits@false\else
 \ifinner\limits@false\else\limits@true\fi\fi\fi\fi}%
\def\multint@{\int\ifnum\intno@=\z@\intdots@                          
 \else\intkern@\fi                                                    
 \ifnum\intno@>\tw@\int\intkern@\fi                                   
 \ifnum\intno@>\thr@@\int\intkern@\fi                                 
 \int}
\def\multintlimits@{\intop\ifnum\intno@=\z@\intdots@\else\intkern@\fi
 \ifnum\intno@>\tw@\intop\intkern@\fi
 \ifnum\intno@>\thr@@\intop\intkern@\fi\intop}%
\def\intic@{%
    \mathchoice{\hskip.5em}{\hskip.4em}{\hskip.4em}{\hskip.4em}}%
\def\negintic@{\mathchoice
 {\hskip-.5em}{\hskip-.4em}{\hskip-.4em}{\hskip-.4em}}%
\def\ints@@{\iflimtoken@                                              
 \def\ints@@@{\iflimits@\negintic@
   \mathop{\intic@\multintlimits@}\limits                             
  \else\multint@\nolimits\fi                                          
  \eat@}
 \else                                                                
 \def\ints@@@{\iflimits@\negintic@
  \mathop{\intic@\multintlimits@}\limits\else
  \multint@\nolimits\fi}\fi\ints@@@}%
\def\intkern@{\mathchoice{\!\!\!}{\!\!}{\!\!}{\!\!}}%
\def\plaincdots@{\mathinner{\cdotp\cdotp\cdotp}}%
\def\intdots@{\mathchoice{\plaincdots@}%
 {{\cdotp}\mkern1.5mu{\cdotp}\mkern1.5mu{\cdotp}}%
 {{\cdotp}\mkern1mu{\cdotp}\mkern1mu{\cdotp}}%
 {{\cdotp}\mkern1mu{\cdotp}\mkern1mu{\cdotp}}}%
\def\RIfM@{\relax\protect\ifmmode}
\def\text{\RIfM@\expandafter\text@\else\expandafter\mbox\fi}
\let\nfss@text\text
\def\text@#1{\mathchoice
   {\textdef@\displaystyle\f@size{#1}}%
   {\textdef@\textstyle\tf@size{\firstchoice@false #1}}%
   {\textdef@\textstyle\sf@size{\firstchoice@false #1}}%
   {\textdef@\textstyle \ssf@size{\firstchoice@false #1}}%
   \glb@settings}
\def\textdef@#1#2#3{\hbox{{%
                    \everymath{#1}%
                    \let\f@size#2\selectfont
                    #3}}}
\newif\iffirstchoice@
\def\Let@{\relax\iffalse{\fi\let\\=\cr\iffalse}\fi}%
\def\vspace@{\def\vspace##1{\crcr\noalign{\vskip##1\relax}}}%
\def\multilimits@{\bgroup\vspace@\Let@
 \baselineskip\fontdimen10 \scriptfont\tw@
 \advance\baselineskip\fontdimen12 \scriptfont\tw@
 \lineskip\thr@@\fontdimen8 \scriptfont\thr@@
 \lineskiplimit\lineskip
 \vbox\bgroup\ialign\bgroup\hfil$\m@th\scriptstyle{##}$\hfil\crcr}%
\def\Sb{_\multilimits@}%
\def\endSb{\crcr\egroup\egroup\egroup}%
\def\Sp{^\multilimits@}%
\newdimen\ex@
\def\rightarrowfill@#1{$#1\m@th\mathord-\mkern-6mu\cleaders
 \hbox{$#1\mkern-2mu\mathord-\mkern-2mu$}\hfill
 \mkern-6mu\mathord\rightarrow$}%
\def\leftarrowfill@#1{$#1\m@th\mathord\leftarrow\mkern-6mu\cleaders
 \hbox{$#1\mkern-2mu\mathord-\mkern-2mu$}\hfill\mkern-6mu\mathord-$}%
\def\leftrightarrowfill@#1{$#1\m@th\mathord\leftarrow
\mkern-6mu\cleaders
 \hbox{$#1\mkern-2mu\mathord-\mkern-2mu$}\hfill
 \mkern-6mu\mathord\rightarrow$}%
\def\overrightarrow{\mathpalette\overrightarrow@}%
\def\overrightarrow@#1#2{\vbox{\ialign{##\crcr\rightarrowfill@#1\crcr
 \noalign{\kern-\ex@\nointerlineskip}$\m@th\hfil#1#2\hfil$\crcr}}}%
\def\overleftarrow{\mathpalette\overleftarrow@}%
\def\overleftarrow@#1#2{\vbox{\ialign{##\crcr\leftarrowfill@#1\crcr
 \noalign{\kern-\ex@\nointerlineskip}$\m@th\hfil#1#2\hfil$\crcr}}}%
\def\overleftrightarrow{\mathpalette\overleftrightarrow@}%
\def\overleftrightarrow@#1#2{\vbox{\ialign{##\crcr
   \leftrightarrowfill@#1\crcr
 \noalign{\kern-\ex@\nointerlineskip}$\m@th\hfil#1#2\hfil$\crcr}}}%
\def\underrightarrow{\mathpalette\underrightarrow@}%
\def\underrightarrow@#1#2{\vtop{\ialign{##\crcr$\m@th\hfil#1#2\hfil
  $\crcr\noalign{\nointerlineskip}\rightarrowfill@#1\crcr}}}%
\def\underleftarrow{\mathpalette\underleftarrow@}%
\def\underleftarrow@#1#2{\vtop{\ialign{##\crcr$\m@th\hfil#1#2\hfil
  $\crcr\noalign{\nointerlineskip}\leftarrowfill@#1\crcr}}}%
\def\underleftrightarrow{\mathpalette\underleftrightarrow@}%
\def\underleftrightarrow@#1#2{\vtop{\ialign{##\crcr$\m@th
  \hfil#1#2\hfil$\crcr
 \noalign{\nointerlineskip}\leftrightarrowfill@#1\crcr}}}%
\def\qopnamewl@#1{\mathop{\operator@font#1}\nlimits@}
\let\nlimits@\displaylimits
\def\setboxz@h{\setbox\z@\hbox}
\def\varlim@#1#2{\mathop{\vtop{\ialign{##\crcr
 \hfil$#1\m@th\operator@font lim$\hfil\crcr
 \noalign{\nointerlineskip}#2#1\crcr
 \noalign{\nointerlineskip\kern-\ex@}\crcr}}}}
 \def\rightarrowfill@#1{\m@th\setboxz@h{$#1-$}\ht\z@\z@
  $#1\copy\z@\mkern-6mu\cleaders
  \hbox{$#1\mkern-2mu\box\z@\mkern-2mu$}\hfill
  \mkern-6mu\mathord\rightarrow$}
\def\leftarrowfill@#1{\m@th\setboxz@h{$#1-$}\ht\z@\z@
  $#1\mathord\leftarrow\mkern-6mu\cleaders
  \hbox{$#1\mkern-2mu\copy\z@\mkern-2mu$}\hfill
  \mkern-6mu\box\z@$}
\def\projlim{\qopnamewl@{proj\,lim}}
\def\injlim{\qopnamewl@{inj\,lim}}
\def\varinjlim{\mathpalette\varlim@\rightarrowfill@}
\def\varprojlim{\mathpalette\varlim@\leftarrowfill@}
\def\varliminf{\mathpalette\varliminf@{}}
\def\varliminf@#1{\mathop{\underline{\vrule\@depth.2\ex@\@width\z@
   \hbox{$#1\m@th\operator@font lim$}}}}
\def\varlimsup{\mathpalette\varlimsup@{}}
\def\varlimsup@#1{\mathop{\overline
  {\hbox{$#1\m@th\operator@font lim$}}}}
\def\align{\@verbatim \frenchspacing\@vobeyspaces \@alignverbatim
You are using the "align" environment in a style in which it is not defined.}
\let\csname endalign*\endcsname =\endtrivlist
\def\alignat{\@verbatim \frenchspacing\@vobeyspaces \@alignatverbatim
You are using the "alignat" environment in a style in which it is not defined.}
\let\csname endalignat*\endcsname =\endtrivlist
\def\xalignat{\@verbatim \frenchspacing\@vobeyspaces \@xalignatverbatim
You are using the "xalignat" environment in a style in which it is not defined.}
\let\csname endxalignat*\endcsname =\endtrivlist
\def\gather{\@verbatim \frenchspacing\@vobeyspaces \@gatherverbatim
You are using the "gather" environment in a style in which it is not defined.}
\let\csname endgather*\endcsname =\endtrivlist
\def\multiline{\@verbatim \frenchspacing\@vobeyspaces \@multilineverbatim
You are using the "multiline" environment in a style in which it is not defined.}
\let\csname endmultiline*\endcsname =\endtrivlist
\def\arrax{\@verbatim \frenchspacing\@vobeyspaces \@arraxverbatim
You are using a type of "array" construct that is only allowed in AmS-LaTeX.}
\def\tabulax{\@verbatim \frenchspacing\@vobeyspaces \@tabulaxverbatim
You are using a type of "tabular" construct that is only allowed in AmS-LaTeX.}
\let\csname endarrax*\endcsname =\endtrivlist
\let\csname endtabulax*\endcsname =\endtrivlist
 \def\endequation{%
     \ifmmode\ifinner 
      \iftag@
        \addtocounter{equation}{-1} 
        $\hfil
           \displaywidth\linewidth\@taggnum\egroup \endtrivlist
        \global\tag@false
        \global\@ignoretrue   
      \else
        $\hfil
           \displaywidth\linewidth\@eqnnum\egroup \endtrivlist
        \global\tag@false
        \global\@ignoretrue 
      \fi
     \else   
      \iftag@
        \addtocounter{equation}{-1} 
        \eqno \hbox{\@taggnum}
        \global\tag@false%
        $$\global\@ignoretrue
      \else
        \eqno \hbox{\@eqnnum}
        $$\global\@ignoretrue
      \fi
     \fi\fi
 } 
 \newif\iftag@ \tag@false
 \def\TCItag{\@ifnextchar*{\@TCItagstar}{\@TCItag}}
 \def\@TCItag#1{%
     \global\tag@true
     \global\def\@taggnum{(#1)}}
 \def\@TCItagstar*#1{%
     \global\tag@true
     \global\def\@taggnum{#1}}
     \def\tag{\@ifnextchar*{\@tagstar}{\@tag}}
     \def\@tag#1{%
         \global\tag@true
         \global\def\@taggnum{(#1)}}
     \def\@tagstar*#1{%
         \global\tag@true
         \global\def\@taggnum{#1}}
\def\tfrac#1#2{{\textstyle {#1 \over #2}}}%
\begin{document}

\title{Noncommutative Schur polynomials and the crystal limit of the $U_{q}%
\widehat{\mathfrak{sl}}(2)$-vertex model}
\author{Christian Korff \\
School of Mathematics and Statistics, University of Glasgow\\
15 University Gardens, Glasgow G12 8QW, Scotland, UK\\
E-mail: c.korff@maths.gla.ac.uk
}
\maketitle

\begin{abstract}
Starting from the Verma module of $U_{q}\mathfrak{sl}(2)$ we consider the
evaluation module for affine $U_{q}\widehat{\mathfrak{sl}}(2)$ and discuss
its crystal limit ($q\rightarrow 0$). There exists an associated integrable
statistical mechanics model on a square lattice defined in terms of vertex
configurations. Its transfer matrix is the generating function for
noncommutative \emph{complete} symmetric polynomials in the generators of
the affine plactic algebra, an extension of the finite plactic algebra first
discussed by Lascoux and Sch\"{u}tzenberger. The corresponding
noncommutative \emph{elementary} symmetric polynomials were recently shown
to be generated by the transfer matrix of the so-called phase model discussed by Bogoliubov, Izergin and Kitanine. Here we establish that both generating
functions satisfy Baxter's TQ-equation in the crystal limit by tying them to
special $U_{q}\widehat{\mathfrak{sl}}(2)$ solutions of the Yang-Baxter
equation. The TQ-equation amounts to the well-known Jacobi-Trudi formula
leading naturally to the definition of noncommutative Schur polynomials. The
latter can be employed to define a ring which has applications in conformal
field theory and enumerative geometry: it is isomorphic to the fusion ring
of the $\widehat{\mathfrak{sl}}(n)_{k}$ -WZNW model whose structure
constants are the dimensions of spaces of generalized $\theta $-functions
over the Riemann sphere with three punctures.\bigskip\\
PACS numbers: 02.30.Ik, 05.50.+q, 11.25.Hf, 02.10.Hh, 02.10.Ox
\end{abstract}

\section{Introduction}

Integrable systems have many connections with different areas in pure
mathematics. In this article we shall focus on combinatorial aspects of a
particular quantum integrable system, the exactly solvable, statistical
vertex model associated with the quantum affine algebra $U_{q}\widehat{%
\mathfrak{sl}}(2)$ and \textquotedblleft infinite\textquotedblright\ spin.
For spin 1/2 this model specialises to the the well-known six-vertex model
or XXZ quantum spin-chain. By taking the crystal limit \cite{Ka} ($%
q\rightarrow 0$) one arrives at a drastically simplified version of this
model exhibiting nice combinatorial features: the Takahashi-Satsuma cellular
automaton \cite{TS} (or box and ball system); see also e.g. \cite{FOY}, \cite%
{HHIKTT} and references therein for the case of higher (finite) spin and
rank.\smallskip

In the case of infinite spin, i.e. each site of the chain now carries an
infinite-dimensional representation of $U_{q}\widehat{\mathfrak{sl}}(2)$
instead of a finite-dimensional one, there exists a link with enumerative
geometry: the commuting transfer matrices generate a ring of symmetric
polynomials in a noncommutative alphabet, the generators of the \emph{affine}
plactic algebra, whose \emph{finite} version has been introduced by Lascoux
and Sch\"{u}tzenberger \ \cite{LasSchutz} (see also \cite{FG} for a
discussion of the \emph{finite} plactic algebra in the context of
noncommutative Schur polynomials). It was shown in \cite[Part I]{KS} that
the noncommutative Schur polynomials related to the affine plactic algebra
can be employed to define a ring which is isomorphic to the fusion ring of
the $\widehat{\mathfrak{sl}}(n)_{k}$ Wess-Zumino-Novikov-Witten (WZNW)
model, a conformal field theory (CFT) with particularly nice algebraic and
geometric aspects. Here $k\geq 0$ is a non-negative integer called the
\textquotedblleft level\textquotedblright . The fusion ring is one of the
essential data of a CFT \cite{CFTbook} and in the case of the WZNW model its
structure constants coincide with the dimensions of spaces of generalized $%
\theta $-functions over the Riemann sphere with three punctures; see e.g.
\cite{Beauville}.\smallskip

In this article we shall discuss how the combinatorial description of the
fusion ring presented in \cite[Part I]{KS} is obtained from the crystal
limit of the $U_{q}\widehat{\mathfrak{sl}}(2)$ model with infinite spin. In
Section 2 we demonstrate on a simple example how under the action of the
affine plactic algebra the state space of the model decomposes into nice
lattices, called \textquotedblleft crystals\textquotedblright\, which depend
on the level $k$ and can be described in terms of coloured directed graphs
(usually called crystal graphs) related to the quantum affine algebra $U_{q}%
\widehat{\mathfrak{sl}}(n)$ with $n>2$ being the number of lattice sites.
Section 3 contains the definition of the $U_{q}\widehat{\mathfrak{sl}}(2)$
vertex model with infinite spin in the crystal limit. In particular we show
in Section 4 that the generating function of the \emph{noncommutative
complete symmetric polynomials} coincides with its transfer matrix; see
Proposition \ref{main} in the text. This new result complements the
discussion in \cite[Part I, Section 4,5]{KS} where the transfer matrix of
the phase model, first introduced by Bogoliubov, Izergin and Kitanine \cite{Bogoliubovetal},
has been identified with the generating function of the \emph{noncommutative}
\emph{elementary symmetric polynomials} \cite[Prop 5.13]{KS}. As in the ring
of symmetric functions over commutative variables, also in the present case
both sets of polynomials in noncommutative variables are linked via a determinant formula (a special case of the Jacobi-Trudi identity) for which we shall state an alternative proof to the one given in \cite[Def 5.10 and Cor 6.9]{KS}.
Namely, employing the relation to the $U_{q}\widehat{\mathfrak{sl}}(2)$
algebra we show that the transfer matrix of the $U_{q}\widehat{\mathfrak{sl}}%
(2)$ model with infinite spin and the transfer matrix of the phase model
obey Baxter's famous TQ-equation \cite{Baxter} in the crystal limit. In the
concluding section we summarise the connection with the WZNW fusion ring,
giving a brief account of some of the main results from \cite[Part I]{KS},
and state a novel recursion formula for fusion coefficients.

\section{The crystal limit of the $U_{q}\widehat{\mathfrak{sl}}(2)$-Verma
module}

\label{verma}

We set out by introducing the central algebraic structure: we recall the
definition of the quantum affine algebra $U_{q}\widehat{\mathfrak{sl}}(2)$
and then consider a particular infinite-dimensional representation of it.

\begin{definition}
The q-deformed universal enveloping algebra $U_{q}\widehat{\mathfrak{sl}}(2)$
is the unital associative algebra over $\mathbb{C}(q)$ generated from the
letters $\{E_{i},F_{i},K_{i}^{\pm 1}\}_{i=0,1}$ subject to the algebraic
relations
\begin{gather}
K_{i}E_{j}K_{i}^{-1}=q^{A_{ij}}E_{j},\quad
K_{i}F_{j}K_{i}^{-1}=q^{-A_{ij}}F_{j}\;,  \label{AQG} \\
\lbrack E_{i},F_{j}]=\delta _{ij}\frac{K_{i}-K_{i}^{-1}}{q-q^{-1}},\quad
K_{i}K_{j}=K_{j}K_{i},  \notag
\end{gather}%
and for $i\neq j$
\begin{equation}
\sum_{p=0}^{1-A_{ij}}(-1)^{p}\QATOPD[ ] {1-A_{ij}}{p}%
_{q}X_{i}^{1-A_{ij}-p}X_{j}X_{i}^{p}=0,\qquad X_{i}=E_{i},F_{i},  \label{CS}
\end{equation}%
where $A=\left(
\begin{smallmatrix}
2 & -2 \\
-2 & 2%
\end{smallmatrix}%
\right) $ is the Cartan matrix\ of $\widehat{\mathfrak{sl}}(2)$ and we have
set $[x]_{q}=(q^{x}-q^{-x})/(q-q^{-1}),$ $\QATOPD[ ] {m}{n}_{q}:=\frac{%
[m]_{q}!}{\left[ n\right] _{q}!\left[ m-n\right] _{q}!}$, $%
[n]_{q}!:=\prod_{k=1}^{n}[k]_{q}$ as usual. We denote by $U_{q}\mathfrak{sl}%
(2)\subset U_{q}\widehat{\mathfrak{sl}}(2)$ the subalgebra generated from $%
\{E_{1},F_{1},K_{1}^{\pm 1}\}$.
\end{definition}

We introduce the following coproduct $\Delta :U_{q}\widehat{\mathfrak{sl}}%
(2)\rightarrow U_{q}\widehat{\mathfrak{sl}}(2)\otimes U_{q}\widehat{%
\mathfrak{sl}}(2)$,
\begin{equation}
\Delta (E_{i})=E_{i}\otimes 1+K_{i}\otimes E_{i},\;\Delta
(F_{i})=F_{i}\otimes K_{i}^{-1}+1\otimes F_{i},\;\Delta (K_{i})=K_{i}\otimes
K_{i}~.  \label{cop}
\end{equation}%
It is well-known that $U_{q}\widehat{\mathfrak{sl}}(2)$ can be turned into a
Hopf algebra by defining in addition a counit and antipode; see e.g. \cite%
{CP, HK}. However, as we shall not use the latter maps here, we omit their
definition. We now consider a particular module of $U_{q}\widehat{\mathfrak{%
sl}}(2)$ which we shall use throughout this article, first to define various
other related algebras in the crystal limit and then to define certain
statistical mechanics models.

First we recall the definition of a Verma module for the finite subalgebra $%
U_{q}\mathfrak{sl}(2)$. Let $\mathcal{I}_{\mu }\subset U_{q}\mathfrak{sl}(2)$
be the left ideal generated by $E_{1}$ and $K_{1}-\mu q^{-1}1$. Consider the
module $M_{\mu }=U_{q}\mathfrak{sl}(2)/\mathcal{I}_{\mu }$ which is free
over the subalgebra generated by $F_{1}$ (by a quantum version of the Poincar%
\'{e}-Birkhoff-Witt theorem). Set $v_{0}=1+\mathcal{I}_{\mu }$ and $%
v_{m}:=F_{1}^{m}v_{0}$ for all $m\in \mathbb{N}$, then $\{v_{m}\}_{m\in
\mathbb{Z}_{\geq 0}}$ is a basis and the following relations hold%
\begin{eqnarray}
K_{1}v_{m} &=&\mu q^{-2m-1}v_{m},\qquad F_{1}v_{m}=v_{m+1},  \label{QGaction}
\\
E_{1}^{m}v_{0} &=&0,\qquad E_{1}v_{m}=\left( \tfrac{\mu q^{-m}-\mu ^{-1}q^{m}%
}{q-q^{-1}}\right) [m]_{q}v_{m-1}\ .  \notag
\end{eqnarray}%
Consider $M_{\mu }(u)=\mathbb{C}(q)[u,u^{-1}]\otimes _{\mathbb{C}(q)}M_{\mu
} $ then we can regard $M_{\mu }(u)$ as a $U_{q}\widehat{\mathfrak{sl}}(2)$%
-module by defining the following action of the affine generators
\begin{equation}
E_{0}\rightarrow u^{-1}\otimes F_{1},\qquad F_{0}\rightarrow u\otimes
E_{1},\qquad K_{0}^{\pm 1}\rightarrow 1\otimes K_{1}^{\mp 1}
\end{equation}%
while keeping the action of the non-affine generators unchanged, $%
X_{1}\rightarrow 1\otimes X_{1}$ with $X_{1}=E_{1},F_{1},K_{1}^{\pm 1}$. For
$\upsilon \in \mathbb{C}^{\times }$ the corresponding evaluation module is
the one obtained by taking the quotient in $\mathbb{C}(q)[u,u^{-1}]$ with
respect to the maximal ideal generated by $u-\upsilon $.

\begin{remark}
\label{reduction}\textrm{For generic values of $\mu $ this module is known
to be irreducible. If we choose $\mu =q^{d}$ for some positive integer $d$,
we have $E_{1}^{d}v_{d}=0$ and we can identify the submodule spanned by the
first $d$ vectors $v_{0},v_{1},\ldots ,v_{d-1}$ with the standard $U_{q}%
\mathfrak{sl}(2)$ module of dimension $d$ upon imposing the condition $%
F_{1}^{m}=0$ for $m\geq d$. In particular we recover the fundamental
representation for $d=2$.}
\end{remark}

\subsection{Crystal limit and phase maps}

\label{crystalverma}

We have defined the quantum algebra $U_{q}\widehat{\mathfrak{sl}}(2)$ over
the ring of rational functions in the indeterminate $q$. Naively speaking we
now wish to take the limit $q\rightarrow 0$, which is referred to as the
\textquotedblleft crystal limit\textquotedblright\ as it was originally
considered in the context of statistical mechanics models, where it
corresponds to the low temperature limit when the system \textquotedblleft
crystallizes\textquotedblright\ into a single configuration. In more
technical terms we restrict the algebra into what follows to the ring of
regular functions in $q$ and then take the quotient with respect to ideal
generated by $q$. We shall refer to this procedure as the crystal limit in
accordance with the literature; see for example \cite[Section 4.2, page 67]%
{HK} and references therein. Our main interest are the combinatorial
features which emerge in this limit.

\begin{proposition}
Denote by $\mathbb{A}\subset \mathbb{C}(q)$ the ring of functions which are
regular at $q=0$ and let $\mathcal{L}=\tbigoplus_{m\in \mathbb{Z}_{\geq 0}}%
\mathbb{A}v_{m}$. Then

\begin{itemize}
\item[(i)] $\mathcal{L}$ generates $M_{\mu }$ as a vector space over $%
\mathbb{C}(q),\;M_{\mu }=\mathbb{C}(q)\otimes _{\mathbb{A}}\mathcal{L}$.

\item[(ii)] Define $\tilde{E}_{1}=-q^{-1}K_{1}^{-1}E_{1}$ then $\tilde{E}_{1}%
\mathcal{L}\subset \mathcal{L}$ and $F_{1}\mathcal{L}\subset \mathcal{L}$.
\end{itemize}
\end{proposition}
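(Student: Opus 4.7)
The plan is to verify (i) immediately from the construction and to reduce (ii) to an explicit computation of the action of $\tilde{E}_{1}$ on the basis $\{v_{m}\}_{m\geq 0}$.

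For part (i), by construction $\mathcal{L}$ is the free $\mathbb{A}$-module with basis $\{v_{m}\}_{m\in\mathbb{Z}_{\geq 0}}$, which is simultaneously a $\mathbb{C}(q)$-basis of $M_{\mu }$ (a quantum PBW basis obtained by applying powers of $F_{1}$ to the cyclic vector $v_{0}$). The natural multiplication map $\mathbb{C}(q)\otimes_{\mathbb{A}}\mathcal{L}\to M_{\mu }$, $f\otimes v_{m}\mapsto fv_{m}$, therefore sends the $\mathbb{C}(q)$-basis $\{1\otimes v_{m}\}$ bijectively to $\{v_{m}\}$, so it is an isomorphism.

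For part (ii), the inclusion $F_{1}\mathcal{L}\subset \mathcal{L}$ is immediate from $F_{1}v_{m}=v_{m+1}$ in (\ref{QGaction}). To deal with $\tilde{E}_{1}$, I would substitute the action formulas for $E_{1}v_{m}$ together with $K_{1}^{-1}v_{m-1}=\mu^{-1}q^{2m-1}v_{m-1}$ into $\tilde{E}_{1}v_{m}=-q^{-1}K_{1}^{-1}E_{1}v_{m}$ and simplify. Using $(q-q^{-1})^{2}=(1-q^{2})^{2}/q^{2}$ to collect negative powers of $q$, the $q^{-1}$ prefactor together with the $q^{2m-1}$ from $K_{1}^{-1}$ precisely cancel the pole contributions coming from $(\mu q^{-m}-\mu^{-1}q^{m})/(q-q^{-1})$ and $[m]_{q}$. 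A direct computation yields, for $m\geq 1$,
\[
\tilde{E}_{1}v_{m}=\frac{1-q^{2m}+\mu^{-2}(q^{4m}-q^{2m})}{(1-q^{2})^{2}}\,v_{m-1}.
\]
The denominator $(1-q^{2})^{2}$ has value $1$ at $q=0$, hence is a unit in $\mathbb{A}$, and the numerator is a polynomial in $q$; the coefficient therefore lies in $\mathbb{A}$ and $\tilde{E}_{1}v_{m}\in \mathcal{L}$. The base case $m=0$ is trivial since $E_{1}v_{0}=0$.

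The only mild obstacle is the bookkeeping of $q$-powers and $\mu$-factors needed to make the cancellation of the pole at $q=0$ transparent. The renormalization $E_{1}\mapsto -q^{-1}K_{1}^{-1}E_{1}$ is precisely the Kashiwara normalization designed to produce this cancellation: the computation above also shows $\tilde{E}_{1}v_{m}\equiv v_{m-1}\pmod{q\mathcal{L}}$ for all $m\geq 1$, which foreshadows the construction of a crystal basis on $\mathcal{L}/q\mathcal{L}$ in the next section.
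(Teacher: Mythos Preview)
Your proof is correct and follows essentially the same approach as the paper: both argue (i) is immediate from the basis and reduce (ii) to an explicit computation of $\tilde{E}_{1}v_{m}$, checking that the resulting coefficient is regular at $q=0$. The only cosmetic difference is that the paper leaves the coefficient in the factored form $\bigl(\tfrac{1-\mu^{-2}q^{2m}}{1-q^{2}}\bigr)\bigl(\tfrac{q^{2m}-1}{q^{2}-1}\bigr)$ rather than expanding the numerator as you do, and additionally records the $q$-commutator $\tilde{E}_{1}F_{1}-q^{2}F_{1}\tilde{E}_{1}$ for later use.
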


\begin{proof}
The first assertion is obvious. The statements under (ii) are easily
verified from (\ref{QGaction}). For instance, we find that%
\begin{equation}
\tilde{E}_{1}v_{m}=\left[ \left( \frac{1-\mu ^{-2}q^{2m}}{1-q^{2}}\right)
\left( \frac{q^{2m}-1}{q^{2}-1}\right) \right] v_{m-1}  \label{crystalE}
\end{equation}%
where the coefficient is obviously regular at $q=0$. Similarly, we compute
the relation
\begin{equation}
\left[ \tilde{E}_{1}F_{1}-q^{2}F_{1}\tilde{E}_{1}\right] v_{m}=\left[ \frac{%
1-\mu ^{-2}q^{4m+2}}{1-q^{2}}\right] v_{m}  \label{crystalcomm}
\end{equation}%
which we will use below.
\end{proof}

In what follows we consider the crystal limit of $M_{\mu }$, that is we will
consider $\mathcal{M}=\mathcal{L}/q\mathcal{L}$ as $\mathbb{C}$-vector space
in the natural way. From (\ref{QGaction}) and (\ref{crystalE}) we infer that
the quantum algebra elements $K_{1},$ $\tilde{E}_{1}$ and $F_{1}$ then
induce the following maps $N,\varphi ,\varphi ^{\ast }:\mathcal{M}%
\rightarrow \mathcal{M}$ defined via%
\begin{equation}
Nv_{m}=mv_{m},\text{\quad }\varphi ^{\ast }v_{m}=v_{m+1}\quad \text{and\quad
}\varphi v_{m}:=\left\{
\begin{array}{cc}
0, & m=0 \\
v_{m-1}, & m>0%
\end{array}%
\right. \;.  \label{crystalNEF}
\end{equation}%
In particular, we have the following crystal limit of the $U_{q}\mathfrak{sl}%
(2)$ relations (see equations (\ref{crystalcomm})),%
\begin{equation}
\varphi \varphi ^{\ast }=1\qquad \text{and\qquad }\varphi ^{\ast }\varphi
v_{m}=\left\{
\begin{array}{cc}
0, & m=0 \\
v_{m}, & m>0%
\end{array}%
\right. \;.  \label{crystalQGrel}
\end{equation}%
Fix an integer $n>2$ and consider the tensor product $\mathcal{H}=\mathcal{M}%
^{\otimes n}$. We extend the maps (\ref{crystalNEF}) to $\mathcal{H}=%
\mathcal{M}^{\otimes n}$ by defining for $i=1,\ldots ,n$%
\begin{equation}
\varphi _{i}:=1\otimes \cdots \otimes \underset{i}{\varphi }\otimes \cdots
\otimes 1  \label{phase}
\end{equation}%
and similarly, $\varphi _{i}^{\ast }:=1\otimes \cdots \otimes \underset{i}{%
\varphi ^{\ast }}\otimes \cdots \otimes 1,\;N_{i}:=1\otimes \cdots \otimes
\underset{i}{N}\otimes \cdots \otimes 1$. We obtain the phase algebra discussed by Bogoliubov, Izergin and Kitanine in \cite{Bogoliubovetal}; see also references [5], [6] and [8] therein. The proof of the following result can be found in \cite[Prop 3.1]{KS}.

\begin{proposition}
The $\varphi _{i},\varphi _{i}^{\ast }$ and $N_{i}$ generate a subalgebra $%
\hat{\Phi}$ of $\func{End}({\mathcal{H}})$ which can be realized as the
algebra $\Phi $ with the following generators and relations for $1\leq
i,j\leq n$:
\begin{gather}
\varphi _{i}\varphi _{j}=\varphi _{j}\varphi _{i},\quad \varphi _{i}^{\ast
}\varphi _{j}^{\ast }=\varphi _{j}^{\ast }\varphi _{i}^{\ast },\quad
N_{i}N_{j}=N_{j}N_{i}  \label{comm} \\
N_{i}\varphi _{j}-\varphi _{j}N_{i}=-\delta _{ij}\varphi _{i},\quad
N_{i}\varphi _{j}^{\ast }-\varphi _{j}^{\ast }N_{i}=\delta _{ij}\varphi
_{i}^{\ast },  \label{comm2} \\
\varphi _{i}\varphi _{i}^{\ast }=1,\quad \varphi _{i}\varphi _{j}^{\ast
}=\varphi _{j}^{\ast }\varphi _{i}\;\text{ if }\;i\neq j,  \label{comm3} \\
N_{i}(1-\varphi _{i}^{\ast }\varphi _{i})=0=(1-\varphi _{i}^{\ast }\varphi
_{i})N_{i}.  \label{Npi}
\end{gather}%
If we introduce the scalar product on the vector space ${\mathcal{H}}$ by
\begin{equation*}
\langle \alpha v_{m_{1}}\otimes \cdots \otimes v_{m_{n}},\beta
v_{m_{1}^{\prime }}\otimes \cdots \otimes v_{m_{n}^{\prime }}\rangle =%
\overline{\alpha }\beta \tprod_{i=0}^{n-1}\delta _{m_{i},m_{i}^{\prime }},
\end{equation*}%
for $\alpha ,\beta \in \mathbb{C}$, then $\langle \varphi _{i}^{\ast
}v,v^{\prime }\rangle =\langle v,\varphi _{i}v^{\prime }\rangle \;$\ for any
$v,v^{\prime }\in {\mathcal{H}}$.
\end{proposition}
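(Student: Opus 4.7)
The plan is to verify the stated relations directly on the basis $\{v_{m_1}\otimes\cdots\otimes v_{m_n}\}$ of ${\mathcal H}$, then upgrade this to an algebra isomorphism by a normal-form argument, and finally check the adjoint property by matrix-element computation. Because each generator $\varphi_i,\varphi_i^{\ast},N_i$ acts as the identity on all but the $i$-th tensor factor by (\ref{phase}), every relation in (\ref{comm})–(\ref{comm3}) with $i\neq j$ is automatic: operators on disjoint factors commute. Thus only the same-index ($i=j$) relations carry content, and these reduce to a one-site check on ${\mathcal M}$.

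For the one-site relations I would read everything off (\ref{crystalNEF}) and (\ref{crystalQGrel}). Explicitly, $\varphi\varphi^{\ast}v_m=\varphi v_{m+1}=v_m$ gives $\varphi\varphi^{\ast}=1$; the identities $(N\varphi^{\ast}-\varphi^{\ast}N)v_m=(m+1)v_{m+1}-m v_{m+1}=\varphi^{\ast}v_m$ and $(N\varphi-\varphi N)v_m=(m-1)v_{m-1}-m v_{m-1}=-\varphi v_m$ (with both sides vanishing at $m=0$) yield (\ref{comm2}); and because $1-\varphi^{\ast}\varphi$ is the rank-one projector onto $\mathbb{C} v_0$ while $Nv_0=0$, both $N(1-\varphi^{\ast}\varphi)$ and $(1-\varphi^{\ast}\varphi)N$ vanish on every $v_m$, giving (\ref{Npi}). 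Together this shows $\hat{\Phi}$ is a quotient of the abstract algebra $\Phi$.

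The main obstacle is to upgrade this to an \emph{iso}morphism, i.e.\ to prove the natural surjection $\Phi\twoheadrightarrow\hat{\Phi}$ is injective. My plan is to use the relations (\ref{comm})–(\ref{Npi}) to rewrite an arbitrary word in the generators as a $\mathbb{C}$-linear combination of monomials in the normal form $\prod_{i=1}^{n}(\varphi_i^{\ast})^{a_i}p_i(N_i)\varphi_i^{b_i}$ with $a_i,b_i\in\mathbb{Z}_{\geq 0}$ and $p_i$ a polynomial (using $\varphi_i\varphi_i^{\ast}=1$ to push $\varphi_i^{\ast}$ past $\varphi_i$, using (\ref{comm2}) to move $N_i$ past the other generators, and using (\ref{Npi}) to absorb the projector $\varphi_i^{\ast}\varphi_i$ against any $N_i$), and then to verify linear independence of these normal-form monomials in $\operatorname{End}({\mathcal H})$. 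Because the tensor slots are independent this reduces to the one-site assertion that the operators $(\varphi^{\ast})^{a}p(N)\varphi^{b}$ are linearly independent on ${\mathcal M}$, which is transparent from their action $(\varphi^{\ast})^{a}p(N)\varphi^{b}v_m=p(m-b)v_{m-b+a}$ for $m\geq b$ and $0$ otherwise: distinct triples $(a,b,p)$ send some $v_m$ to distinct vectors.

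The adjoint property is then an immediate tensor-factor reduction: pulling all slots $j\neq i$ out of the inner product via the product of Kronecker deltas reduces $\langle\varphi_i^{\ast}v,v'\rangle=\langle v,\varphi_i v'\rangle$ to the one-site identity $\langle v_{m+1},v_{m'}\rangle=\delta_{m+1,m'}=\langle v_m,v_{m'-1}\rangle$, with both sides zero when $m'=0$, which follows directly from $\varphi^{\ast}v_m=v_{m+1}$ and the definition of $\varphi$ in (\ref{crystalNEF}).
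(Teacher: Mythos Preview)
The paper does not prove this proposition; it defers to \cite[Prop 3.1]{KS}. So the only question is whether your argument stands on its own. The verification of the relations and the adjoint identity are fine, but your injectivity argument has a genuine gap: the normal-form monomials $(\varphi^\ast)^a p(N)\varphi^b$ are \emph{not} linearly independent in $\operatorname{End}(\mathcal{M})$, so you cannot conclude that the surjection $\Phi\twoheadrightarrow\hat\Phi$ is injective from their spanning $\Phi$.

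Concretely, relation (\ref{Npi}) says $N=\varphi^\ast\varphi N$, and commuting $N$ past $\varphi$ gives $\varphi^\ast\varphi N=\varphi^\ast(N+1)\varphi$. Hence in $\operatorname{End}(\mathcal{M})$ one has
\[
(\varphi^\ast)^0\,N\,\varphi^0 \;-\;(\varphi^\ast)^1\,N\,\varphi^1\;-\;(\varphi^\ast)^1\,1\,\varphi^1\;=\;0,
\]
a nontrivial linear dependence among three of your normal-form monomials (triples $(0,0,N)$, $(1,1,N)$, $(1,1,1)$). Your remark ``using (\ref{Npi}) to absorb the projector $\varphi_i^\ast\varphi_i$ against any $N_i$'' is exactly what causes the problem rather than curing it: the relation does not eliminate a generator but creates dependencies among monomials with different $(a,b)$. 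Also, the sentence ``distinct triples $(a,b,p)$ send some $v_m$ to distinct vectors'' would at best show the operators are pairwise distinct, which is much weaker than linear independence.

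To repair the argument you need a sharper normal form that genuinely accounts for (\ref{Npi}). One workable route is to observe that on each diagonal $n-m=d$ an element of $\hat\Phi_1$ has entries $m\mapsto T_{m+d,m}$ that are \emph{eventually polynomial} (arbitrary finitely many initial values, then a polynomial tail); a basis adapted to this description separates the ``finite-rank'' part (spanned by $(\varphi^\ast)^a(1-\varphi^\ast\varphi)\varphi^b$) from the ``polynomial tail'' part (spanned by, say, $N^k(\varphi^\ast)^a$ and $\varphi^b N^k$ with $b\geq 1$), and one then checks that the abstract relations force exactly the same reductions. Alternatively, a Bergman Diamond Lemma argument with a carefully chosen ordering will do the job, but the resolution of the overlap coming from (\ref{Npi}) must be handled explicitly.
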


\subsection{Crystallisation of the state space}

\label{plactic}

We now decompose the tensor product $\mathcal{H}=\mathcal{M}^{\otimes n}$
into an infinite direct sum,%
\begin{equation}
\mathcal{H}=\tbigoplus_{k\in \mathbb{Z}_{\geq 0}}\mathcal{H}_{k},\qquad
\mathcal{H}_{k}=\mathbb{C}\left\{ v_{m_{1}}\otimes \cdots \otimes
v_{m_{n}}~\left\vert ~\tsum_{i=1}^{n}m_{i}=k\right. \right\} ,
\label{decomp}
\end{equation}%
where we set $\mathcal{H}_{0}=\mathbb{C}\{v_{0}\otimes \cdots \otimes
v_{0}\}\cong \mathbb{C}$ and the summation index $k$ is the
\textquotedblleft level\textquotedblright\ of the WZNW model. For notational
convenience we will often identify a basis vector $v_{m_{1}}\otimes \cdots
\otimes v_{m_{n}}$ in $\mathcal{H}$ with the composition $\boldsymbol{m}%
=(m_{1},\ldots ,m_{n})$ or, equivalently, the partition $\hat{\lambda}=(\hat{%
\lambda}_{1},\ldots ,\hat{\lambda}_{n})$ whose associated Young diagram
contains $m_{i}$ columns of height $i$. We denote the corresponding set of
such partitions by $P^{+}$ and the subset corresponding to $\mathcal{H}_{k}$
by $P_{k}^{+}$. Obviously each $\hat{\lambda}\in P_{k}^{+}$ has at most $n$
parts and $\hat{\lambda}_{1}=k$. We now wish to explain how the subspaces $%
\mathcal{H}_{k}$ can be identified with crystal graphs of the affine quantum
algebra $U_{q}\widehat{\mathfrak{sl}}(n)$.

\begin{definition}
Let $\mathcal{A}=\{a_{0},a_{1},a_{2},\ldots a_{n-1}\}$. The\ \emph{local\
affine plactic algebra} $\func{Pl}=\func{Pl}(\mathcal{A})$ is the free
algebra generated by the elements of $\mathcal{A}$ modulo the relations
\begin{eqnarray}
a_{i}a_{j}-a_{j}a_{i}=0, &&\text{ if $|i-j|\neq 1\mod n$},  \label{PL1} \\
a_{i+1}a_{i}^{2}=a_{i}a_{i+1}a_{i}, &&a_{i+1}^{2}a_{i}=a_{i+1}a_{i}a_{i+1},
\label{PL2}
\end{eqnarray}%
where in \eqref{PL2} all variables are understood as elements in $\mathcal{A}
$ by taking indices modulo n. Let $\func{Pl}_{\func{fin}}=\func{Pl}_{\func{%
fin}}(\mathcal{A}^{\prime })$ denote the \emph{local finite plactic algebra}
generated from $\mathcal{A}^{\prime }=\{a_{1},a_{2},\ldots a_{n-1}\}$;
compare with \cite{FG}.
\end{definition}

We recall the following result from \cite[Prop 5.8]{KS}:

\begin{proposition}
\label{faithfulness} There is a homomorphism of algebras $\func{Pl}_{\func{%
fin}}\rightarrow \Phi $ such that
\begin{equation}
a_{j}\mapsto \varphi _{j+1}^{\ast }\varphi _{j},\quad j=1,...,n-1,
\label{placticrep}
\end{equation}%
hence, the representation of the phase algebra $\Phi $ given by (\ref%
{crystalNEF}) and (\ref{phase}) lifts to a representation of the local
plactic algebra $\func{Pl}_{\func{fin}}$. Mapping $a_{0}=a_{n}$ to $z\varphi
_{1}^{\ast }\varphi _{n}$ it lifts in addition to a representation of $\func{%
Pl}$ on ${\mathcal{H}}[z]=\mathbb{C}(z)\otimes _{\mathbb{C}}{\mathcal{H}}$
with $z$ an indeterminate. Both representations are faithful.
\end{proposition}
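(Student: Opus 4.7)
The proof has two parts: (i) verify that the prescriptions $a_{j}\mapsto \varphi_{j+1}^{\ast}\varphi_{j}$ (and $a_{0}\mapsto z\varphi_{1}^{\ast}\varphi_{n}$) respect the plactic relations, so they define algebra homomorphisms; and (ii) establish faithfulness of the resulting representations on $\mathcal{H}$ and $\mathcal{H}[z]$. I expect (ii) to be the main obstacle.

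For (i), the commutation relations (PL1) are immediate: when $|i-j|\neq 1\bmod n$, the sets $\{i,i+1\}$ and $\{j,j+1\}$ of site indices are disjoint, so by (comm) and (comm3) the operators $\varphi_{i+1}^{\ast}\varphi_{i}$ and $\varphi_{j+1}^{\ast}\varphi_{j}$ commute. For the Knuth-type relations (PL2), a direct computation using $\varphi_{\ell}\varphi_{\ell}^{\ast}=1$ at $\ell=i+1$ together with commutation of operators at distinct sites collapses both sides of $a_{i+1}a_{i}^{2}=a_{i}a_{i+1}a_{i}$ to $\varphi_{i+2}^{\ast}\varphi_{i+1}^{\ast}\varphi_{i}^{2}$, and both sides of $a_{i+1}^{2}a_{i}=a_{i+1}a_{i}a_{i+1}$ to $(\varphi_{i+2}^{\ast})^{2}\varphi_{i+1}\varphi_{i}$. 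For the affine extension, $z$ is central and the algebraic form of $a_{0}$ is identical to that of the other generators under cyclic reindexing, so only the relations involving $a_{0}$ paired with $a_{1}$ or $a_{n-1}$ are new, and they follow from the same computation after replacing the index $n+1$ by $1$.

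For (ii), the plan is to invoke a normal-form theorem for the plactic algebras and separate the resulting representatives by their action on suitable test vectors. In the finite case, Lascoux--Sch\"utzenberger's classical theory equates plactic equivalence with column-insertion equivalence, giving a bijection between monomials modulo (PL1)--(PL2) and semistandard Young tableaux with entries in $\{1,\ldots,n\}$. I would then show that distinct tableaux yield linearly independent operators by applying the column-reading word $w_{T}$ of a tableau $T$ to a test vector of the form $v_{m}\otimes v_{0}\otimes\cdots\otimes v_{0}$ with $m$ large enough to avoid annihilation; the resulting image is a basis vector of $\mathcal{H}$ whose composition is determined by $T$. By varying the test vector (or equivalently combining images from several such starting vectors in different positions) one separates tableaux of the same content, yielding linear independence of the corresponding operators.

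For the affine algebra $\func{Pl}$, the analogous normal form is provided by \emph{cylindric} semistandard tableaux, and the formal parameter $z$ plays the crucial role of recording the winding number around the cylinder: every occurrence of $a_{0}$ in a word contributes one factor of $z$, so the $z$-grading of the image in $\func{End}(\mathcal{H})[z]$ automatically separates words with different winding numbers, and within a fixed winding number a cyclic shift of sites reduces the argument to the finite case. The genuinely delicate input is the affine Knuth-equivalence theorem, namely the existence and uniqueness of the cylindric normal form, which is precisely what is developed in \cite[Part I]{KS}; once this is in hand, the test-vector argument of the previous paragraph carries over and yields faithfulness of the representation on $\mathcal{H}[z]$.
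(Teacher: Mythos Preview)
The paper does not prove this proposition; it is recalled verbatim from \cite[Prop.~5.8]{KS} without argument. Your verification of the plactic relations in part (i) is correct and matches what any direct proof must do.

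Part (ii), however, contains a genuine gap. You invoke Lascoux--Sch\"utzenberger to assert that monomials modulo (PL1)--(PL2) are parametrised by semistandard Young tableaux, but that bijection holds for the \emph{full} plactic monoid, not for the \emph{local} plactic algebra $\func{Pl}_{\func{fin}}$ defined here. Relation (PL1) forces $a_{i}a_{j}=a_{j}a_{i}$ whenever $|i-j|>1$, and this is \emph{not} a consequence of the Knuth relations: for $n\ge 4$ the words $a_{1}a_{3}$ and $a_{3}a_{1}$ coincide in $\func{Pl}_{\func{fin}}$ yet insert to distinct tableaux (of shapes $(2)$ and $(1,1)$ respectively). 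In the other direction one checks case by case that every Knuth relation on the alphabet $\{1,\ldots,n-1\}$ does follow from (PL1)--(PL2), so $\func{Pl}_{\func{fin}}$ is a \emph{proper quotient} of the plactic monoid algebra and has strictly fewer monomial classes than there are tableaux. Your separation-by-test-vectors argument therefore starts from the wrong set of representatives and cannot establish faithfulness as written. What is actually needed is a normal-form (or basis) theorem tailored to the local relations, and then its affine analogue; you are right that this is where the real work lies and that it is supplied by \cite{KS}, but it is not the classical RSK correspondence you describe.
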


As mentioned in \cite[Remark 5.9]{KS} the subspace $\mathcal{H}_{k}\subset
\mathcal{H}=\mathcal{M}^{\otimes n}$ together with the action (\ref%
{placticrep}) of the local affine plactic algebra can be identified with the
crystal graph of the $k^{\text{th}}$-symmetric tensor representation of the
vector representation of $U_{q}\widehat{\mathfrak{sl}}(n)$ \cite{JMMO}. In the literature this crystal graph is also known as the affinization of the Kirillov-Reshetikhin crystal graph $\mathfrak{B}_{1,k}$ of type $A$; see e.g. \cite[Section 3]{Okado}. We discuss an
explicit example below.

\begin{definition}
The quantum universal enveloping algebra $U_{q}\widehat{\mathfrak{sl}}(n),$ $%
n>2$ is the associative unital $\mathbb{C}(q)$-algebra generated by $%
\{E_{i},F_{i},K_{i}^{\pm 1}\}_{i=0}^{n-1}$ subject to the analogous
identities as in (\ref{AQG}), (\ref{CS}) but with respect to the $\widehat{%
\mathfrak{sl}}(n)$ Cartan matrix,
\begin{equation*}
A=\left(
\begin{smallmatrix}
2 & -1 & 0 & \cdots & 0 & -1 \\
-1 & 2 & \ddots &  &  & 0 \\
0 & \ddots & \ddots &  &  & \vdots \\
\vdots &  &  &  &  & 0 \\
0 &  &  &  & 2 & -1 \\
-1 & 0 & \cdots & 0 & -1 & 2%
\end{smallmatrix}%
\right) \ .
\end{equation*}
\end{definition}

We recall that the vector representation $V=\mathbb{C}\{v_{1},...,v_{n}\}$
associated with the fundamental weight $\omega _{1}$ is given by%
\begin{equation}
E_{i}v_{r}=\delta _{i,r-1}v_{r-1},\qquad F_{i}v_{r}=\delta
_{r,i}v_{r+1},\qquad K_{i}v_{r}=q^{\delta _{i,r}-\delta _{i,r-1}}v_{r}\;.
\end{equation}%
As in the case of $n=2$ this $U_{q}\mathfrak{sl}(n)$-module can be turned
into an evaluation module $V(z)$ for any $z\in \mathbb{C}^{\times }$ by
setting%
\begin{equation}
E_{0}v_{r}=z^{-1}\delta _{r,1}v_{n},\qquad F_{0}v_{r}=z~\delta
_{r,n}v_{1},\qquad K_{0}v_{r}=q^{\delta _{r,n}-\delta _{r,1}}v_{r}\;.
\end{equation}

\begin{figure}[tbp]
\begin{equation*}
\includegraphics[scale=0.35]{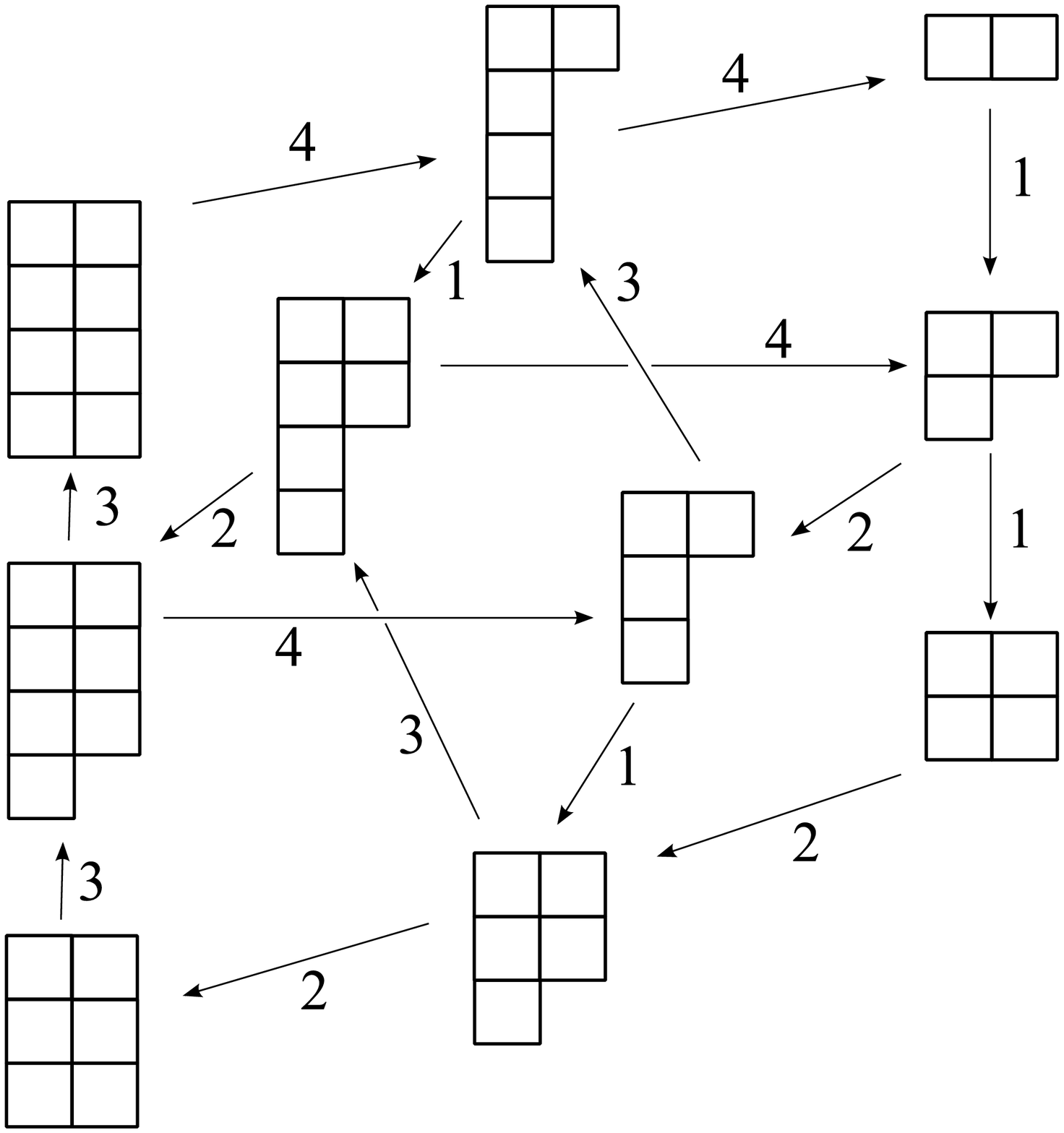}
\end{equation*}%
\caption{The crystal graph for $n=4,k=2$ and $z=1$. The vertices are the elements in $%
P_{k}^{+}$. Two vertices $\hat{\protect\lambda},\hat{\protect\mu}$ are
connected via an edge of colour $i$, if $\hat{\protect\mu}=a_{i}\hat{\protect%
\lambda}$.}
\label{fig:crystal}
\end{figure}

Consider the tensor product $V^{\otimes k}\cong V(zq^{-k+1})\otimes
V(zq^{-k+3})\otimes \cdots \otimes V(zq^{k-1})$ and let $V_{n,k}$ denote the
subspace invariant under the natural action of the Hecke algebra on $%
V^{\otimes k}$; see e.g. \cite[Chapter 10.2]{CP} and references therein.
There is a distinguished basis $\mathcal{B}_{n,k}=\{v_{\hat{\lambda}}\}_{%
\hat{\lambda}\in P_{k}^{+}}\subset V_{n,k}$ such that the pair $(\mathcal{L}%
_{n,k},\mathcal{B}_{n,k})$ with $\mathcal{L}_{n,k}=\tbigoplus_{\hat{\lambda}%
\in P_{k}^{+}}\mathbb{A}v_{\hat{\lambda}}$ forms a crystal basis of $V_{n,k}$
and the plactic generators $a_{i}^{\ast },a_{i}$ coincide with Kashiwara's
crystal operators $\tilde{E}_{i},\tilde{F}_{i}$, respectively. (We refer the
reader to \cite{HK} for an explanation of these terms.) For instance, let $%
\hat{\lambda}\in P_{k}^{+}$ be a strict partition, all parts are mutually
distinct, and denote by $\hat{\lambda}^{t}$ its transpose. Then the
corresponding basis vector in $V^{\otimes k}$ is given by%
\begin{equation*}
v_{\hat{\lambda}}=\frac{1}{[k]_{q}!}\sum_{\sigma \in S_{k}}q^{-\ell (\sigma
)}v_{\hat{\lambda}_{\sigma (k)}^{t}}\otimes \cdots \otimes v_{\hat{\lambda}%
_{\sigma (1)}^{t}}\;.
\end{equation*}%
In particular, $\mathcal{B}_{n,k}\cong P_{k}^{+}$ are identical as sets and
under the action of the local affine plactic algebra $P_{k}^{+}$ can be
viewed as a coloured, oriented graph (called the crystal graph of $V_{n,k}$%
). The vertices of this graph are the elements in $P_{k}^{+}$ and the edges
of colour $i$, $\hat{\lambda}\overset{i}{\longrightarrow }\hat{\mu}$, are
given by the relation $\hat{\mu}=a_{i}\hat{\lambda}$, where $a_{i}$ adds a
box in the $(i+1)^{\text{th}}$ row (if allowed) for $i=1,\ldots ,n-1$. The
letter $a_{n}$ removes a column of height $n$ and adds a box in the first
row if possible.

\begin{example}
\textrm{Let $n=4,\;k=2$ and for simplicity set $z=1$. Then{\small
\begin{equation*}
\Yvcentermath1P_{k}^{+}=\left\{ \mathbb{~}\yng(2)~,\;\yng(2,1)~,\;\yng%
(2,2)~,\;\yng(2,1,1)~,\;\yng(2,2,1)~,\;\yng(2,2,2)~,\;\yng(2,1,1,1)~,\;\yng%
(2,2,1,1)~,\;\yng(2,2,2,1),\;\yng(2,2,2,2)~\right\}
\end{equation*}%
}The corresponding basis vectors in $V^{\otimes k}$ for the first 5 elements
are{\normalsize \
\begin{equation*}
v_{1}\otimes v_{1},~\tfrac{v_{2}\otimes v_{1}-q^{-1}v_{1}\otimes v_{2}}{%
[2]_{q}},~v_{2}\otimes v_{2},~\tfrac{v_{3}\otimes v_{1}-q^{-1}v_{1}\otimes
v_{3}}{[2]_{q}},~\tfrac{v_{3}\otimes v_{2}-q^{-1}v_{2}\otimes v_{3}}{[2]_{q}}%
,\ldots
\end{equation*}%
} et cetera. The crystal graph resulting from the action of the local affine
plactic algebra on $P_{k}^{+}$ is depicted in Figure \ref{fig:crystal} for $z=1$. }
\end{example}

Note the difference in role played by the local affine plactic algebra and
the phase algebra. While the plactic algebra $\func{Pl}(\mathcal{A})$
preserves the level $k$ and describes the crystal structure of $\mathcal{H}%
_{k}$ with respect to the quantum affine algebra $U_{q}\widehat{\mathfrak{sl}%
}(n)$, the phase algebra $\hat{\Phi}$ increases and decreases the level $k$,
where the maps $\varphi _{i},\varphi _{i}^{\ast }:\mathcal{H}_{k}\rightarrow
\mathcal{H}_{k\mp 1}$ correspond to the crystal limit of the $U_{q}\widehat{%
\mathfrak{sl}}(2)$ generators. They generate a \textquotedblleft
tower\textquotedblright\ of crystals, the simplest example, $n=3$, is
depicted in Figure \ref{fig:tower}.

\begin{figure}[tbp]
\begin{equation*}
\includegraphics[scale=0.55]{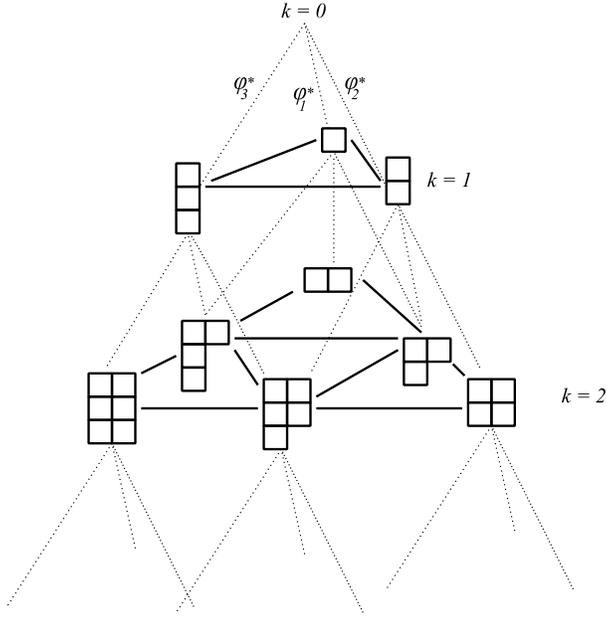}
\end{equation*}%
\caption{The tower of crystal graphs generated by the phase algebra for $n=3$
and $k=0,1,2$. The crystals for $k>1$ all consist of triangles which are
subdivided into smaller triangles similar as depicted for $k=2$. The dotted
lines correspond to the action of the phase algebra generators $\protect%
\varphi _{i}^{\ast }$ and the solid lines to the action of the plactic
algebra $\func{Pl}(\mathcal{A)}$.}
\label{fig:tower}
\end{figure}

\section{R-matrices in the crystal limit}

We are now going to define an integrable vertex model on $\mathcal{H}=%
\mathcal{M}^{\otimes n}$ along the same lines as the models discussed in
\cite{Baxter}. In the appendix it is shown that it arises from taking the
crystal limit of the $U_{q}\widehat{\mathfrak{sl}}(2)$ intertwiner,
generically called $R$-matrix, for the tensor product $M_{\mu }(u)\otimes
M_{\nu }(v)$ of the module discussed above.

\subsection{Definition of the vertex model}

\label{vertexdef}

Consider a $n\times n^{\prime }$ square lattice with periodic boundary
conditions in the horizontal direction. On the edges of the square lattice
live statistical variables $m\in \mathbb{Z}_{\mathbb{\geq }0}$, which we
identify with the basis vectors $\{v_{m}\}_{m\in \mathbb{Z}_{\mathbb{\geq }%
0}}$ in $\mathcal{M}$. Then a row configuration in the lattice, i.e. an
assignment of statistical variables $\boldsymbol{m}=(m_{1},\ldots ,m_{n})$
along one row of vertical edges, fixes a vector $v_{m_{1}}\otimes \cdots
\otimes v_{m_{n}}\in \mathcal{H}=\mathcal{M}^{\otimes n}$. Similarly, a
fixed lattice configuration of the entire lattice can be seen as a vector in
$\mathcal{H}^{\otimes n^{\prime }}$. Not each lattice configuration is
allowed, we single out particular ones by assigning to each local
configuration around a single vertex a \textquotedblleft Boltzmann
weight\textquotedblright\ (a pseudo-probability)
\begin{equation}
\mathcal{R}_{c,d}^{a,b}(u)=\left\{
\begin{array}{cc}
u^{a}, & d=a+b-c,\;b\geq c \\
0, & \text{else}%
\end{array}%
\right. \;,  \label{Boltzmannweight}
\end{equation}%
where $a,b,c,d\in \mathbb{Z}_{\mathbb{\geq }0}$ are the statistical
variables; see Figure \ref{fig:boltzmann}.

\begin{figure}[tbp]
\begin{equation*}
\includegraphics[scale=0.3]{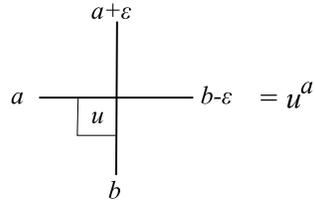}
\end{equation*}%
\caption{Graphical depiction of a vertex configuration with Boltzmann weight
(\protect\ref{Boltzmannweight}). The statistical variables are constrained
by $a,b,\protect\varepsilon=d-a,c=b-\protect\varepsilon\in\mathbb{Z}_{\geq
0} $.}
\label{fig:boltzmann}
\end{figure}

The variable $u$ is called the spectral parameter and we define $\mathcal{M}%
(u)=\mathbb{C}[u,u^{-1}]\otimes _{\mathbb{C}}\mathcal{M}$. The Boltzmann
weights define an operator $\mathcal{R}(u/v):\mathcal{M}(u)\otimes \mathcal{M%
}(v)\rightarrow \mathcal{M}(u)\otimes \mathcal{M}(v)$ via the relation
\begin{equation}
\mathcal{R}(u)~v_{a}\otimes v_{b}=\sum_{c,d\geq 0}\mathcal{R}%
_{c,d}^{a,b}(u)~v_{c}\otimes v_{d},  \label{Rmatrixdef}
\end{equation}%
which we can express in terms of the phase algebra generators (\ref%
{crystalNEF}): let $\mathcal{P}(v_{a}\otimes v_{b})=v_{b}\otimes v_{a}$ be
the flip operator then%
\begin{equation}
\mathcal{R}(u)=\mathcal{P}\left[ \tsum_{\alpha \in \mathbb{Z}_{\geq
0}}(\varphi ^{\ast })^{\alpha }\otimes \varphi ^{\alpha }\right]
(u^{N}\otimes 1)\;.  \label{crystalR}
\end{equation}%
Despite the infinite sum this operator is well-defined, since when acting on
an arbitrary vector $v_{a}\otimes v_{b}\in \mathcal{M}(u)\otimes \mathcal{M}%
(v)$ only a finite number of terms in the sum are nonzero.

Following the standard procedure \cite{Baxter} we now employ the $\mathcal{R}
$-matrix to define the discrete evolution operator, the row-to-row transfer
matrix, of our statistical mechanics model. Its matrix elements are obtained
by fixing two sets $\boldsymbol{m},\boldsymbol{m}^{\prime }\in \mathcal{H}=%
\mathcal{M}^{\otimes n}$ of statistical variables along the incoming and
outgoing vertical edges of a lattice row and summing over those variables
which sit at the horizontal edges; see Figure \ref{fig:rowconfig} for an
allowed row configuration. As an operator the transfer matrix is given by%
\begin{equation}
Q(u)=\limfunc{Tr}_{0}\left[ z^{N\otimes 1}\mathcal{R}_{0n}(u)\cdots \mathcal{%
R}_{01}(u)\right] \in \limfunc{End}\mathcal{H}[u,z],\;  \label{Qmatrix}
\end{equation}%
where $\mathcal{H}[u,z]:=\mathbb{C}(u,z)\otimes _{\mathbb{C}}\mathcal{H}$
and $z$ is an indeterminate (as we will see below the same as the one in
Proposition \ref{faithfulness}). The lower indices $i=1,\ldots ,n$ refer to
the $n$ vertical edges in one row (the $n$ factors of $\mathcal{M}$ in $%
\mathcal{H}$) and the index $0$ belongs to the horizontal edges over which
the sum is taken (the copy of $\mathcal{M}$ over which the trace is
computed). The trace together with the additional operator $z^{N\otimes 1}$
enforces quasi-periodic boundary conditions in the horizontal lattice
direction.

\begin{figure}[tbp]
\begin{equation*}
\includegraphics[scale=0.3]{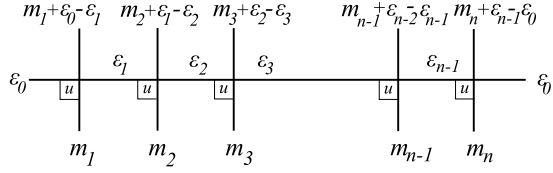}
\end{equation*}%
\caption{The allowed row configurations of the vertex model (\protect\ref%
{Boltzmannweight}) with $\protect\varepsilon_i,m_i,m_i-\protect\varepsilon%
_i\in\mathbb{Z}_{\geq 0}$. Due to the periodic boundary conditions $\protect%
\varepsilon_n =\protect\varepsilon_0$.}
\label{fig:rowconfig}
\end{figure}

While the trace is taken over an infinite-dimensional vector space, $%
\mathcal{M}$, for any pair of configurations $\boldsymbol{m},\boldsymbol{m}%
^{\prime }\in \mathcal{H}=\mathcal{M}^{\otimes n}$ only a finite number of
the terms making up the matrix element $\langle \boldsymbol{m},Q(u)%
\boldsymbol{m}^{\prime }\rangle $ is non-zero. The operator $Q$ is therefore
well-defined. We will show this in the proof of Proposition \ref{main} below
by explicitly computing the matrix elements and showing that $Q(u)$ should
be understood as formal power series in $u$ with operator valued
coefficients. First we wish to show integrability of our vertex model, i.e.
that the operator $Q$ commutes with itself for any pair of spectral
parameters.

\begin{proposition}
Let $\mathcal{R}$ be the operator (\ref{crystalR}) and define $\mathcal{S}%
(u/v)\in \limfunc{End}[\mathcal{M}(u)\otimes \mathcal{M}(v)]$ by setting%
\begin{equation}
\mathcal{S}(u)=(1-u)\mathcal{R}(u)+\mathcal{P}(u^{N+1}\otimes 1)\;.
\label{crystalS}
\end{equation}%
Then we have the identity%
\begin{equation}
\mathcal{S}_{12}(u)\mathcal{R}_{13}(uv)\mathcal{R}_{23}(v)=\mathcal{R}%
_{23}(v)\mathcal{R}_{13}(uv)\mathcal{S}_{12}(u),  \label{YBE}
\end{equation}%
where the lower indices indicate in which factor of the tensor product $%
\mathcal{M}(u)\otimes \mathcal{M}(uv)\otimes \mathcal{M}(v)$ the respective
operators act non-trivially. Moreover, $\mathcal{S}$ is invertible, $%
\mathcal{S}^{-1}(u)=\mathcal{PS}(u^{-1})\mathcal{P}$.
\end{proposition}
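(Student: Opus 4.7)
My plan is to verify \eqref{YBE} by direct calculation on basis vectors of $\mathcal{M}^{\otimes 3}$. The starting point is the explicit action
\[
\mathcal{R}(u)\,v_a\otimes v_b \;=\; u^a\sum_{c'=0}^{b} v_{c'}\otimes v_{a+b-c'},
\]
which follows from \eqref{crystalR}, and, consequently,
\[
\mathcal{S}(u)\,v_a\otimes v_b \;=\; u^a\,v_b\otimes v_a \;+\; (1-u)u^a\sum_{c'=0}^{b-1} v_{c'}\otimes v_{a+b-c'},
\]
since the $c'=b$ term of $(1-u)\mathcal{R}(u)\,v_a\otimes v_b$ combines with $\mathcal{P}(u^{N+1}\otimes 1)\,v_a\otimes v_b = u^{a+1}v_b\otimes v_a$ to leave the clean coefficient $u^a$ in front of $v_b\otimes v_a$. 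Both $\mathcal{R}$ and $\mathcal{S}$ conserve the total occupation number, so each side of \eqref{YBE} applied to $v_a\otimes v_b\otimes v_c$ is supported on basis vectors $v_\alpha\otimes v_\beta\otimes v_\gamma$ with $\alpha+\beta+\gamma=a+b+c$, and the identity reduces to a coefficient comparison in this finite-dimensional space.

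The key step is to expand $\mathcal{R}_{13}(uv)\mathcal{R}_{23}(v)$ and $\mathcal{R}_{23}(v)\mathcal{R}_{13}(uv)$ into double sums and then apply $\mathcal{S}_{12}(u)$, splitting each side into a ``swap'' piece (from the $u^{a'}v_{b'}\otimes v_{a'}$ summand of $\mathcal{S}$) and a ``descent'' piece (from the $(1-u)u^{a'}\sum_{c'<b'}$ summand). The four resulting contributions match across the two sides via the telescoping identity
\[
u^m+(1-u)\sum_{k=1}^{m-1}u^k \;=\; u,
\]
which collapses the $(1-u)$-prefactors on one side against the widened summation ranges on the other.

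A conceptually cleaner alternative, which parallels the construction in the appendix, is to lift \eqref{YBE} to a $q$-deformed statement: $\mathcal{R}(u)$ is identified there as the crystal limit of the $U_q\widehat{\mathfrak{sl}}(2)$-intertwiner for $M_\mu(u)\otimes M_\nu(v)$, and $\mathcal{S}(u)$ should analogously arise as the crystal limit of a mixed intertwiner involving the finite-dimensional quotient of Remark \ref{reduction} (specifically the two-dimensional fundamental representation). The $q$-deformed Yang--Baxter equation then holds by the general quasi-triangular Hopf structure of $U_q\widehat{\mathfrak{sl}}(2)$, and since all relevant matrix entries are regular at $q=0$, the identity descends to the crystal limit without further work.

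For the invertibility $\mathcal{S}^{-1}(u)=\mathcal{PS}(u^{-1})\mathcal{P}$ I would compute $\mathcal{S}(u)\mathcal{PS}(u^{-1})\mathcal{P}\,v_a\otimes v_b$ directly: the first factor produces $u^{-b}v_b\otimes v_a$ plus a $(1-u^{-1})$-correction supported on $v_{a+b-c'}\otimes v_{c'}$ with $c'<a$, and a second application of $\mathcal{S}(u)$ generates diagonal and off-diagonal contributions that collapse through the same telescoping identity together with the elementary cancellation $(1-u)+(u-1)=0$, leaving only $v_a\otimes v_b$. The \emph{main obstacle} throughout is the bookkeeping: each side of \eqref{YBE} becomes a triple sum whose inner ranges depend on the intermediate indices, and while each individual matching is elementary, aligning the four pairwise swap/descent comparisons and executing the telescoping cancellations uniformly in $(a,b,c)$ is the technical heart of the proof.
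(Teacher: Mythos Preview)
Your main approach---computing matrix elements of both sides of \eqref{YBE} on a basis vector $v_a\otimes v_b\otimes v_c$ and reducing via geometric/telescoping sums---is exactly what the paper does. The paper streamlines the bookkeeping slightly by passing to $\hat{\mathcal{R}}=\mathcal{P}\mathcal{R}$ and $\hat{\mathcal{S}}=\mathcal{P}\mathcal{S}$, which removes the flip operators from the triple product and turns \eqref{YBE} into a braid-type relation; it then writes down closed expressions for $\langle d,e,f|\,\cdots\,|a,b,c\rangle$ on each side (each a single sum plus one extra term) and checks equality by splitting into the two cases $b\geq d-a$ and $b<d-a$. Your ``swap/descent'' decomposition and the telescoping identity $u^m+(1-u)\sum_{k=1}^{m-1}u^k=u$ are precisely the mechanisms behind that case check. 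Likewise, your invertibility argument matches the paper's: both restrict to the finite-dimensional subspace of fixed $a+b$ and compute directly.

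One correction to your alternative conceptual route: $\mathcal{S}(u)$ does \emph{not} arise from a mixed intertwiner involving the two-dimensional fundamental representation. According to the appendix, $\mathcal{S}$ is the crystal limit of the intertwiner for $M_\mu(u)\otimes M_\mu(v)$ with $\mu=\nu=q/s$, i.e.\ both factors are the (infinite-dimensional) Verma module. The two-dimensional quotient of Remark~\ref{reduction} is what produces $L$ and $L'$, not $\mathcal{S}$. The general quasi-triangular argument you sketch is still valid in spirit---\eqref{YBE} does descend from the $U_q\widehat{\mathfrak{sl}}(2)$ Yang--Baxter equation---but the identification of the representations needs to be fixed.
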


\begin{proof}
Because of the explicit appearance of the flip operator $\mathcal{P}$ in $%
\mathcal{R},\mathcal{S}$ it is convenient to work with $\mathcal{\hat{R}}=%
\mathcal{PR}$ and $\mathcal{\hat{S}}=\mathcal{P\hat{S}}$. Then the
Yang-Baxter equation is rewritten as follows%
\begin{equation*}
\lbrack 1\otimes \mathcal{\hat{S}}(u)][\mathcal{\hat{R}}(uv)\otimes
1][1\otimes \mathcal{\hat{R}}(v)]=[\mathcal{\hat{R}}(v)\otimes 1][1\otimes
\mathcal{\hat{R}}(uv)][\mathcal{\hat{S}}(u)\otimes 1]\;.
\end{equation*}%
We now compute the corresponding identity in terms of matrix elements by
evaluating the identity on the vector $|a,b,c\rangle \equiv v_{a}\otimes
v_{b}\otimes v_{c}$ and then multiplying from the left with the dual vector $%
\langle d,e,f|=v_{d}^{\ast }\otimes v_{e}^{\ast }\otimes v_{f}^{\ast }$.
Both sides of the identity vanish (and it therefore holds trivially true)
unless $c-f,$~$d-a\geq 0$ and $a+b+c=d+e+f$ according to (\ref%
{Boltzmannweight}). Provided these conditions are satisfied, the computation
of the left hand side yields,%
\begin{multline*}
\langle d,e,f|[1\otimes \mathcal{\hat{S}}(u)][\mathcal{\hat{R}}(uv)\otimes
1][1\otimes \mathcal{\hat{R}}(v)]|a,b,c\rangle = \\
(uv)^{a+b}(1-u)\sum_{i=\max (0,d-a-b)}^{c-f}u^{i-d+a}+(uv)^{a+b}u^{1+e-b},
\end{multline*}%
while from the right hand side we obtain%
\begin{multline*}
\langle d,e,f|[\mathcal{\hat{R}}(v)\otimes 1][1\otimes \mathcal{\hat{R}}%
(uv)][\mathcal{\hat{S}}(u)\otimes 1]|a,b,c\rangle = \\
(uv)^{a+b}(1-u)\sum_{i=0}^{\min (b,d-a)}u^{-i}+(uv)^{a+b}u\;.
\end{multline*}%
Distinguishing the two cases $b\geq d-a$ and $b<d-a$ we now verify that the
identity is true.

The formula for the inverse is verified in a similar manner, by acting with $%
\mathcal{S}$ on the subspace spanned by the vectors $v_{a}\otimes v_{b}$
with $a+b$ fixed. This subspace is finite-dimensional and the above formula
can be explicitly computed.
\end{proof}

\begin{corollary}[Integrability]
\label{integrability} The transfer matrix $Q$ of our vertex model commutes
for any pair $(u,v)$ of spectral parameters, $[Q(u),Q(v)]=0$, and, hence,
the model is integrable.
\end{corollary}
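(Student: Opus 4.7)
The plan is to run the standard railroad (``train'') argument of the quantum inverse scattering method, using the Yang--Baxter-type relation of the preceding proposition as the local input and the invertibility of $\mathcal{S}$ to cancel conjugating factors under the trace.

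First, I would introduce two distinguished copies of the auxiliary space $\mathcal{M}$, labelled $1$ and $2$, and form the monodromy matrices
\begin{equation*}
T_{a}(u)=\mathcal{R}_{an}(u)\,\mathcal{R}_{a(n-1)}(u)\cdots \mathcal{R}_{a1}(u),\qquad a=1,2,
\end{equation*}
so that $Q(u)=\operatorname{Tr}_{a}[z^{N_{a}}T_{a}(u)]$. Iterating the local Yang--Baxter equation \eqref{YBE} site by site (i.e.\ pulling $\mathcal{S}_{12}(u_{1}/u_{2})$ through $\mathcal{R}_{1i}(u_{1})\mathcal{R}_{2i}(u_{2})$ for $i=n,n-1,\dots ,1$) promotes \eqref{YBE} to the global intertwining relation
\begin{equation*}
\mathcal{S}_{12}(u_{1}/u_{2})\,T_{1}(u_{1})\,T_{2}(u_{2}) \;=\; T_{2}(u_{2})\,T_{1}(u_{1})\,\mathcal{S}_{12}(u_{1}/u_{2}).
\end{equation*}
This is the usual inductive step: one pushes $\mathcal{S}_{12}$ rightward through the pairs of $\mathcal{R}$-factors at each site, the hypotheses of \eqref{YBE} being exactly what is needed at each bracket.

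Second, I would verify that $\mathcal{S}_{12}(u_{1}/u_{2})$ commutes with the twist $z^{N_{1}}z^{N_{2}}$. Inspection of the Boltzmann weights \eqref{Boltzmannweight} shows that $\mathcal{R}$ enforces $a+b=c+d$, so $\mathcal{R}_{12}$ preserves the total auxiliary charge $N_{1}+N_{2}$; the flip $\mathcal{P}$ obviously does so as well, and therefore the same holds for $\mathcal{S}_{12}$ by \eqref{crystalS}. Hence $[\mathcal{S}_{12}(u_{1}/u_{2}),\,z^{N_{1}}z^{N_{2}}]=0$.

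Third, I would multiply the global relation on the left by $\mathcal{S}_{12}^{-1}$ (which exists by the proposition) and by $z^{N_{1}}z^{N_{2}}$, then take the trace over both auxiliary spaces:
\begin{equation*}
Q(u_{1})\,Q(u_{2}) \;=\; \operatorname{Tr}_{12}\!\bigl[z^{N_{1}}z^{N_{2}}\,T_{1}(u_{1})\,T_{2}(u_{2})\bigr] \;=\; \operatorname{Tr}_{12}\!\bigl[z^{N_{1}}z^{N_{2}}\,\mathcal{S}_{12}^{-1}\,T_{2}(u_{2})\,T_{1}(u_{1})\,\mathcal{S}_{12}\bigr].
\end{equation*}
Using the commutation of $z^{N_{1}}z^{N_{2}}$ with $\mathcal{S}_{12}^{-1}$ established in step two, then cyclicity of the trace, the $\mathcal{S}$-factors cancel and the right-hand side collapses to $Q(u_{2})Q(u_{1})$, setting $u=u_{1}$, $v=u_{2}$.

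The only genuine subtlety is that the auxiliary spaces are infinite-dimensional, so cyclicity of the trace is not automatic. However, as noted in the discussion following \eqref{Qmatrix}, each matrix element $\langle \boldsymbol{m}|Q(u)|\boldsymbol{m}'\rangle$ is a \emph{finite} sum of monomials in $u$ (the charge conservation on each row bounds the horizontal occupation numbers in terms of $\boldsymbol{m},\boldsymbol{m}'$); the same finiteness holds for the double trace above, so cyclicity reduces to cyclicity of finite matrix traces in each weight block of $N_{1}+N_{2}$. This finite-dimensional truncation is the one point requiring care, but it is dispatched by the charge-conservation observation already used in step two.
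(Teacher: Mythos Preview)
Your argument is correct and follows exactly the standard train argument that the paper sketches in one sentence: insert $\mathcal{S}\mathcal{S}^{-1}$ under the double auxiliary trace and use \eqref{YBE} together with invertibility of $\mathcal{S}$. You are in fact more careful than the paper, since you explicitly verify that $\mathcal{S}_{12}$ commutes with the twist $z^{N_{1}+N_{2}}$ via charge conservation and address the finiteness issue for cyclicity of the infinite-dimensional trace, both of which the paper leaves implicit.
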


\begin{proof}
After writing out the product $Q(u)Q(v)$ as a trace over $\mathcal{R}$-matrices
according to the definition (\ref{Qmatrix}) and inserting the identity $1=%
\mathcal{S}(u)\mathcal{S}^{-1}(u)$ under the trace the assertion follows
from the Yang-Baxter equation (\ref{YBE}).
\end{proof}

\subsection{Relation with the phase model}

To conclude this section we explain how the vertex model (\ref%
{Boltzmannweight}) is related to the phase model of Bogoliubov, Izergin and Kitanine \cite%
{Bogoliubovetal}. The phase model was used in \cite[Part I]{KS} to give a
combinatorial description of the $\widehat{\mathfrak{sl}}(n)_{k}$ WZNW
fusion ring. As we will show in the next sections we arrive at the same
description of the fusion ring via the transfer matrix (\ref{Qmatrix}) which
we obtained from the $U_{q}\widehat{\mathfrak{sl}}(2)$-vertex model in the
crystal limit. We are therefore lead to investigate the relation between our
vertex model and the phase model. We recall that the $L$-operator of the
phase model is given by (cf. \cite{Bogoliubovetal}, \cite[Sections 3.2 and 4]%
{KS})%
\begin{equation}
L(u)=\sigma ^{+}\sigma ^{-}\otimes 1+\sigma ^{+}\otimes \varphi +u~\sigma
^{-}\otimes \varphi ^{\ast }+u~\sigma ^{-}\sigma ^{+}\otimes 1\;,
\label{phaseL}
\end{equation}%
where %
$\sigma ^{+}=\left(
\begin{smallmatrix}
0 & 1 \\
0 & 0%
\end{smallmatrix}%
\right) $ and 
$\sigma ^{-}=\left(
\begin{smallmatrix}
0 & 0 \\
1 & 0%
\end{smallmatrix}%
\right) $ 
are the Pauli matrices acting in $\mathbb{C}(u)^{2}$. The transfer matrix
for the phase model then reads in analogy with (\ref{Qmatrix}),%
\begin{equation}
T(u)=\limfunc{Tr}_{0}z^{\sigma ^{3}\otimes 1}L_{0n}(u)\cdots L_{01}(u)\in
\limfunc{End}\mathcal{H}[u,z],  \label{Tmatrix}
\end{equation}%
where $\sigma ^{3}=\sigma ^{-}\sigma ^{+}$ and the
so-called auxiliary space indexed by \textquotedblleft 0\textquotedblright ,
over which the trace is taken, is now $\mathbb{C}^{2}(u)$. The following
proposition is obtained from a straightforward computation.

\begin{proposition}
Define the following element in $\limfunc{End}(\mathbb{C}(u)^{2}\otimes
\mathcal{M})$,%
\begin{equation}
L^{\prime }(u)=L(u)+u~\sigma ^{+}\sigma ^{-}\otimes (1-\varphi ^{\ast
}\varphi ),  \label{phaseL'}
\end{equation}%
then we have in $\limfunc{End}[\mathbb{C}(u)^{2}\otimes \mathcal{M}%
(v)\otimes \mathcal{M}]$ the identity%
\begin{equation}
L_{12}^{\prime }(u/v)L_{13}(u)\mathcal{R}_{23}(v)=\mathcal{R}%
_{23}(v)L_{13}(u)L_{12}^{\prime }(u/v)~.  \label{phaseYBE}
\end{equation}
\end{proposition}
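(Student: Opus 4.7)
The plan is to verify the identity (\ref{phaseYBE}) directly, by expanding both sides as $2\times 2$ matrices in the two-dimensional Pauli auxiliary space 1, with entries in $\operatorname{End}[\mathcal{M}(v)\otimes \mathcal{M}]$, and checking the four resulting intertwining relations. Writing the $L$-operator as
\begin{equation*}
L(u) = \begin{pmatrix} 1 & \varphi \\ u\varphi^{\ast} & u \end{pmatrix},\qquad L^{\prime}(u) = \begin{pmatrix} 1 + u(1-\varphi^{\ast}\varphi) & \varphi \\ u\varphi^{\ast} & u \end{pmatrix},
\end{equation*}
where the entries are elements of the phase algebra $\Phi$, I first compute $A := L_{12}^{\prime}(u/v)L_{13}(u)$ and $B := L_{13}(u)L_{12}^{\prime}(u/v)$ by ordinary matrix multiplication. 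Since the phase operators labelled $2$ and $3$ act on different tensor factors, they commute and the entries $A_{\alpha\beta}$, $B_{\alpha\beta}$ become polynomials in $\varphi_2,\varphi_2^{\ast},\varphi_3,\varphi_3^{\ast}$ and the vacuum projector $\Pi_0^{(2)} := 1-\varphi_2^{\ast}\varphi_2$.

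The task then reduces to checking, for each $(\alpha,\beta)\in\{1,2\}^2$, the operator identity $A_{\alpha\beta}\mathcal{R}_{23}(v) = \mathcal{R}_{23}(v) B_{\alpha\beta}$ in $\operatorname{End}[\mathcal{M}(v)\otimes \mathcal{M}]$. Using the explicit form (\ref{crystalR}) of $\mathcal{R}_{23}(v)$, I would evaluate each side on a basis vector $v_a\otimes v_b$ and match the (finite) expansion in the output basis, using only the defining Boltzmann weights (\ref{Boltzmannweight}) and the phase-algebra relations (\ref{comm})--(\ref{Npi}). The off-diagonal entries ($\alpha\neq\beta$) and the $(2,2)$-entry reduce in this way to the two core commutation rules
\begin{equation*}
\varphi_{2}\,\mathcal{R}_{23}(v) = v^{-1}\mathcal{R}_{23}(v)\varphi_{3},\qquad \varphi_{3}^{\ast}\,\mathcal{R}_{23}(v) = v\,\mathcal{R}_{23}(v)\varphi_{2}^{\ast},
\end{equation*}
valid away from the vacuum of space $2$, both of which follow from a one-line check on $v_a\otimes v_b$ with $b\geq 1$.

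The main obstacle is the $(1,1)$-entry, which is precisely the place where the naive $L$-operator would fail to satisfy an RLL relation with $\mathcal{R}_{23}$ and which motivates the correction by $u\sigma^{+}\sigma^{-}\otimes(1-\varphi^{\ast}\varphi)$. On the subspace $(\mathcal{M}(v)\otimes \mathcal{M})\cdot \Pi_0^{(2)}$ one has $\varphi_2^{\ast}\varphi_2 = 0$ rather than $1$, so the would-be identity $\varphi_2\varphi_3^{\ast}\mathcal{R}_{23}(v) = \mathcal{R}_{23}(v)\varphi_3\varphi_2^{\ast}$ picks up a boundary defect. A direct inspection of (\ref{Boltzmannweight}) shows that on this vacuum subspace $\mathcal{R}_{23}(v)$ acts particularly simply, and the extra summand $(u/v)\Pi_0^{(2)}$ added to the $(1,1)$-entry by $L^{\prime}$ is exactly what is needed to cancel this defect. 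Once this bookkeeping is done the $(1,1)$ intertwining relation closes, and combined with the three easier entries it yields (\ref{phaseYBE}).
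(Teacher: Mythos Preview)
Your overall strategy---expand both sides as $2\times 2$ matrices over $\operatorname{End}[\mathcal{M}(v)\otimes\mathcal{M}]$ and verify the four resulting identities by evaluation on basis vectors---is exactly the ``straightforward computation'' the paper invokes, so the approach is the same as the paper's.

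That said, two details of your sketch are inaccurate and would cause the computation to fail as written. First, the ``core commutation rules'' you state have the wrong spectral factors: a direct check on $v_a\otimes v_b$ using (\ref{Boltzmannweight}) gives
\[
\varphi_{2}\,\mathcal{R}_{23}(v)=\mathcal{R}_{23}(v)\,\varphi_{3},
\qquad
\varphi_{3}^{\ast}\,\mathcal{R}_{23}(v)=v^{-1}\,\mathcal{R}_{23}(v)\,\varphi_{2}^{\ast},
\]
not the versions with $v^{\mp 1}$ and $v$ you wrote. Second, the vacuum projector $\Pi_0^{(2)}=1-\varphi_2^{\ast}\varphi_2$ does not appear only in the $(1,1)$ block: since $L'_{12}(u/v)$ sits on the \emph{right} in $B=L_{13}(u)L'_{12}(u/v)$, the entry $B_{21}$ also contains a term $(u^2/v)\,\varphi_3^{\ast}\Pi_0^{(2)}$, and correspondingly the analogue of your ``core rules'' for $\varphi_3$ and $\varphi_2^{\ast}$ acquires its own boundary defect (coming from the $c=0$ term in the Boltzmann sum). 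So the bookkeeping you describe for the $(1,1)$ entry must in fact be carried through all four entries, not just one; the $L'$ correction term is needed precisely so that these various vacuum contributions cancel consistently across the whole matrix identity. Once the $v$-powers are corrected and the $\Pi_0^{(2)}$ contributions are tracked in every entry, the direct verification goes through.
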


\begin{remark}
\textrm{Note that the operator $L^{\prime }$ does not possess an inverse.
Similarly, like the other operators (\ref{crystalR}) and (\ref{crystalS}) it
is another special crystal limit of the $U_{q}\widehat{\mathfrak{sl}}(2)$%
-intertwiner associated with $M_{\mu }(u)\otimes M_{\nu }(v)$; see the
appendix.}
\end{remark}

Within the context of the quantum inverse scattering method \cite{KBI} one
introduces the Yang-Baxter algebra which for the phase model is generated by
the following matrix elements%
\begin{equation}
\left(
\begin{array}{cc}
A(u) & B(u) \\
C(u) & D(u)%
\end{array}%
\right) :=\langle v_{\sigma ^{\prime }}^{\ast }\otimes 1,z^{\sigma
^{3}\otimes 1}L_{0n}(u)\cdots L_{01}(u)v_{\sigma }\otimes 1\rangle _{\sigma
^{\prime },\sigma =0,1}\;.  \label{phaseYBalgebra}
\end{equation}%
The matrix elements $A,B,C,D\in \limfunc{End}\mathcal{H}[u,z]$ have a
particularly simple combinatorial action; cf. \cite[Cor 3.9]{KS}. Note that $%
T(u)=A(u)+zD(u)$. In analogy we now define for our vertex model (\ref%
{Qmatrix}) the infinite-dimensional operator-valued matrix%
\begin{equation}
Q_{\varepsilon ^{\prime },\varepsilon }(u):=\langle v_{\varepsilon ^{\prime
}}^{\ast }\otimes 1,z^{N\otimes 1}\mathcal{R}_{0n}(u)\cdots \mathcal{R}%
_{01}(u)v_{\varepsilon }\otimes 1\rangle \in \limfunc{End}\mathcal{H}[u,z]
\label{infYBalgebra}
\end{equation}%
with $\varepsilon ,\varepsilon ^{\prime }\in \mathbb{Z}_{\geq 0}$ and $%
Q(u)=\sum_{\varepsilon \geq 0}Q_{\varepsilon ,\varepsilon }(u)$. We compute
these matrix elements explicitly in the next section and show that also they
have a nice combinatorial interpretation. First we have the following
consequence from the previous proposition.

\begin{corollary}
The generators of the Yang-Baxter algebra for the phase model (\ref{Tmatrix}%
) and the vertex model (\ref{Qmatrix}) obey the following commutation
relations:%
\begin{gather}
A(u)Q_{\varepsilon ^{\prime },\varepsilon }(v)-Q_{\varepsilon ^{\prime
},\varepsilon }(v)A(u) = Q_{\varepsilon ^{\prime },\varepsilon
-1}(v)B(u)-u/v~C(u)Q_{\varepsilon ^{\prime }-1,\varepsilon }(v)
\label{Acomm} \\
\hspace{5cm}+\frac{u}{v}\left[ \delta _{\varepsilon ,0}Q_{\varepsilon
^{\prime },0}(v)A(u)-\delta _{\varepsilon ^{\prime },0}A(u)Q_{0,\varepsilon
}(v)\right]  \notag \\
Q_{\varepsilon ^{\prime },\varepsilon }(v)B(u)- v/u~B(u)Q_{\varepsilon
^{\prime },\varepsilon }(v) = D(u)Q_{\varepsilon ^{\prime }-1,\varepsilon
}(v)-Q_{\varepsilon ^{\prime },\varepsilon +1}(v)A(u)  \label{Bcomm} \\
\hspace{7cm}+\delta _{\varepsilon ^{\prime},0}B(u)Q_{0,\varepsilon }(v)
\notag \\
Q_{\varepsilon ^{\prime },\varepsilon }(v)C(u)- u/v~C(u)Q_{\varepsilon
^{\prime },\varepsilon }(v) = A(u)Q_{\varepsilon ^{\prime }+1,\varepsilon
}(v) -Q_{\varepsilon ^{\prime },\varepsilon -1}(v)D(u)  \label{Ccomm} \\
\hspace{7cm}-\delta _{\varepsilon ,0}\frac{u}{v}Q_{\varepsilon ^{\prime
},0}(v)C(u)  \notag \\
D(u)Q_{\varepsilon ^{\prime },\varepsilon }(v)- Q_{\varepsilon ^{\prime
},\varepsilon }(v)D(u) =Q_{\varepsilon ^{\prime },\varepsilon
+1}(v)C(u)-v/u~B(u)Q_{\varepsilon ^{\prime }+1,\varepsilon }(v)
\label{Dcomm}
\end{gather}%
Here matrix elements with negative indices are understood to be zero. In
particular, we have%
\begin{equation}
D(u)Q_{\varepsilon ^{\prime },\varepsilon }(v)-Q_{\varepsilon ^{\prime
},\varepsilon }(v)D(u)=Q_{\varepsilon ^{\prime }+1,\varepsilon
+1}(v)A(u)-A(u)Q_{\varepsilon ^{\prime }+1,\varepsilon +1}(v)\;.
\label{phasevertexcomm}
\end{equation}
\end{corollary}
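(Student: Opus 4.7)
The strategy is to iterate the local intertwining relation (\ref{phaseYBE}) over all $n$ quantum sites to obtain a global version, and then read off the four stated commutation relations as the four entries of $L^{\prime}$ in the phase-model auxiliary space. Using that $L_{13_{i}}(u)$ and $\mathcal{R}_{23_{j}}(v)$ commute for $i\neq j$ (they act on disjoint quantum sites), one can pass $L^{\prime }_{12}(u/v)$ through each pair $L_{13_{i}}(u)\mathcal{R}_{23_{i}}(v)$ one site at a time via (\ref{phaseYBE}), then rearrange the resulting interleaved product to arrive at
\begin{equation*}
L^{\prime }_{12}(u/v)\,\mathbb{L}_{13}(u)\,\mathbb{R}_{23}(v) \;=\; \mathbb{R}_{23}(v)\,\mathbb{L}_{13}(u)\,L^{\prime }_{12}(u/v)
\end{equation*}
in $\func{End}(\mathbb{C}^{2}\otimes \mathcal{M}\otimes \mathcal{H})$, where $\mathbb{L}_{13}(u)=L_{1n}(u)\cdots L_{11}(u)$ and $\mathbb{R}_{23}(v)=\mathcal{R}_{2n}(v)\cdots \mathcal{R}_{21}(v)$ are the monodromy matrices of the phase and vertex models.

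Viewing $L^{\prime }(u/v)$ as a $2\times 2$ matrix in the phase-model auxiliary space with operator entries acting on $\mathcal{M}$ (built from $1,\varphi ,\varphi ^{\ast }$ and the projector $1-\varphi ^{\ast }\varphi $), one then takes matrix elements $\langle v_{\sigma ^{\prime }}^{\ast }\otimes v_{\varepsilon ^{\prime }}^{\ast }|\cdot |v_{\sigma }\otimes v_{\varepsilon }\rangle $ of the global relation and absorbs the twists $z^{\sigma ^{3}}$ and $z^{N}$ into the definitions of $A,B,C,D$ and $Q_{\varepsilon ^{\prime },\varepsilon }$. The four choices $(\sigma ^{\prime },\sigma )\in \{0,1\}^{2}$ then produce respectively the relations (\ref{Acomm})--(\ref{Dcomm}); because $\varphi $ and $\varphi ^{\ast }$ change a single basis label by $\pm 1$, the a priori infinite auxiliary sum over $\eta \in \mathbb{Z}_{\geq 0}$ collapses to only a handful of nonzero terms. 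The delta-function corrections appearing in (\ref{Acomm})--(\ref{Ccomm}) come entirely from the extra term $u\sigma ^{+}\sigma ^{-}\otimes (1-\varphi ^{\ast }\varphi )$ distinguishing $L^{\prime }$ from $L$: in the first auxiliary space $\sigma ^{+}\sigma ^{-}$ projects onto $v_{0}$ (so the correction only contributes when $\sigma =0$ or $\sigma ^{\prime }=0$), and in the second auxiliary space $1-\varphi ^{\ast }\varphi $ is the rank-one projector $|v_{0}\rangle \langle v_{0}|$, contributing precisely the factors $\delta _{\varepsilon ,0}$ and $\delta _{\varepsilon ^{\prime },0}$ together with the prefactor $u/v$. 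Finally, the identity (\ref{phasevertexcomm}) is obtained by writing (\ref{Acomm}) with both $\varepsilon $ and $\varepsilon ^{\prime }$ shifted by $1$ (so that its delta corrections vanish) and eliminating the remaining $B$- and $C$-contributions using (\ref{Bcomm}) and (\ref{Ccomm}), leaving only the commutators of $A$ and $D$ with the appropriate $Q$'s.

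The main technical obstacle is the careful bookkeeping of the twist operators: $L^{\prime }$ does not commute with $z^{\sigma ^{3}}\otimes z^{N}$, and the powers of $z$ that arise naturally when a matrix element is taken must be redistributed in exactly the right way to recover the operators $A,B,C,D$ and $Q_{\varepsilon ^{\prime },\varepsilon }$ as defined in (\ref{phaseYBalgebra}) and (\ref{infYBalgebra}). This redistribution is also the origin of the various spectral-parameter ratios $u/v$ and $v/u$ appearing on the right-hand sides of the commutation relations, so that once the $z$-accounting is done consistently each relation drops out directly from its corresponding $(\sigma ^{\prime },\sigma )$-entry of the global intertwining identity.
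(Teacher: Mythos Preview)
Your derivation of the four relations (\ref{Acomm})--(\ref{Dcomm}) is essentially the paper's argument: one iterates the local identity (\ref{phaseYBE}) over all sites to obtain the global intertwining relation for the two monodromy matrices and then reads off the four entries in the $\mathbb{C}^{2}$ auxiliary space. The paper simply writes this as the equality of matrix elements of $L_{12}'(u/v)\mathcal{T}_{13}(u)\mathcal{Q}_{23}(v)$ and $\mathcal{Q}_{23}(v)\mathcal{T}_{13}(u)L_{12}'(u/v)$ with the twists already absorbed into $\mathcal{T}$ and $\mathcal{Q}$, so your extended discussion of the $z$-bookkeeping is perhaps more than is needed, but not wrong.

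The genuine gap is in your derivation of (\ref{phasevertexcomm}). You propose to take (\ref{Acomm}) with $\varepsilon,\varepsilon'$ shifted by one and then eliminate the $B$- and $C$-terms using (\ref{Bcomm}) and (\ref{Ccomm}). If you carry this out, the shifted (\ref{Bcomm}) expresses $Q_{\varepsilon'+1,\varepsilon}B$ in terms of $BQ_{\varepsilon'+1,\varepsilon}$, $DQ_{\varepsilon',\varepsilon}$ and $Q_{\varepsilon'+1,\varepsilon+1}A$, and the shifted (\ref{Ccomm}) expresses $(u/v)CQ_{\varepsilon',\varepsilon+1}$ in terms of $Q_{\varepsilon',\varepsilon+1}C$, $AQ_{\varepsilon'+1,\varepsilon+1}$ and $Q_{\varepsilon',\varepsilon}D$; substituting both into the shifted (\ref{Acomm}) and simplifying collapses the identity back to (\ref{Dcomm}). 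In other words, the four relations (\ref{Acomm})--(\ref{Dcomm}) are not independent enough to yield (\ref{phasevertexcomm}) by pure elimination; you go in a circle. This is not surprising: all four come from the same $2\times 2$ global intertwining identity with the singular operator $L'$, and (\ref{phasevertexcomm}) is genuinely extra information.

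The paper closes this gap by invoking an additional structural input specific to the phase model, namely the factorisations
\[
C(u)=\varphi_{n}A(u),\qquad B(u)=u\,A(u)\,\varphi_{1}^{\ast},
\]
which follow directly from the explicit form (\ref{phaseL}) of the $L$-operator at the last and first sites, respectively. Plugging these into (\ref{Dcomm}) and using the corresponding relations between $Q_{\varepsilon',\varepsilon}$ and $\varphi_{n},\varphi_{1}^{\ast}$ then produces (\ref{phasevertexcomm}). So your strategy for the last identity needs to be replaced by this step; the rest of your argument is fine.
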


\begin{proof}
Set $\mathcal{T}(u)=z^{\sigma ^{3}\otimes 1}L_{0n}(u)\cdots L_{01}(u)$ and $%
\mathcal{Q}(v)=z^{N\otimes 1}\mathcal{R}_{0n}(v)\cdots \mathcal{R}_{01}(v)$.
Then the first four commutation relations are easily obtained by considering
the following equality of matrix elements%
\begin{multline*}
\langle v_{\sigma ^{\prime }}^{\ast }\otimes v_{\varepsilon ^{\prime
}}^{\ast }\otimes 1,L_{12}^{\prime }(u/v)\mathcal{T}_{13}(u)\mathcal{Q}%
_{23}(v)v_{\sigma }^{\ast }\otimes v_{\varepsilon }\otimes 1\rangle = \\
\langle v_{\sigma ^{\prime }}^{\ast }\otimes v_{\varepsilon ^{\prime
}}^{\ast }\otimes 1,\mathcal{Q}_{23}(v)\mathcal{T}_{13}(u)L_{12}^{\prime
}(u/v)v_{\sigma }^{\ast }\otimes v_{\varepsilon }\otimes 1\rangle
\end{multline*}%
which follows from (\ref{phaseYBE}). For the last identity (\ref%
{phasevertexcomm}) employ the equality (\ref{Dcomm}) together with the
relations%
\begin{equation}
C(u)=\varphi _{n}A(u)\qquad \text{and}\qquad B(u)=uA(u)\varphi _{1}^{\ast },
\label{CABA}
\end{equation}%
which follow from the definition (\ref{phaseL}) and (\ref{phaseYBalgebra}).
\end{proof}

\section{Noncommutative symmetric polynomials}

We now connect the definition of our infinite-dimensional vertex model in
terms of the transfer matrix (\ref{Qmatrix}) with the discussion of the
plactic algebra in Section \ref{plactic}. We show that the matrix elements
of $Q$ can be written as analogues of the complete symmetric functions in a
noncommutative alphabet, the local affine plactic algebra $\limfunc{Pl}(%
\mathcal{A})$ in the representation (\ref{placticrep}). Recall that given a
set of \emph{commutative} indeterminates $x=(x_{1},\ldots ,x_{n}),$ the
complete symmetric functions are the coefficients in the formal power series
expansion of the following generating function \cite[Chapter I, Section 2,
p21]{MacDonald},%
\begin{equation}
H(u)=\prod_{i=1}^{n}\frac{1}{1-x_{i}u}=\sum_{r\geq 0}h_{r}(x_{1},\ldots
,x_{n})u^{r}~.  \label{Hdef}
\end{equation}%
This definition implies the following recursive formula with respect to $n$,%
\begin{equation}
h_{r}(x_{1},\ldots ,x_{n})=h_{r}(x_{1},\ldots
,x_{n-1})+x_{n}h_{r-1}(x_{1},\ldots ,x_{n})\;.  \label{hrecursion}
\end{equation}%
The solution is given by the following explicit expression%
\begin{equation}
h_{r}(x_{1},\ldots ,x_{n})=\sum_{p\vdash r}x_{1}^{p_{1}}\cdots x_{n}^{p_{n}},
\label{hdef}
\end{equation}%
where the sum runs over all compositions $p$ of $r>0$ and $h_{0}=1$. Up to a
specific ordering we now show that the same formulae hold for a series
expansion of (\ref{Qmatrix}) when replacing the commutative variables $%
(x_{1},\ldots ,x_{n})$ with the generators of $\limfunc{Pl}(\mathcal{A})$,
i.e. the $Q$-operator is the generating function complete symmetric
polynomials in a noncommutative alphabet.

\begin{proposition}
\label{main} Let $Q(u)$ be the operator defined in (\ref{Qmatrix}) and
denote by $\varphi _{i},\varphi _{i}^{\ast }$ and $a_{i}=\varphi _{i}\varphi
_{i+1}^{\ast }$ the generators of the phase and local affine plactic
algebra; see Sections \ref{crystalverma} and \ref{plactic}. Then we have the
following formal power series expansion%
\begin{equation}
Q(u)=\limfunc{Tr}_{0}z^{N\otimes 1}\mathcal{R}_{0n}(u)\cdots \mathcal{R}%
_{01}(u)=\sum_{r\geq 0}u^{r}h_{r}(\mathcal{A}),  \label{ncHdef}
\end{equation}%
where
\begin{equation}
h_{r}(\mathcal{A}):=\sum_{|\boldsymbol{\varepsilon }|=r}z^{\varepsilon
_{0}}(\varphi _{1}^{\ast })^{\varepsilon _{0}}a_{1}^{\varepsilon
_{1}}a_{2}^{\varepsilon _{2}}\cdots a_{n-1}^{\varepsilon _{n-1}}\varphi
_{n}^{\varepsilon _{0}}  \label{nchdef}
\end{equation}%
with the sum running over all compositions $\boldsymbol{\varepsilon }%
=(\varepsilon _{0},\varepsilon _{1},\ldots ,\varepsilon _{n-1})$ and $|%
\boldsymbol{\varepsilon }|=\sum_{i}\varepsilon _{i}$. In particular, when
setting $z=0$ we obtain that $Q_{0,0}(u)=\sum_{r\geq 0}u^{r}h_{r}(\mathcal{A}%
^{\prime })$, where $h_{r}(\mathcal{A}^{\prime })=\sum_{|\boldsymbol{%
\varepsilon }|=r,\varepsilon _{0}=0}a_{1}^{\varepsilon
_{1}}a_{2}^{\varepsilon _{2}}\cdots a_{n-1}^{\varepsilon _{n-1}}$ are the
complete symmetric polynomials in the \emph{finite} plactic algebra $%
\limfunc{Pl}_{\func{fin}}(\mathcal{A}^{\prime })$ and analogous to (\ref%
{hrecursion}) we have the recursion relation%
\begin{equation}
h_{r}(\mathcal{A})=h_{r}(\mathcal{A}^{\prime })+z\varphi _{1}^{\ast }h_{r-1}(%
\mathcal{A})\varphi _{n}\;.  \label{nchrecursion}
\end{equation}
\end{proposition}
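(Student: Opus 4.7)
The plan is to compute $Q(u)$ directly by iterating the action of the $\mathcal{R}$-matrices site by site along one row, and then to reorganise the resulting expression using the phase-algebra relations so that it matches the claimed $h_{r}(\mathcal{A})$.

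First I would use (\ref{crystalR}) to record the explicit action
\[
\mathcal{R}(u)(v_{a}\otimes v_{b})=\sum_{\alpha =0}^{b}u^{a}\,v_{b-\alpha }\otimes v_{a+\alpha },
\]
so that one application of $\mathcal{R}_{0i}(u)$ to a state with auxiliary label $\varepsilon _{i-1}$ and $i$th quantum label $m_{i}$ produces a sum over a new auxiliary label $\varepsilon _{i}\in\{0,\ldots ,m_{i}\}$, contributes the scalar $u^{\varepsilon _{i-1}}$, and acts on the $i$th quantum factor by $(\varphi _{i}^{\ast })^{\varepsilon _{i-1}}\varphi _{i}^{\varepsilon _{i}}$ (matching the local vertex in Figure \ref{fig:rowconfig}). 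Iterating for $i=1,\ldots ,n$ and then taking the auxiliary trace with $z^{N\otimes 1}$ identifies the outgoing label $\varepsilon _{n}$ with the incoming one $\varepsilon _{0}$ and yields a factor $z^{\varepsilon _{0}}$, so that
\[
Q(u)=\sum_{\boldsymbol{\varepsilon }\in \mathbb{Z}_{\geq 0}^{n}}z^{\varepsilon _{0}}u^{|\boldsymbol{\varepsilon }|}\,\prod_{i=1}^{n}(\varphi _{i}^{\ast })^{\varepsilon _{i-1}}\varphi _{i}^{\varepsilon _{i}}\qquad (\varepsilon _{n}:=\varepsilon _{0}).
\]
The constraints $\varepsilon _{i}\leq m_{i}$ make the sum of fixed weight $r=|\boldsymbol{\varepsilon}|$ finite on any basis vector of $\mathcal{H}_{k}$, so this is a bona fide formal power series in $u$ with operator coefficients in $\limfunc{End}\mathcal{H}$.

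To match the expression (\ref{nchdef}) I would then exploit (\ref{comm})--(\ref{comm3}): operators on different sites commute, whence
\[
a_{i}^{\varepsilon _{i}}=(\varphi _{i}\varphi _{i+1}^{\ast })^{\varepsilon _{i}}=\varphi _{i}^{\varepsilon _{i}}(\varphi _{i+1}^{\ast })^{\varepsilon _{i}}.
\]
Multiplying $a_{1}^{\varepsilon _{1}}a_{2}^{\varepsilon _{2}}\cdots a_{n-1}^{\varepsilon _{n-1}}$ and bracketing on the left by $(\varphi _{1}^{\ast })^{\varepsilon _{0}}$ and on the right by $\varphi _{n}^{\varepsilon _{0}}$ regroups all the factors by site and produces exactly the ordered product $\prod_{i=1}^{n}(\varphi _{i}^{\ast })^{\varepsilon _{i-1}}\varphi _{i}^{\varepsilon _{i}}$ computed above (with the cyclic convention $\varepsilon _{n}=\varepsilon _{0}$). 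This establishes (\ref{ncHdef})--(\ref{nchdef}). The recursion (\ref{nchrecursion}) follows by splitting the sum defining $h_{r}(\mathcal{A})$ according to whether $\varepsilon _{0}=0$, which contributes $h_{r}(\mathcal{A}^{\prime })$, or $\varepsilon _{0}\geq 1$; in the latter case one factors out $\varphi _{1}^{\ast }$ on the left, $\varphi _{n}$ on the right, absorbs one power of $z$, and reindexes $\varepsilon _{0}\mapsto \varepsilon _{0}-1$ to recognise $z\varphi _{1}^{\ast }h_{r-1}(\mathcal{A})\varphi _{n}$.

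The most delicate point is the bookkeeping: one has to keep careful track of which slot the flip $\mathcal{P}$ in (\ref{crystalR}) places the auxiliary label into after each application, and remember that $(\varphi _{i}^{\ast })^{\varepsilon _{i-1}}$ and $\varphi _{i}^{\varepsilon _{i}}$ act on the \emph{same} site and hence do not commute (cf.\ (\ref{comm3})). Fortunately both issues fit together: the natural left-to-right iteration of $\mathcal{R}_{01},\ldots ,\mathcal{R}_{0n}$ imposes exactly the ordering that also arises from expanding the product $a_{1}^{\varepsilon _{1}}\cdots a_{n-1}^{\varepsilon _{n-1}}$, so no relations beyond (\ref{comm})--(\ref{comm3}) are needed to close the argument.
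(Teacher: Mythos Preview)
Your proof is correct and follows essentially the same route as the paper's own argument: both compute $Q(u)$ by reading off the row configurations of Figure~\ref{fig:rowconfig}, obtain the ordered product $\prod_{i}(\varphi_{i}^{\ast})^{\varepsilon_{i-1}}\varphi_{i}^{\varepsilon_{i}}$ with $\varepsilon_{n}=\varepsilon_{0}$, and then regroup using (\ref{comm})--(\ref{comm3}) and (\ref{placticrep}) into the form (\ref{nchdef}). The paper's version is terser (it simply invokes the Boltzmann-weight picture and Figure~\ref{fig:rowconfig}), while you spell out the operator iteration of $\mathcal{R}_{01},\ldots,\mathcal{R}_{0n}$ and the regrouping via $a_{i}^{\varepsilon_{i}}=\varphi_{i}^{\varepsilon_{i}}(\varphi_{i+1}^{\ast})^{\varepsilon_{i}}$ explicitly, but there is no substantive difference in strategy.
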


\begin{remark}
\textrm{To facilitate the comparison with the commutative case, assume that
there exists one summand in the sum (\ref{nchdef}) for which $\varepsilon
_{j}$ vanishes. Then the corresponding monomial can be rewritten as ($%
\varepsilon _{j}=0$)%
\begin{multline*}
z^{\varepsilon _{0}}(\varphi _{1}^{\ast })^{\varepsilon
_{0}}a_{1}^{\varepsilon _{1}}a_{2}^{\varepsilon _{2}}\cdots
a_{n-1}^{\varepsilon _{n-1}}\varphi _{n}^{\varepsilon _{0}}= \\
z^{\varepsilon _{0}}a_{j+1}^{\varepsilon _{j+1}}a_{j+2}^{\varepsilon
_{j+2}}\cdots a_{n-1}^{\varepsilon _{n-1}}\varphi _{n}^{\varepsilon
_{0}}(\varphi _{1}^{\ast })^{\varepsilon _{0}}a_{1}^{\varepsilon
_{1}}a_{2}^{\varepsilon _{2}}\cdots a_{j-1}^{\varepsilon _{j-1}}= \\
a_{j+1}^{\varepsilon _{j+1}}a_{j+2}^{\varepsilon _{j+2}}\cdots
a_{n-1}^{\varepsilon _{n-1}}a_{0}^{\varepsilon _{0}}a_{1}^{\varepsilon
_{1}}a_{2}^{\varepsilon _{2}}\cdots a_{j-1}^{\varepsilon _{j-1}},
\end{multline*}%
where we have exploited that $[a_{i},a_{j}]=0$ for $|i-j|\func{mod}n>1$ and $%
a_{n}=z\varphi _{n}\varphi _{1}^{\ast }$. In particular, if $r<n$ then we
can always find for each summand in (\ref{nchdef}) some $1\leq j\leq n$ such
that $\varepsilon _{j}=0$ and the definition of the complete symmetric
polynomials coincides with the one in \cite[Def 5.16]{KS},%
\begin{equation*}
r<n:\;h_{r}(\mathcal{A})=\sum_{p\vdash r}\prod_{i=1}^{\circlearrowright
}a_{i}^{p_{i}},\qquad \prod_{i=1}^{\circlearrowright
}a_{i}^{p_{i}}=a_{j+1}^{\varepsilon _{j+1}}\cdots a_{n-1}^{\varepsilon
_{n-1}}a_{0}^{\varepsilon _{0}}a_{1}^{\varepsilon _{1}}\cdots
a_{j-1}^{\varepsilon _{j-1}},
\end{equation*}%
where the letters are clockwise cyclically ordered. The similarity with (\ref%
{hdef}) is now apparent. The new result here, compared to \cite[Cor 6.9]{KS}%
, is the explicit expression for $h_{r}(\mathcal{A})$ when $r\geq n$. }
\end{remark}

\begin{proof}
The proof is immediate from the definition of the Boltzmann weights and the
action of the phase algebra. Namely, consider an allowed row configuration
as depicted in Figure \ref{fig:rowconfig}. Summing over the variables
located at the horizontal edges $\boldsymbol{\varepsilon }=(\varepsilon
_{0},\varepsilon _{1},\ldots ,\varepsilon _{n-1})$ we obtain the matrix
element
\begin{multline*}
\langle \boldsymbol{m}^{\prime }|Q(u)|\boldsymbol{m}\rangle
=\sum_{\varepsilon _{0}}\langle \boldsymbol{m}^{\prime }|Q_{\varepsilon
_{0},\varepsilon _{0}}(u)|\boldsymbol{m}\rangle \\
=\sum_{\boldsymbol{\varepsilon }}z^{\varepsilon _{0}}u^{\varepsilon
_{0}+\cdots +\varepsilon _{n-1}}\langle \boldsymbol{m}^{\prime }|(\varphi
_{1}^{\ast })^{\varepsilon _{0}}\varphi _{1}^{\varepsilon _{1}}(\varphi
_{2}^{\ast })^{\varepsilon _{1}}\varphi _{2}^{\varepsilon _{2}}\cdots
(\varphi _{n}^{\ast })^{\varepsilon _{n-1}}\varphi _{n}^{\varepsilon _{0}}|%
\boldsymbol{m}\rangle ,
\end{multline*}%
where $|\boldsymbol{m}|=|\boldsymbol{m}^{\prime }|=k\geq 0.$ If $|%
\boldsymbol{m}|\neq |\boldsymbol{m}^{\prime }|$ the matrix element vanishes
according to (\ref{Boltzmannweight}). Note in particular that because of the
ordering of the phase algebra generators it follows for $r>k$ that $h_{r}(%
\mathcal{A})\mathcal{H}_{k}=\{0\}$ and, thus, the series expansion
terminates after finitely many summands on each $\mathcal{H}_{k}$. The
operator (\ref{Qmatrix}) is therefore well defined as claimed earlier. The
result (\ref{ncHdef}) with (\ref{nchdef}) now follows from (\ref{placticrep}%
).
\end{proof}

\begin{remark}
\textrm{As shown in \cite[Prop 5.13]{KS} the transfer matrix (\ref{Tmatrix})
of the phase model is the generating function for the noncommutative
analogues of the elementary symmetric polynomials,%
\begin{equation}
T(u)=\sum_{r=0}^{n}u^{r}e_{r}(\mathcal{A}),\qquad e_{r}(\mathcal{A})=\sum
_{\substack{ p\vdash r  \\ p_{i}=0,1}}\prod_{i=1}^{\circlearrowleft
}a_{i}^{p_{i}}  \label{ncedef}
\end{equation}%
where the letters are now anticlockwise cyclically ordered and $e_{n}(%
\mathcal{A})=z\cdot 1$. Also in this instance the familiar recursion
relation from the commutative case, $e_{r}(x_{1},\ldots
,x_{n})=e_{r}(x_{1},\ldots ,x_{n-1})+x_{n}e_{r-1}(x_{1},\ldots ,x_{n-1})$
with generating function $E(u)=\prod_{i=1}^{n}(1+ux_{i})=\sum_{r\geq
0}u^{r}e_{r}(x_{1},\ldots ,x_{n})$,$\;$generalises to the noncommutative
case,%
\begin{equation}
e_{r}(\mathcal{A})=e_{r}(\mathcal{A}^{\prime })+z\varphi _{n}e_{r-1}(%
\mathcal{A}^{\prime })\varphi _{1}^{\ast }\;.  \label{ncerecursion}
\end{equation}%
This last equality is implicit in the results of \cite[Prop 5.13]{KS}.
Setting $z=0$ one infers that
\begin{equation*}
A(u)=(1+ua_{n-1})(1+ua_{n-2})\cdots (1+ua_{1})=\sum_{r\geq 0}u^{r}e_{r}(%
\mathcal{A}^{\prime })
\end{equation*}%
and the identity (\ref{ncerecursion}) then follows from (\ref{CABA}).}
\end{remark}

\begin{corollary}
The elementary and complete symmetric polynomials (\ref{nchdef}) in the
noncommutative alphabet $\mathcal{A}$ pairwise commute,
\begin{equation}
\lbrack e_{r}(\mathcal{A}),e_{r^{\prime }}(\mathcal{A})]\overset{(1)}{=}%
[h_{r}(\mathcal{A}),h_{r^{\prime }}(\mathcal{A})]\overset{(2)}{=}[e_{r}(%
\mathcal{A}),h_{r^{\prime }}(\mathcal{A})]\overset{(3)}{=}0\;.
\label{ncehcomm}
\end{equation}
\end{corollary}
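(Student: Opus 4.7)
The plan is to derive each of the three commutators from a generating-function identity at the level of transfer matrices, then equate coefficients in the formal power series expansions of Proposition \ref{main} and (\ref{ncedef}). Specifically, I will obtain (2) from $[Q(u),Q(v)]=0$, (1) from $[T(u),T(v)]=0$, and (3) from $[T(u),Q(v)]=0$.

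Identity (2) is essentially immediate: Corollary \ref{integrability} already asserts $[Q(u),Q(v)]=0$, so by Proposition \ref{main} the expansion
\[ \sum_{r,r'\geq 0} u^r v^{r'}\, h_r(\mathcal{A})\, h_{r'}(\mathcal{A}) \;=\; \sum_{r,r'\geq 0} u^r v^{r'}\, h_{r'}(\mathcal{A})\, h_r(\mathcal{A}) \]
and equating the coefficient of $u^r v^{r'}$ gives $[h_r(\mathcal{A}),h_{r'}(\mathcal{A})]=0$. Identity (1) follows analogously from the standard QISM integrability of the phase model, $[T(u),T(v)]=0$, which is the consequence of the six-vertex-type Yang-Baxter equation for the $L$-operator (\ref{phaseL}) used in \cite[Section 3]{KS}; the expansion (\ref{ncedef}) then produces $[e_r(\mathcal{A}),e_{r'}(\mathcal{A})]=0$.

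The main obstacle is identity (3), which reduces to establishing the mixed commutation $[T(u),Q(v)]=0$. The natural route is to iterate the mixed Yang-Baxter equation (\ref{phaseYBE}) across all $n$ quantum sites to produce
\[ L'_{00'}(u/v)\, T_0(u)\, \tilde{T}_{0'}(v) \;=\; \tilde{T}_{0'}(v)\, T_0(u)\, L'_{00'}(u/v), \]
where $T_0(u)=L_{0n}(u)\cdots L_{01}(u)$ and $\tilde{T}_{0'}(v)=\mathcal{R}_{0'n}(v)\cdots \mathcal{R}_{0'1}(v)$, and then to take traces over the two auxiliary spaces with the twists $z^{\sigma^{3}\otimes 1}$ and $z^{N\otimes 1}$. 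The standard QISM cancellation (multiply by $(L')^{-1}$, invoke cyclicity of the trace) is \emph{not} available here, because $L'$ is noninvertible, as noted in the remark after (\ref{phaseL'}). This is what makes (3) nontrivial.

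To circumvent the noninvertibility, I work directly with the auxiliary-space commutation relations (\ref{Acomm})--(\ref{phasevertexcomm}) already extracted from (\ref{phaseYBE}) in the preceding corollary. Summing (\ref{phasevertexcomm}) over the diagonal indices $\varepsilon=\varepsilon'$ and reindexing yields
\[ [D(u),Q(v)] \;=\; [Q(v)-Q_{0,0}(v),\, A(u)], \]
while the $\varepsilon=\varepsilon'=0$ instance of (\ref{Acomm}) reduces, using that $Q_{0,-1}=Q_{-1,0}=0$, to $(1+u/v)[A(u),Q_{0,0}(v)]=0$, forcing $[A(u),Q_{0,0}(v)]=0$ since $1+u/v$ is invertible in $\mathbb{C}(u,v)$. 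Combining these identities (and tracking the twist parameter $z$ so that the contributions of $A(u)$ and $D(u)$ recombine into $T(u)$) gives $[T(u),Q(v)]=0$. Expanding this in $u,v$ via Proposition \ref{main} and (\ref{ncedef}), and comparing the coefficient of $u^r v^{r'}$, yields $[e_r(\mathcal{A}),h_{r'}(\mathcal{A})]=0$, completing (3).
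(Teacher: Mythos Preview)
Your proof is correct and follows essentially the same route as the paper: equality (1) is cited from \cite[Cor 5.14]{KS} (phase-model integrability), equality (2) comes from Corollary~\ref{integrability}, and for (3) the paper likewise sums (\ref{phasevertexcomm}) along the diagonal to reduce $[T(u),Q(v)]$ to $\pm[A(u),Q_{0,0}(v)]$ and then uses the $\varepsilon=\varepsilon'=0$ instance of (\ref{Acomm}) to kill that commutator. The only cosmetic difference is that the paper phrases the last step as $[e_r(\mathcal{A}),h_{r'}(\mathcal{A})]=-[e_r(\mathcal{A}'),h_{r'}(\mathcal{A}')]=0$, i.e.\ passes through the finite plactic algebra before invoking (\ref{Acomm}), whereas you argue directly with $A(u)$ and $Q_{0,0}(v)$; the content is identical.
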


\begin{proof}
The first equality is a result of \cite[Cor 5.14]{KS}. The second equality
in (\ref{ncehcomm}) is a direct consequence of Corollary \ref{integrability}%
. Finally, to prove the third equality we employ (\ref{phasevertexcomm}) to
arrive at%
\begin{equation}
T(u)Q(v)-Q(v)T(u)=Q_{0,0}(v)A(u)-A(u)Q_{0,0}(v)=0\;.
\end{equation}%
Thus, $[e_{r}(\mathcal{A}),h_{r^{\prime }}(\mathcal{A})]=-[e_{r}(\mathcal{A}%
^{\prime }),h_{r^{\prime }}(\mathcal{A}^{\prime })]=0$. That the last
commutator in the finite plactic algebra vanishes is a direct consequence of
(\ref{Acomm}) for $\varepsilon =\varepsilon ^{\prime }=0$.
\end{proof}

\begin{figure}[tbp]
\begin{equation*}
\includegraphics[scale=0.30]{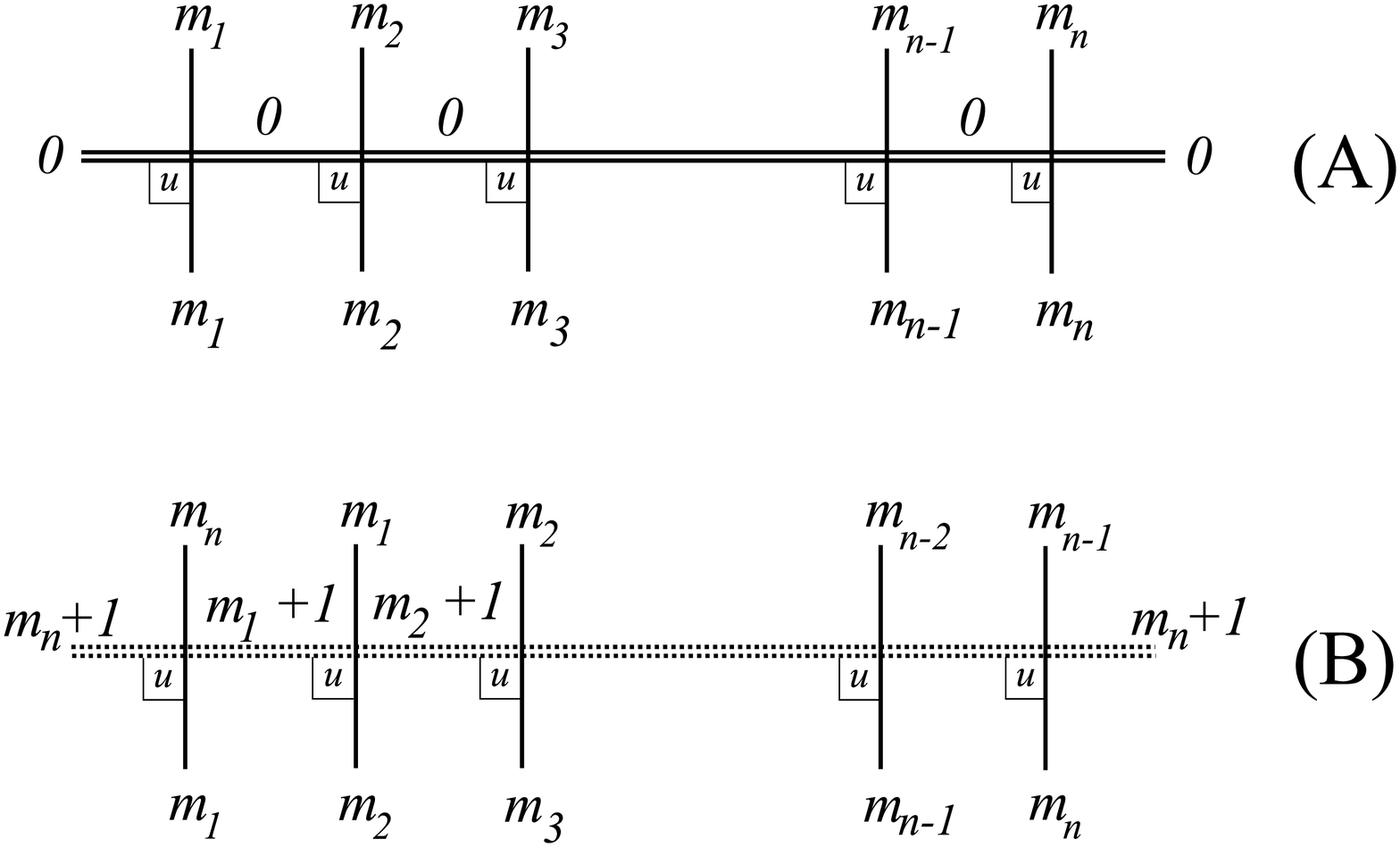}
\end{equation*}%
\caption{The allowed row configurations in the decomposition of the product $%
T(-u)Q(u)$; see the proof of Proposition \protect\ref{TQeqn}. Here the
double solid lines stand for the subspace $W$ and the double dashed lines
for the complement $\bar{W}$.}
\label{fig:rowconfig_TQ}
\end{figure}

\begin{proposition}[$TQ$-equation]
\label{TQeqn}Let $T$ and $Q$ be the transfer matrices (\ref{Tmatrix}) and (%
\ref{Qmatrix}), respectively. Then they satisfy the following identity%
\begin{eqnarray}
T(-u)Q(u) &=&\left[ Q(uq)+z(-u)^{n}q^{K}Q(uq^{-1})\right] _{q=0}
\label{crystalTQ} \\
&=&1+z(-u)^{n}\sum_{k\geq 0}u^{k}h_{k}(\mathcal{A})\pi _{k}\;,  \notag
\end{eqnarray}%
where $K=\sum_{i}N_{i}$ and $\pi _{k}$ is the (orthogonal) projector onto $%
\mathcal{H}_{k}\subset \mathcal{H}$.
\end{proposition}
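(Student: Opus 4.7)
My plan is to verify the TQ-equation coefficient-wise in $u$. Using Proposition \ref{main} for $Q(u) = \sum_{s \geq 0} u^s h_s(\mathcal{A})$ together with $T(u) = \sum_{r=0}^{n} u^r e_r(\mathcal{A})$ from (\ref{ncedef}), the left-hand side of (\ref{crystalTQ}) expands as
\begin{equation*}
T(-u)Q(u) = \sum_{N \geq 0} u^N \chi_N, \qquad \chi_N := \sum_{\substack{r+s=N\\ 0 \leq r \leq n}} (-1)^r e_r(\mathcal{A})\, h_s(\mathcal{A}),
\end{equation*}
while the claimed right-hand side amounts to $\chi_0 = 1$, $\chi_N = 0$ for $1 \leq N < n$, and $\chi_N = z(-1)^n h_{N-n}(\mathcal{A})\,\pi_{N-n}$ for $N \geq n$. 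It suffices to verify these three identities.

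For $\chi_0 = 1$ there is nothing to do. For the "bulk" range $1 \leq N < n$, I would use the recursions (\ref{nchrecursion}) and (\ref{ncerecursion}) to split each factor $e_r(\mathcal{A})$ and $h_s(\mathcal{A})$ into its $\mathcal{A}'$-part and its $z$-correction, so that $\chi_N$ decomposes into four sub-sums. The pure-$\mathcal{A}'$ sub-sum $\sum_{r+s=N}(-1)^r e_r(\mathcal{A}')h_s(\mathcal{A}')$ vanishes by the standard noncommutative Jacobi-Trudi identity valid in $\limfunc{Pl}_{\func{fin}}(\mathcal{A}')$, since $\{e_r(\mathcal{A}')\}$ and $\{h_s(\mathcal{A}')\}$ satisfy the commutative symmetric-function relations (Corollary 5.14 in \cite{KS}, following \cite{FG}). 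The three $z$-correction sub-sums telescope and cancel against each other using the phase-algebra relations (\ref{comm}), (\ref{comm3}) and the computation $\varphi_n\, e_{n-1}(\mathcal{A}')\, \varphi_1^{*}  = 1$, which follows from $a_i = \varphi_{i+1}^{*}\varphi_i$ and the $n-1$ cancellations $\varphi_i \varphi_i^{*} = 1$.

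For $N \geq n$, I would isolate the $r=n$ term of $\chi_N$, which by $e_n(\mathcal{A}) = z\cdot 1$ contributes exactly $(-1)^n z\, h_{N-n}(\mathcal{A})$. The remaining range $0 \leq r \leq n-1$ is treated by the same recursion strategy as above: the pure-$\mathcal{A}'$ part again dies by finite Jacobi-Trudi, while the mixed $z$-corrections telescope to $-(-1)^n z\, h_{N-n}(\mathcal{A})(1-\pi_{N-n})$. Here the complement $1-\pi_{N-n}$ is traced back to the idempotents $\varphi_i^{*}\varphi_i$ of (\ref{crystalQGrel}) appearing in $a_0 a_0$-type wraparounds, together with the level-vanishing $h_r(\mathcal{A})\mathcal{H}_k = \{0\}$ for $r>k$ noted in the proof of Proposition \ref{main}. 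Adding the two contributions gives $\chi_N = (-1)^n z\, h_{N-n}(\mathcal{A})\pi_{N-n}$ as required.

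I expect the main obstacle to be the precise bookkeeping of the mixed $z$-correction terms and the identification of $1-\pi_{N-n}$ as the operator produced by the telescoping. A cleaner and more conceptual alternative, hinted at by Figure \ref{fig:rowconfig_TQ}, is a graphical train-argument: decompose the vertex-model auxiliary $\mathcal{M} = W \oplus \bar W$ with $W = \limfunc{span}\{v_0,v_1\}$ matched to the phase-model auxiliary $\mathbb{C}^{2}$, and use (\ref{phaseYBE}) to drag $L'$ through the row of $\mathcal{R}$'s inside the double trace defining $T(-u)Q(u)$. The non-invertibility of $L'$ (see the remark after (\ref{phaseYBE})) is what makes only the two boundary contributions $Q(uq)\vert_{q=0}$ and $z(-u)^{n}q^{K}Q(uq^{-1})\vert_{q=0}$ survive, matching the right-hand side of (\ref{crystalTQ}).
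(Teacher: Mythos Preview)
Your primary approach---coefficient-wise verification via the recursions (\ref{nchrecursion}), (\ref{ncerecursion})---is genuinely different from the paper's, and the gap is precisely where you suspect it: the claimed telescoping of the $z$-correction sub-sums is not established. Note that $e_r(\mathcal{A}')$ does \emph{not} commute with $\varphi_1^{*}$ (already $a_1\varphi_1^{*}=\varphi_2^{*}$ while $\varphi_1^{*}a_1=\varphi_1^{*}\varphi_2^{*}\varphi_1$), so the three mixed sub-sums cannot be rearranged as freely as your sketch suggests; and your assertion that for $N\geq n$ they collapse to $-(-1)^n z\,h_{N-n}(\mathcal{A})(1-\pi_{N-n})$ is the heart of the matter and is simply asserted, not argued. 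Producing exactly the level-dependent projector $\pi_{N-n}$ from the phase-algebra idempotents is the whole difficulty. Also, beware that in the paper the Jacobi--Trudi relations (\ref{ncJT}) are a \emph{corollary} of this proposition; invoking the finite-$\mathcal{A}'$ version from \cite{FG} is not circular, but it does invert the paper's logic.

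Your alternative sketch is much closer to the paper's actual proof, but the decomposition is wrong. The paper does not split the $Q$-auxiliary $\mathcal{M}$ as $\mathbb{C}\{v_0,v_1\}\oplus\cdots$; rather it splits the \emph{combined} auxiliary space $\mathbb{C}^{2}\otimes\mathcal{M}=W\oplus\bar W$ with $W=\ker L'(-1)$, spanned by $w_0=v_0\otimes v_0$ and $w_m=v_0\otimes v_m+v_1\otimes v_{m-1}$ for $m>0$, and $\bar w_m=v_0\otimes v_{m+1}$. The Yang--Baxter relation (\ref{phaseYBE}) then forces $L_{13}(-u)\mathcal{R}_{23}(u)$ to be block upper-triangular with respect to $W\oplus\bar W$, so the double trace over $\mathbb{C}^{2}\otimes\mathcal{M}$ splits cleanly. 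The two diagonal blocks are computed directly: on $W$ one finds $L_{13}(-u)\mathcal{R}_{23}(u)\,w_m\otimes v_a=\delta_{m,0}\sum_{b=0}^{a}w_b\otimes v_{a-b}$, whose row-trace gives the identity (configuration (A) in Figure~\ref{fig:rowconfig_TQ}); on $\bar W$ one finds $L_{13}(-u)\mathcal{R}_{23}(u)\,\bar w_m\otimes v_a=-u^{m}\bar w_a\otimes v_{m-1}+(\text{terms in }W)$, whose row-trace gives the rotation $z(-u)^n h_k(\mathcal{A})\pi_k$ (configuration (B)). This is the structural argument you were reaching for; the key correction is that $W$ is an infinite-dimensional diagonal subspace of $\mathbb{C}^{2}\otimes\mathcal{M}$, not a two-dimensional piece of $\mathcal{M}$.
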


\begin{proof}
Let $L^{\prime }$ be the operator defined in (\ref{phaseL'}). Then $W=\ker
L^{\prime }(-1)\subset \mathbb{C}^{2}\otimes \mathcal{M}$ and the complement
$\bar{W}\cong (\mathbb{C}^{2}\otimes \mathcal{M})/W$ are spanned by the
vectors%
\begin{equation*}
w_{m}:=\left\{
\begin{array}{cc}
v_{0}\otimes v_{0}, & m=0 \\
v_{0}\otimes v_{m}+v_{1}\otimes v_{m-1}, & m>0%
\end{array}%
\right. \qquad \text{and}\qquad \bar{w}_{m}:=v_{0}\otimes v_{m+1}\;,
\end{equation*}%
respectively. From (\ref{phaseYBE}) we infer that $L_{13}(-u)\mathcal{R}%
_{23}(u)W\otimes \mathcal{M}\subset W\otimes \mathcal{M}$. In fact, we have
that%
\begin{equation}
L_{13}(-u)\mathcal{R}_{23}(u)w_{m}\otimes v_{a}=\delta
_{m,0}\sum_{b=0}^{a}w_{b}\otimes v_{a-b}  \label{LRW}
\end{equation}%
and
\begin{equation}
L_{13}(-u)\mathcal{R}_{23}(u)\bar{w}_{m}\otimes v_{a}=-u^{m}\bar{w}%
_{a}\otimes v_{m-1}+~\ldots ~,  \label{LRWbar}
\end{equation}%
where the omitted terms in the second equality lie in $W$. Thus, we may write%
\begin{multline*}
T(-u)Q(u)=\limfunc{Tr}_{\mathbb{C}^{2}\otimes \mathcal{M}}[z^{\sigma^3\otimes
N\otimes 1}L_{0n}(-u)\mathcal{R}_{0^{\prime }n}(u)\cdots L_{01}(-u)\mathcal{R%
}_{0^{\prime }1}(u)]= \\
\limfunc{Tr}_{W}[z^{\sigma^3\otimes N\otimes 1}L_{0n}(-u)\mathcal{R}_{0^{\prime
}n}(u)\cdots L_{01}(-u)\mathcal{R}_{0^{\prime }1}(u)] \\
+\limfunc{Tr}_{\bar{W}}[z^{\sigma^3\otimes N\otimes 1}L_{0n}(-u)\mathcal{R}%
_{0^{\prime }n}(u)\cdots L_{01}(-u)\mathcal{R}_{0^{\prime }1}(u)],
\end{multline*}%
where the indices $0$ and $0^{\prime }$ refer to the factor $\mathbb{C}^{2}$
and $\mathcal{M}$ in $\mathbb{C}^{2}\otimes \mathcal{M}$, respectively. The
assertion now follows by observing that (\ref{LRW}) and (\ref{LRWbar}) imply
that the only allowed vertex configurations in a row are the ones depicted
in Figure \ref{fig:rowconfig_TQ}. The configuration labelled (A) corresponds
to the trace over $W$ and yields the identity in (\ref{crystalTQ}).

The second configuration (B) describes the trace over $\bar{W}$ and
coincides with the action of the $\widehat{\mathfrak{sl}}(n)$-Dynkin diagram
automorphism, $\limfunc{rot}:\boldsymbol{m}\mapsto (m_{n},m_{1},\ldots
,m_{n-1})$, which for $z=1$ is identical with the action of $h_{k}(\mathcal{A%
})$; see (\ref{nchdef}).
\end{proof}

The following result is contained in \cite[Part I, Def 5.16 and Cor 6.9]{KS}
for $r<n$. Here we state an alternative proof valid for all $r>0$.

\begin{corollary}
The familiar determinant relations from the commutative case also hold for
the noncommutative elementary and complete symmetric polynomials,%
\begin{equation}
h_{r}(\mathcal{A})=\det (e_{1-i+j}(\mathcal{A}))_{1\leq i,j\leq r},\qquad
e_{r}(\mathcal{A})=\det (h_{1-i+j}(\mathcal{A}))_{1\leq i,j\leq r}\;,
\label{ncJT}
\end{equation}%
where the determinants are well defined due to (\ref{ncehcomm}).
\end{corollary}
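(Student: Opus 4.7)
My plan is to derive both identities of \eqref{ncJT} from Baxter's TQ-equation (Proposition \ref{TQeqn}) together with Corollary \ref{ncehcomm}, which guarantees that the families $\{e_{r}(\mathcal{A})\}$ and $\{h_{s}(\mathcal{A})\}$ commute pairwise. Consequently, all these operators lie in a commutative subring $R\subset\limfunc{End}\mathcal{H}[z]$, and the generating functions
$$T(-u)=\sum_{r=0}^{n}(-u)^{r}e_{r}(\mathcal{A}),\qquad Q(u)=\sum_{s\geq 0}u^{s}h_{s}(\mathcal{A})$$
may be treated as genuine formal power series in $R[[u]]$; each has constant term $\mathbf{1}$ and is therefore a unit in this ring. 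The TQ-equation rearranges as
$$T(-u)\,Q(u) \;=\; \mathbf{1} \;+\; z(-u)^{n}\sum_{k\geq 0}u^{k}\, h_{k}(\mathcal{A})\,\pi_{k}.$$

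Next I would fix a level subspace $\mathcal{H}_{k}$ and exploit $\pi_{k'}|_{\mathcal{H}_{k}}=\delta_{k,k'}$ together with the vanishing $h_{s}(\mathcal{A})\mathcal{H}_{k}=\{0\}$ for $s>k$ (Proposition \ref{main}). Restricted to $\mathcal{H}_{k}$ the correction term then collapses to a single term in degree $u^{n+k}$, and one obtains the congruence
$$T(-u)\,Q(u)\equiv\mathbf{1}\pmod{u^{\,n+k}}\qquad\text{on }\mathcal{H}_{k}.$$
Since both $T(-u)$ and $Q(u)$ are units in $R[[u]]$ this is equivalent to $Q(u)\equiv T(-u)^{-1}\pmod{u^{n+k}}$ and, symmetrically, to $T(-u)\equiv Q(u)^{-1}\pmod{u^{n+k}}$, both holding on $\mathcal{H}_{k}$.

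At this stage I would invoke the classical generating-function form of the Jacobi-Trudi identity, which is valid in any commutative ring and therefore applies verbatim in $R[[u]]$:
$$T(-u)^{-1}=\sum_{r\geq 0}u^{r}\det\!\bigl(e_{1-i+j}(\mathcal{A})\bigr)_{1\leq i,j\leq r},\qquad Q(u)^{-1}=\sum_{r\geq 0}(-u)^{r}\det\!\bigl(h_{1-i+j}(\mathcal{A})\bigr)_{1\leq i,j\leq r}.$$
Comparing the coefficient of $u^{r}$ on both sides of the two congruences above yields, on $\mathcal{H}_{k}$, exactly the two determinantal identities of \eqref{ncJT} for every $r<n+k$. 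As $k$ ranges over $\mathbb{Z}_{\geq 0}$, each fixed $r$ is covered on all sufficiently high levels, producing the operator equalities.

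The main obstacle — and the place where the specific form of the TQ-equation really enters — is establishing that the correction term $z(-u)^{n}\sum_{k'} u^{k'} h_{k'}(\mathcal{A})\pi_{k'}$ is concentrated in degree $n+k$ on each $\mathcal{H}_{k}$; this is what reduces the noncommutative computation to an application of the classical (commutative) generating-function Jacobi-Trudi, the latter being automatic once Corollary \ref{ncehcomm} places everything in the commutative ring $R$. Apart from this point the argument is purely formal power series bookkeeping.
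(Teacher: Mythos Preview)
Your argument is essentially the paper's: both extract from the $TQ$-equation the Newton-type identities $\sum_{r}(-1)^{r}e_{r}(\mathcal{A})h_{j-r}(\mathcal{A})=0$ and then invoke the classical commutative solution (Cramer's rule, as in Macdonald). You package this as the formal-power-series congruence $Q(u)\equiv T(-u)^{-1}\pmod{u^{n+k}}$ on $\mathcal{H}_{k}$, whereas the paper simply reads off the coefficient of $u^{j}$ for $j=1,\ldots,n$ in \eqref{crystalTQ} directly; the two are equivalent, since the coefficients of $T(-u)^{-1}$ are \emph{defined} by those same recursions.

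One point to tighten: your closing sentence, ``each fixed $r$ is covered on all sufficiently high levels, producing the operator equalities'', does not quite finish the job. Sufficiently high levels means $k>r-n$, so for $r<n$ every $\mathcal{H}_{k}$ is covered and you are done; but for $r\geq n$ the levels $k=0,\ldots,r-n$ are not reached by your congruence, and an operator identity on $\mathcal{H}=\bigoplus_{k}\mathcal{H}_{k}$ requires equality on \emph{every} summand. The paper's proof sidesteps the level-by-level restriction by reading off the relations globally, but note that it too is delicate at $j\geq n$ (the coefficient of $u^{n}$ in \eqref{crystalTQ} carries the $\pi_{0}$-correction $(-1)^{n}z\pi_{0}$, so the relation $\sum_{r}(-1)^{r}e_{r}h_{n-r}=0$ holds only off $\mathcal{H}_{0}$). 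In practice only the range $r<n$ is needed downstream, and there your argument and the paper's coincide without qualification.
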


\begin{proof}
Performing a series expansion in (\ref{crystalTQ}) with respect to the
spectral parameter $u$ we find for $j=1,2,\ldots ,n$ the identities%
\begin{equation}
\sum_{r=0}^{j}(-1)^{r}e_{r}(\mathcal{A})h_{j-r}(\mathcal{A})=0~,
\end{equation}%
which constitute a system of homogeneous linear equations in a set of
commutative variables due to (\ref{ncehcomm}). The solution is therefore
identical to the commutative case and is given by (\ref{ncJT}); see \cite[%
page 21, eqn (2.6')]{MacDonald}. The second term on the right hand side of
equation (\ref{crystalTQ}) yields the equality%
\begin{equation*}
\sum_{r=0}^{n}(-1)^{r}e_{r}(\mathcal{A})h_{n+k-r}(\mathcal{A})\pi
_{k}=(-1)^{n}zh_{k}(\mathcal{A})\pi _{k}
\end{equation*}
which is easily verified by observing that $e_{n}(\mathcal{A})=z1$ and $%
h_{r}(\mathcal{A})\pi _{k}=0$ for $r>k$ as discussed earlier.
\end{proof}

\section{The WZNW fusion ring}

In this final section we explain how the ring of noncommutative functions
generated from the transfer matrix (\ref{Qmatrix}) for the vertex model (\ref%
{Boltzmannweight}) and the transfer matrix (\ref{Tmatrix}) for the phase
model is related to the fusion ring of $\widehat{\mathfrak{sl}}(n)_k$ WZNW conformal field theory. First we
need to introduce special elements in the ring of noncommutative functions.
We do this in a similar manner as in \cite[Cor 6.8 and Cor 7.2]{KS} and
therefore omit the proof.

\begin{proposition}[Cauchy-identity]
Let $\lambda $ be a partition and define the following noncommutative
analogue of Schur polynomials%
\begin{equation}
s_{\lambda }(\mathcal{A})=\det [h_{\lambda _{i}-i+j}(\mathcal{A})]_{1\leq i,j\leq n}\;.
\label{ncSchurdef}
\end{equation}%
In particular, we have $s_{(r)}(\mathcal{A})=h_{r}(\mathcal{A})$ and $%
s_{(1^{r})}(\mathcal{A})=e_{r}(\mathcal{A})$, where $(r)$ and $(1^{r})$ are
a horizontal and vertical strip of length $r$. Then we have the generalised
Cauchy identity%
\begin{equation}
Q(u_{1})\cdots Q(u_{l})=\sum_{\lambda }s_{\lambda }(u_{1},\ldots
,u_{l})s_{\lambda }(\mathcal{A})  \label{ncCauchy}
\end{equation}%
for all $l>0$. Here $s_{\lambda }(u_{1},\ldots ,u_{l})$ is the standard,
\emph{commutative} Schur polynomial.
\end{proposition}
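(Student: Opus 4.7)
The plan is to reduce the generalised Cauchy identity to the classical (commutative) one by exploiting that the $h_{r}(\mathcal{A})$ pairwise commute. By the preceding corollary (equation (\ref{ncehcomm})), the $h_{r}(\mathcal{A})$ generate a commutative subalgebra $\mathcal{C}\subset \func{End}\mathcal{H}[z]$. Consequently, $Q(u_{1}),\ldots ,Q(u_{\ell })$ all commute and their product is a well-defined formal power series in the $u_{i}$ with coefficients in $\mathcal{C}$.

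The main step is to exhibit a homomorphism from the ring of symmetric functions. Let $\Lambda $ denote the ring of symmetric functions in infinitely many commuting variables with its standard generators $h_{1},h_{2},\ldots $, and recall that $\Lambda =\mathbb{Z}[h_{1},h_{2},\ldots ]$ is the free commutative polynomial ring on the $h_{r}$. Since the $h_{r}(\mathcal{A})$ pairwise commute, there is a well-defined ring homomorphism
\begin{equation*}
\phi :\Lambda \longrightarrow \mathcal{C},\qquad \phi (h_{r})=h_{r}(\mathcal{A}),
\end{equation*}
which extends to formal power series in auxiliary variables $u_{1},\ldots ,u_{\ell }$. Under $\phi $ the generating series $H(u)=\sum_{r\geq 0}h_{r}u^{r}$ maps to $Q(u)=\sum_{r\geq 0}h_{r}(\mathcal{A})u^{r}$ by Proposition \ref{main}. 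Moreover, by the noncommutative Jacobi--Trudi identity (\ref{ncJT}), the image $\phi (s_{\lambda })$ of the classical Schur function $s_{\lambda }=\det (h_{\lambda _{i}-i+j})_{1\leq i,j\leq n}$ equals $s_{\lambda }(\mathcal{A})$, the determinant being well defined precisely because the entries commute.

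Now I would apply $\phi $ to the classical Cauchy identity in $\Lambda$,
\begin{equation*}
\prod_{j=1}^{\ell }H(u_{j})=\sum_{\lambda }s_{\lambda }(u_{1},\ldots ,u_{\ell })\,s_{\lambda },
\end{equation*}
which follows from the standard identity $\prod _{i,j}(1-x_{i}u_{j})^{-1}=\sum _{\lambda }s_{\lambda }(x)s_{\lambda }(u)$ on specialising to infinitely many $x$-variables. Applying $\phi $ to both sides and using $\phi (H(u_{j}))=Q(u_{j})$ and $\phi (s_{\lambda })=s_{\lambda }(\mathcal{A})$ yields (\ref{ncCauchy}).

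The only subtle point is convergence/well-definedness of the infinite sum on the right hand side. Here the argument is that on each subspace $\mathcal{H}_{k}\subset \mathcal{H}$ one has $h_{r}(\mathcal{A})\mathcal{H}_{k}=\{0\}$ for $r>k$, as already noted in the proof of Proposition \ref{main}; consequently $s_{\lambda }(\mathcal{A})\mathcal{H}_{k}=\{0\}$ whenever $\lambda _{1}>k$, so only finitely many terms contribute when acting on any vector in $\mathcal{H}_{k}$. Since $\mathcal{H}=\bigoplus _{k\geq 0}\mathcal{H}_{k}$, both sides of (\ref{ncCauchy}) are well defined as formal power series in $u_{1},\ldots ,u_{\ell }$ with operator coefficients, and the homomorphism $\phi $ transports the classical identity to the noncommutative one. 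I do not expect any serious obstacle beyond articulating this well-definedness, since the substantive work (commutativity and Jacobi--Trudi) has already been done.
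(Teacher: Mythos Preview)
Your argument is correct. The paper does not supply its own proof here---it cites \cite[Cor 6.8 and Cor 7.2]{KS} and omits the details---so there is nothing to compare against directly; your route via the specialisation homomorphism $\phi:\Lambda\to\mathcal{C}$, $h_r\mapsto h_r(\mathcal{A})$, is the standard and natural one once commutativity (\ref{ncehcomm}) has been established, and it meshes cleanly with the machinery already in place (Proposition~\ref{main} and (\ref{ncJT})). The well-definedness check on each $\mathcal{H}_k$ is exactly the right observation.

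One minor caveat worth recording: when $\ell>n$ the right-hand side of (\ref{ncCauchy}) involves partitions $\lambda$ with more than $n$ parts, and for such $\lambda$ the image $\phi(s_\lambda)$ is an $\ell(\lambda)\times\ell(\lambda)$ Jacobi--Trudi determinant, not the $n\times n$ one appearing in (\ref{ncSchurdef}). Your identification $\phi(s_\lambda)=s_\lambda(\mathcal{A})$ is therefore only immediate for $\ell(\lambda)\leq n$. This is harmless for the uses made of the proposition in the paper (in particular the final Corollary, where $\ell=n-1$), but strictly speaking the definition (\ref{ncSchurdef}) should be read as the $\ell(\lambda)\times\ell(\lambda)$ determinant for the statement to hold for all $\ell$.
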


Let us now recall the definition of the fusion ring ${\mathcal{F}}_{k}(%
\widehat{\mathfrak{sl}}(n),\mathbb{Z})$. The basic building blocks of the
WZNW conformal field theory are the primary fields which can be viewed as
the highest weight vectors with respect to the actions of both, the Virasoro
algebra and the affine algebra $\widehat{\mathfrak{sl}}(n)$. Hence, as a set
the primary fields are in one-to-one correspondence with certain elements in
the weight lattice of $\widehat{\mathfrak{sl}}(n)$ which we now describe.

We identify the basis vectors of $\mathcal{H}_{k}$ in (\ref{decomp})
labeled by $P_{k}^{+}$ (the finite set of partitions $\hat{\lambda}$ of
maximal height $n$ and of width $\hat{\lambda}_{1}=k$) with the {\em integral dominant
weights} of the affine algebra $\widehat{\mathfrak{sl}}(n)$ at level $k\in
\mathbb{Z}_{\geq 0}$. Namely, given a partition $\hat{\lambda}$ we map to
the weight $\sum_{i=1}^{n}m_{i}(\hat{\lambda})\hat{\omega}_{i}$, where the
coefficients $m_{i}(\hat{\lambda})$ are the multiplicities of columns of
height $i$ and the $\hat{\omega}_{i}$ are the fundamental $\widehat{%
\mathfrak{sl}}(n)$ weights with $\hat{\omega}_{n}\equiv \hat{\omega}_{0}$;
for details the reader is referred to \cite{Kac} and \cite[Part I, Section 2]{KS}. By abuse of notation we shall not distinguish between
partitions and weights.

Given two primary fields associated with two $\widehat{\mathfrak{sl}}(n)$
weights at level $k$, say $\hat{\lambda}$ and $\hat{\mu}$, their fusion
product can be expanded again into a sum of primary fields; see e.g. \cite%
{CFTbook}. Thus, we now consider the free abelian group ${\mathcal{F}}_{k}(%
\widehat{\mathfrak{sl}}(n),\mathbb{Z})$ generated by the elements in $%
P_{k}^{+}$ with respect to addition and introduce for $\hat{\lambda},\hat{\mu%
}\in P_{k}^{+}$ the fusion product as follows
\begin{equation}
\hat{\lambda}\ast \hat{\mu}=\sum_{\hat{\nu}\in P_{k}^{+}}\mathcal{N}_{\hat{%
\lambda}\hat{\mu}}^{(k),\hat{\nu}}\hat{\nu},\qquad \mathcal{N}_{\hat{\lambda}%
\hat{\mu}}^{(k),\hat{\nu}}=\sum_{\hat{\sigma}\in P_{k}^{+}}\frac{\mathcal{S}%
_{\hat{\lambda}\hat{\sigma}}\mathcal{S}_{\hat{\mu}\hat{\sigma}}\mathcal{\bar{%
S}}_{\hat{\nu}\hat{\sigma}}}{\mathcal{S}_{0\hat{\sigma}}}\;.  \label{OPE}
\end{equation}%
The structure constants $\mathcal{N}_{\hat{\lambda}\hat{\mu}}^{(k),\hat{\nu}%
} $, known as fusion coefficients in the physics literature, are given in
terms of the Verlinde formula \cite{Verlinde} with $\mathcal{S}_{\hat{\lambda%
}\hat{\mu}}$ denoting a matrix element of the modular $\mathcal{S}$-matrix
which is explicitly given in terms of the Kac-Peterson formula \cite%
{KacPeterson} ($\iota =\sqrt{-1}$),
\begin{equation}
\mathcal{S}_{\hat{\lambda}\hat{\sigma}}=\frac{e^{\iota \pi n(n-1)/4}}{\sqrt{%
n(k+n)^{n-1}}}\sum_{w\in S_{n}}(-1)^{\ell (w)}e^{-\frac{2\pi \iota }{k+n}%
(\sigma +\rho ,w(\lambda +\rho ))}  \label{modS}
\end{equation}%
Here $\lambda =\hat{\lambda}-k\hat{\omega}_{0}$ is the finite part of the
affine weight and $\rho =\sum_{i=1}^{n-1}\omega _{i}\;$is the Weyl vector
with $\omega _{i}$ being the finite fundamental weights of $\mathfrak{sl}(n)$%
.

\begin{theorem}[Korff, Stroppel \protect\cite{KS}]
Let $\hat{\lambda},\hat{\mu},\hat{\nu}\in P_{k}^{+}$ and set $z=1$. Then we
have the identity%
\begin{equation}
\hat{\lambda}\ast \hat{\mu}=s_{\hat{\lambda}}(\mathcal{A})\hat{\mu}
\label{combproduct}
\end{equation}%
and in particular the following equality holds, $\mathcal{N}_{\hat{\lambda}%
\hat{\mu}}^{(k),\hat{\nu}}=\left\langle \hat{\nu},s_{\hat{\lambda}}(\mathcal{%
A})\hat{\mu}\right\rangle $.
\end{theorem}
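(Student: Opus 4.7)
The plan is to reduce $\hat\lambda \ast \hat\mu = s_{\hat\lambda}(\mathcal{A})\hat\mu$ to the Verlinde formula \eqref{OPE} by simultaneously diagonalising the commuting family $\{s_{\hat\lambda}(\mathcal{A})|_{\mathcal{H}_k}\}_{\hat\lambda\in P_k^+}$ and matching its spectrum with the Kac--Peterson formula \eqref{modS}. First I would verify that each $h_r(\mathcal{A})$, and hence each $s_{\hat\lambda}(\mathcal{A})$, preserves every level subspace $\mathcal{H}_k$: the inner word $a_1^{\varepsilon_1}\cdots a_{n-1}^{\varepsilon_{n-1}}$ in \eqref{nchdef} is level-preserving because each $a_i = \varphi_i \varphi_{i+1}^\ast$ is, while the outer factor $(\varphi_1^\ast)^{\varepsilon_0}\cdots \varphi_n^{\varepsilon_0}$ has net level zero. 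Combined with \eqref{ncehcomm}, the Jacobi--Trudi formula \eqref{ncSchurdef} then places $\{s_{\hat\lambda}(\mathcal{A})|_{\mathcal{H}_k}\}$ inside the commutative subalgebra of $\limfunc{End}\mathcal{H}_k$ generated by $h_1(\mathcal{A}),\ldots,h_k(\mathcal{A})$, i.e.\ by the transfer matrix $Q(u)|_{\mathcal{H}_k}$.

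Next I would diagonalise $Q(u)|_{\mathcal{H}_k}$ via the algebraic Bethe ansatz based on \eqref{YBE}, producing simultaneous eigenvectors $|\boldsymbol{x}\rangle$ with $Q(u)|\boldsymbol{x}\rangle = \prod_{j=1}^k (1-x_j u)^{-1}|\boldsymbol{x}\rangle$; at $z=1$ in the crystal limit the Bethe equations reduce to $x_j^{n+k} = (-1)^{k-1}$ together with a distinctness constraint, whose admissible solutions are indexed by $P_k^+$ through $x_j(\hat\sigma) = \exp\bigl(\tfrac{2\pi\iota}{n+k}(\sigma_j + \rho_j)\bigr)$. The Cauchy identity \eqref{ncCauchy} then forces the eigenvalue of $s_{\hat\lambda}(\mathcal{A})$ on $|\boldsymbol{x}(\hat\sigma)\rangle$ to equal the ordinary Schur polynomial $s_\lambda(x_1(\hat\sigma),\ldots,x_k(\hat\sigma))$, which by the Weyl character formula is exactly the determinantal ratio $\mathcal{S}_{\hat\lambda\hat\sigma}/\mathcal{S}_{0\hat\sigma}$ read off from \eqref{modS}. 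Normalising the Bethe eigenvectors so that the change-of-basis matrix from $\{\hat\mu\}_{\hat\mu\in P_k^+}$ to $\{|\boldsymbol{x}(\hat\sigma)\rangle\}_{\hat\sigma\in P_k^+}$ coincides with the modular $\mathcal{S}$, expanding $\hat\mu$ in the spectral basis and comparing with \eqref{OPE} yields $\langle \hat\nu,s_{\hat\lambda}(\mathcal{A})\hat\mu\rangle = \mathcal{N}^{(k),\hat\nu}_{\hat\lambda\hat\mu}$, from which the fusion product identity follows by re-summing over $\hat\nu$.

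The hardest step will be the Bethe-ansatz analysis: one must show that the admissible Bethe roots are precisely the prescribed roots of unity, that the resulting eigenvectors are linearly independent and span $\mathcal{H}_k$, and that their normalisation is compatible with $\mathcal{S}$ so that the Kac--Peterson formula applies cleanly. An attractive alternative, which I would keep in reserve, is to bypass the Bethe ansatz by induction on $|\hat\lambda|$: use the $TQ$-equation \eqref{crystalTQ} and Jacobi--Trudi \eqref{ncJT} to reduce to the Pieri cases $s_{(r)}(\mathcal{A})=h_r(\mathcal{A})$ and $s_{(1^r)}(\mathcal{A})=e_r(\mathcal{A})$, and then verify the corresponding Pieri rules in the fusion ring directly from the explicit combinatorial action of $h_r(\mathcal{A})$ and $e_r(\mathcal{A})$ on $P_k^+$ encoded in \eqref{nchdef} and \eqref{ncedef}; the truncation to level $k$ captured by $h_r(\mathcal{A})\pi_k = 0$ for $r>k$ together with \eqref{crystalTQ} then plays the role of the level-$k$ quantum correction in the affine Pieri rule.
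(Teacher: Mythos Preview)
Your primary approach is essentially the one sketched in the paper: construct a simultaneous eigenbasis for the commuting family of noncommutative symmetric polynomials via the algebraic Bethe ansatz, identify the change-of-basis matrix with the modular $\mathcal{S}$-matrix \eqref{modS}, and thereby recover the Verlinde formula \eqref{OPE} for the matrix elements of $s_{\hat\lambda}(\mathcal{A})$. The paper (deferring details to \cite[Part I, Section 6]{KS}) proceeds in exactly this way.

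One technical point worth flagging: the paper runs the Bethe ansatz on the phase-model transfer matrix $T(u)$ of \eqref{Tmatrix}, not on $Q(u)$ directly, and then invokes \eqref{ncJT} and \eqref{ncSchurdef} to transport the eigenbasis to all the $s_{\hat\lambda}(\mathcal{A})$. This is the cleaner route, since the auxiliary space for $T$ is two-dimensional and the Yang--Baxter algebra \eqref{phaseYBalgebra} is the standard $A,B,C,D$ setup, whereas $Q$ has an infinite-dimensional auxiliary space and the intertwiner $\mathcal{S}$ in \eqref{YBE} does not immediately furnish creation operators of the usual form. Your plan would still work once you pass through $T$, but as written the step ``diagonalise $Q(u)|_{\mathcal{H}_k}$ via the algebraic Bethe ansatz based on \eqref{YBE}'' would need to be rerouted through the phase model. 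Your alternative Pieri-rule strategy is not the line taken in the paper; it is a reasonable backup, though establishing the affine Pieri rules combinatorially and then lifting to arbitrary $\hat\lambda$ via Jacobi--Trudi is itself nontrivial and is carried out separately in \cite{KS}.
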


\begin{remark}
\textrm{Setting alternatively $z=0$ we specialise to the ring of
noncommutative Schur polynomials $s_{\lambda }(\mathcal{A}^{\prime })$ in
the local finite plactic algebra \cite{FG}. According to \cite[Lemma 6.3 and
Theorem 6.20]{KS} one then obtains the following quotient of the cohomology
ring of the Grassmannian $\func{Gr}_{k,n+k-1},$ $\mathrm{H^{\ast }(\func{Gr}%
_{k,n+k-1})}/\langle h_{n+k}\rangle \cong \mathbb{Z}[e_{1},\ldots
,e_{k}]/\langle h_{n},\ldots ,h_{n+k}\rangle $ , whose structure constants
are the intersection numbers }$c_{\lambda \mu }^{\nu }$\textrm{\ of three
hyperplanes with }$\mu _{1}=\nu _{1}$ \textrm{and coincide with the
celebrated Littlewood-Richardson coefficients \cite[\S 9.4, Exercise 21 (a)]%
{FultonYT}. }
\end{remark}

\begin{proof}
The proof of this result can be found in detail in \cite[Part I, Section 6]%
{KS} and relies on the explicit construction of an eigenbasis for the
transfer matrix (\ref{Tmatrix}) of the phase model (i.e. the generating
function of the noncommutative elementary symmetric polynomials (\ref{ncedef}%
)) via the quantum inverse scattering method or algebraic Bethe Ansatz; see
e.g. \cite{KBI}. Because of the relations (\ref{ncJT}) and (\ref{ncSchurdef}%
) this eigenbasis, called Bethe vectors, forms also an eigenbasis of the
other noncommutative symmetric polynomials and in particular of the transfer
matrix (\ref{Qmatrix}). One then verifies that the transformation matrix
from the standard basis labeled by $\hat{\lambda}\in P_{k}^{+}$ to the
Bethe vectors is given by the modular S-matrix (\ref{modS}). From this
result one derives the Verlinde formula (\ref{OPE}) for the matrix elements
of the noncommutative Schur polynomial and, thus, the identity (\ref%
{combproduct}) follows.
\end{proof}

We conclude by stating two corollaries which are now obvious consequences of
the last Theorem. We therefore omit their proofs. The first one uses the
recursion formula (\ref{nchrecursion}) for noncommutative complete symmetric
polynomials to relate fusion coefficients at different level $k$; this is in
analogy with the recursion relation in \cite[Cor 7.4]{KS}.

\begin{figure}[tbp]
\begin{equation*}
\includegraphics[scale=0.28]{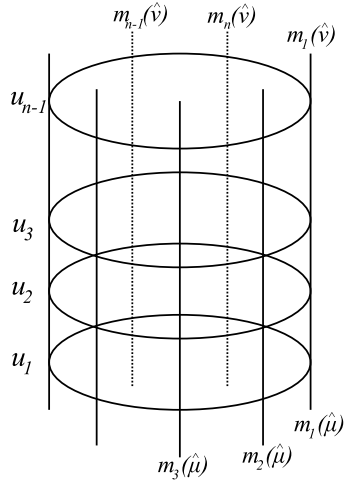}
\end{equation*}%
\caption{Graphical depiction of the $(n-1)\times n$ lattice with periodic
boundary conditions in the horizontal direction and fixed boundary
conditions $\hat{\protect\mu},\hat{\protect\nu}\in P_{k}^{+}$ on the outer
vertical edges. The spectral parameter varies from row to row. The
corresponding partition function obtained by summing over the Boltzmann
weights (\protect\ref{Boltzmannweight}) at each vertex yields (\protect\ref%
{Z}).}
\label{fig:partitionfunc}
\end{figure}

\begin{corollary}[Recursion relation]
For fixed level $k\in \mathbb{Z}_{\geq 0}$ let $\lambda =(r)$ be a
horizontal strip of length $r\leq k$ and set $\hat{\lambda}%
_{r}=(k,k-r,\ldots ,k-r)\in P_{k}^{+}$. Then we have the following recursion
relation for fusion coefficients,%
\begin{equation}
\mathcal{N}_{\hat{\lambda}_{r},\hat{\mu}}^{(k),\hat{\nu}}=c_{_{\hat{\mu}%
(r)}}^{\hat{\nu}}+\mathcal{N}_{\hat{\lambda}_{r-1},\varphi _{n}\hat{\mu}%
}^{(k-1),\varphi _{1}\hat{\nu}}\;,  \label{fusionrecursion}
\end{equation}%
where $c_{\lambda \mu }^{\nu }=c_{\mu \lambda }^{\nu }$ is the
Littlewood-Richardson coefficient.
\end{corollary}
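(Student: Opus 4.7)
The plan is to apply the Theorem of Korff--Stroppel combined with the recursion relation \eqref{nchrecursion} for the noncommutative complete symmetric polynomials, read off the two pieces as a Littlewood--Richardson coefficient and a fusion coefficient at one level lower, respectively.

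First, I would observe that the weight $\hat{\lambda}_r = (k, k-r, \ldots, k-r)\in P_k^+$ has associated Young diagram consisting of $k-r$ columns of height $n$ and $r$ columns of height $1$; its finite part is therefore the horizontal strip $(r)$. Hence $s_{\hat{\lambda}_r}(\mathcal{A}) = s_{(r)}(\mathcal{A}) = h_r(\mathcal{A})$, and the Theorem gives
\begin{equation*}
\mathcal{N}_{\hat{\lambda}_r,\hat{\mu}}^{(k),\hat{\nu}} = \langle \hat{\nu}, h_r(\mathcal{A})\,\hat{\mu}\rangle.
\end{equation*}

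Next, I would plug in the recursion \eqref{nchrecursion} at $z=1$, namely $h_r(\mathcal{A}) = h_r(\mathcal{A}') + \varphi_1^*\, h_{r-1}(\mathcal{A})\, \varphi_n$, splitting the matrix element into two contributions. For the first piece $\langle \hat{\nu}, h_r(\mathcal{A}')\,\hat{\mu}\rangle$, I would invoke the $z=0$ specialisation discussed in the remark following the Theorem: the ring generated by $s_\lambda(\mathcal{A}')$ is the relevant quotient of $H^*(\mathrm{Gr}_{k,n+k-1})$, and its structure constants are the Littlewood--Richardson coefficients $c_{\hat{\mu}(r)}^{\hat{\nu}}$ (with the boundary condition $\mu_1=\nu_1$ automatically enforced by the vanishing of $h_r(\mathcal{A}')$ off the level-preserving subspace). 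This identifies the first piece with $c_{\hat{\mu}(r)}^{\hat{\nu}}$.

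For the second piece, I would use the adjointness $\langle \varphi_1^* v, v'\rangle = \langle v, \varphi_1 v'\rangle$ established in the proposition on the phase algebra to rewrite
\begin{equation*}
\langle \hat{\nu},\, \varphi_1^*\, h_{r-1}(\mathcal{A})\, \varphi_n\, \hat{\mu}\rangle
= \langle \varphi_1 \hat{\nu},\, h_{r-1}(\mathcal{A})\, \varphi_n\hat{\mu}\rangle.
\end{equation*}
Since $\varphi_i\colon \mathcal{H}_k\to \mathcal{H}_{k-1}$, both $\varphi_n\hat{\mu}$ and $\varphi_1\hat{\nu}$ live at level $k-1$. Moreover, at level $k-1$ the weight $\hat{\lambda}_{r-1} = (k-1,k-r,\ldots,k-r)$ has finite part $(r-1)$, so $h_{r-1}(\mathcal{A}) = s_{\hat{\lambda}_{r-1}}(\mathcal{A})$, and the Theorem applied at level $k-1$ identifies this bracket with $\mathcal{N}_{\hat{\lambda}_{r-1},\varphi_n\hat{\mu}}^{(k-1),\varphi_1\hat{\nu}}$. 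Combining the two pieces yields \eqref{fusionrecursion}.

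The only nontrivial bookkeeping is verifying that the correspondence between partitions and weights behaves correctly under the level shift $k\mapsto k-1$, in particular that the finite part $(r-1)$ at level $k-1$ indeed corresponds to $\hat{\lambda}_{r-1}$; this is the main potential source of index confusion but is straightforward once one writes out the column multiplicities. No delicate convergence or positivity issue arises, since everything reduces to one application of the operator identity \eqref{nchrecursion} and the adjointness of $\varphi_1,\varphi_1^*$.
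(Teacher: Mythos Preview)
Your proposal is correct and follows precisely the approach the paper indicates: the paper omits the proof, stating only that the corollary is an ``obvious consequence'' of the Theorem together with the recursion formula \eqref{nchrecursion}, which is exactly the route you take. Your bookkeeping on the correspondence between $\hat{\lambda}_r$ and the horizontal strip $(r)$, the use of adjointness of $\varphi_1,\varphi_1^*$, and the identification of the $z=0$ piece with the Littlewood--Richardson coefficient via the remark following the Theorem are all in order.
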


Note that while this relation only involves horizontal strips the latter
allow one to compute all fusion coefficients via (\ref{ncSchurdef}) and (\ref%
{combproduct}).

The second consequence of the above Theorem and the identity (\ref{ncCauchy}%
) is the interpretation of the partition function of the vertex model (\ref%
{Boltzmannweight}).

\begin{corollary}[Generating function for fusion coefficients]
Given $\hat{\mu},\hat{\nu}\in P_{k}^{+}$ consider the vertex model (\ref%
{Boltzmannweight}) on an $(n-1)\times n$ lattice with periodic boundary
conditions in the horizontal direction and fix the boundary conditions in
the vertical directions by $\hat{\mu}$ and $\hat{\nu}$; see Figure \ref%
{fig:partitionfunc}. Assign to each row the spectral parameter $u_{i}$, then
the corresponding partition function (i.e. the weighted sum over all allowed vertex
configurations) has the expansion%
\begin{eqnarray}
Z_{\hat{\mu}}^{\hat{\nu}}(u_{1},\ldots ,u_{n-1})&=&\langle m(\hat\nu)|Q(u_1)\cdots Q(u_{n-1})|m(\hat\mu)\rangle\nonumber\\
&=&\sum_{\hat{\lambda}\in
P_{k}^{+}}\mathcal{N}_{\hat{\lambda}\hat{\mu}}^{(k),\hat{\nu}%
}s_{\lambda}(u_{1},\ldots ,u_{n-1}),  \label{Z}
\end{eqnarray}%
where $\lambda$ is the partition obtained from $\hat\lambda$ by deleting all columns of height $n$. Therefore we might interpret $Z$ as generating function for the fusion
coefficients.
\end{corollary}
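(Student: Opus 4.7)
The plan is to read off $Z$ as a matrix element and then combine the two main tools already developed in the paper, namely the generalised Cauchy identity \eqref{ncCauchy} and the Korff--Stroppel identification \eqref{combproduct}.

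\emph{Step 1 (partition function as a matrix element).} The Boltzmann weight of a fixed lattice configuration factorises as a product of row contributions by \eqref{Boltzmannweight}. Summing over the horizontal edges inside a single row of spectral parameter $u_i$, with its incoming and outgoing vertical edges frozen at $\boldsymbol{m}$ and $\boldsymbol{m}'$, gives exactly $\langle \boldsymbol{m}'|Q(u_i)|\boldsymbol{m}\rangle$ by the definition \eqref{Qmatrix}. Summing over the $n-2$ intermediate rows then yields $Z_{\hat\mu}^{\hat\nu}(u_1,\ldots,u_{n-1})=\langle m(\hat\nu)|Q(u_1)\cdots Q(u_{n-1})|m(\hat\mu)\rangle$, which is the first equality in \eqref{Z}.

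\emph{Step 2 (Cauchy expansion and cut-off).} Substituting \eqref{ncCauchy} rewrites this as $\sum_\lambda s_\lambda(u_1,\ldots,u_{n-1})\,\langle m(\hat\nu)|s_\lambda(\mathcal{A})|m(\hat\mu)\rangle$, with $\lambda$ running over all partitions. Two restrictions collapse the sum to $P_k^+$: first, the commutative Schur polynomial $s_\lambda(u_1,\ldots,u_{n-1})$ vanishes unless $\lambda$ has at most $n-1$ parts; second, the Jacobi--Trudi formula \eqref{ncSchurdef} expresses $s_\lambda(\mathcal{A})$ as a signed sum whose first row contributes a factor $h_{\lambda_1-1+j}(\mathcal{A})$ with index at least $\lambda_1$, and $h_r(\mathcal{A})\pi_k=0$ for $r>k$ (see the proof of Proposition \ref{main}); hence $s_\lambda(\mathcal{A})\pi_k=0$ whenever $\lambda_1>k$. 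The remaining partitions, with at most $n-1$ parts and $\lambda_1\leq k$, are in bijection with $P_k^+$ by the padding rule $\hat\lambda_i=\lambda_i+(k-\lambda_1)$ for $i\leq n-1$ and $\hat\lambda_n=k-\lambda_1$, i.e.\ by adding $k-\lambda_1$ columns of height $n$.

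\emph{Step 3 (invoke the fusion theorem).} Under this bijection, and with $z=1$, the two noncommutative Schur polynomials $s_\lambda(\mathcal{A})$ and $s_{\hat\lambda}(\mathcal{A})$ agree on $\mathcal{H}_k$. Indeed, adding $m=k-\lambda_1$ columns of height $n$ multiplies the Jacobi--Trudi determinant by $e_n(\mathcal{A})^m$, which at $z=1$ equals $1$ by the remark following \eqref{ncedef}. Combining this with \eqref{combproduct} yields $\langle m(\hat\nu)|s_\lambda(\mathcal{A})|m(\hat\mu)\rangle=\mathcal{N}_{\hat\lambda\hat\mu}^{(k),\hat\nu}$, and feeding this back into Step 2 produces the right-hand side of \eqref{Z}.

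The only non-routine point, and the one I would spend most care on, is the Jacobi--Trudi reconciliation in Step 3: one must verify (at the level of the noncommutative determinants \eqref{ncSchurdef}) that prepending $m$ columns of height $n$ to the index of a noncommutative Schur polynomial amounts to multiplying by $e_n(\mathcal{A})^m$, so that the passage between the \textquotedblleft non-affine\textquotedblright\ index $\lambda$ produced by the Cauchy identity and the \textquotedblleft affine\textquotedblright\ index $\hat\lambda\in P_k^+$ carrying the fusion datum is tautological on $\mathcal{H}_k$. Once this commutation with $e_n$ is in hand, Steps 1 and 2 are essentially diagrammatic bookkeeping against the definitions, and the corollary reduces to a direct combination of \eqref{ncCauchy} with \eqref{combproduct}.
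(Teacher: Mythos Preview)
Your proof is correct and follows precisely the route the paper indicates: the text states that this corollary is an ``obvious consequence'' of the Theorem \eqref{combproduct} together with the Cauchy identity \eqref{ncCauchy} and omits the argument, so your Steps~1--3 simply supply the details the paper leaves to the reader. Your identification of Step~3 as the one point needing care is apt; the cleanest way to justify $s_{\hat\lambda}(\mathcal{A})=z^{k-\lambda_1}s_\lambda(\mathcal{A})$ on $\mathcal{H}_k$ is to note that the commuting family $\{e_r(\mathcal{A})\}_{r\le n}$ receives a ring homomorphism from $\Lambda_n=\mathbb{Z}[e_1,\ldots,e_n]$, that on $\mathcal{H}_k$ this homomorphism sends $h_r\mapsto h_r(\mathcal{A})$ for all $r\le k+n-1$ (by the $TQ$-relations in the proof of \eqref{ncJT}), and that the Jacobi--Trudi determinants for both $s_\lambda$ and $s_{\hat\lambda}$ involve only such $h_r$; the classical identity $s_{\lambda+(1^n)}=e_n\,s_\lambda$ in $\Lambda_n$ then transports verbatim.
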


\bigskip \noindent \textbf{Acknowledgment.} The author would like to thank
Catharina Stroppel for many helpful discussions and comments on a draft
version of this paper and Masato Okado for providing him with reference \cite{Okado}. He also gratefully acknowledges the financial support
of a University Research Fellowship of the Royal Society. \bigskip
\appendix
\noindent{\Large \textbf{Appendix}}

\section{Derivation of the vertex model from $U_{q}\widehat{\mathfrak{sl}}(2)
$}

In this appendix we describe how the matrices (\ref{crystalR}) and (\ref%
{crystalS}) are obtained as a special limit from the $U_{q}\widehat{%
\mathfrak{sl}}(2)$-intertwiner $R(u,v;\mu ,\nu ):M_{\mu }(u)\otimes M_{\nu
}(v)\rightarrow M_{\mu }(u)\otimes M_{\nu }(v)$ satisfying the relation%
\begin{equation}
R\Delta (X)=\Delta ^{\text{op}}(X)R,\qquad X\in U_{q}\widehat{\mathfrak{sl}}%
(2)\text{\ ,}  \label{intertwining}
\end{equation}%
where $\Delta ^{\text{op}}$ is the coproduct (\ref{cop}) with the order of
the factors interchanged. Given that the coproducts $\Delta $ and $\Delta ^{%
\text{op}}$ are algebra homomorphisms it is sufficient to solve the
intertwining relation for $X=E_{i},F_{i},K_{i}^{\pm 1}$. These identities
provide us with a set of equations for the matrix elements of $R=R(u,v;\mu
,\nu )$ which enable us to compute them recursively. For convenience we make
the parameter transformations $\mu \rightarrow q/s$ and $v\rightarrow s^{-1}$%
. Setting as before $R(v_{a}\otimes v_{b})=\sum_{c,d\geq
0}R_{c,d}^{a,b}~v_{c}\otimes v_{d}$ one obtains for $X=K_{i}^{\pm 1}$ the
constraint
\begin{equation}
R_{c,d}^{a,b}=0\text{\quad unless\quad }a+b=c+d,  \label{charge}
\end{equation}%
and for $X=E_{0},F_{1}$ the recursion relations%
\begin{eqnarray}
R_{c,d}^{a+1,b} &=&\frac{(q^{2d+1}-us^{2}\nu q^{2a})R_{c-1,d}^{a,b}+us\nu
(1-s^{2}q^{2(a+c)})R_{c,d-1}^{a,b}}{\nu -us^{2}q^{2(a+b)+1}}\ ,  \label{rec1}
\\
R_{c,d}^{a,b+1} &=&\frac{(\nu -\nu ^{-1}q^{2(d+b+1)})R_{c-1,d}^{a,b}+s(\nu
q^{2c}-uq^{2b+1})R_{c,d-1}^{a,b}}{\nu -us^{2}q^{2(a+b)+1}}\ ,  \label{rec2}
\end{eqnarray}%
where matrix elements with negative indices are understood to be zero. Note
that as long as $u,\mu ,\nu $ are generic these homogeneous relations
determine $R$ up to a scalar factor. We choose the convention $%
R_{0,0}^{0,0}=1,$ then the above relations allow us to successively compute
all other matrix elements. It follows from the general axioms of a
quasi-triangular Hopf algebra (see e.g. \cite[Section 4.2, Prop 4.2.7]{CP})
that the result yields a solution to the Yang-Baxter equation.

Since $R_{0,0}^{0,0}=1$ and all the coefficients in the recursion relations
are regular at $q=0$ we can conclude that $R_{c,d}^{a,b}\in \mathbb{A}$. Let
$\mathbb{J\subset A}$ be the ideal generated by $q$ and denote by $\tilde{R}%
_{c,d}^{a,b}$ the image of $R_{c,d}^{a,b}$ under the isomorphism $\mathbb{A}/%
\mathbb{J}\rightarrow \mathbb{C}$ then
\begin{eqnarray}
\tilde{R}_{c,d}^{a+1,b} &=&-us^{2}\delta _{a,0}\tilde{R}%
_{c-1,d}^{0,b}+us(1-s^{2}\delta _{a,0}\delta _{c,0})\tilde{R}_{0,d-1}^{0,b}\
,  \label{crystalrec1} \\
\tilde{R}_{c,d}^{a,b+1} &=&\tilde{R}_{c-1,d}^{a,b}+s\delta _{c,0}\tilde{R}%
_{0,d-1}^{a,b}\ ,  \label{crystalrec2}
\end{eqnarray}%
Note that these relations are now independent of the parameter $\nu \in
\mathbb{C}$. The solution to these equations can be explicitly written down,
the non-vanishing matrix elements are%
\begin{gather}
\tilde{R}_{c,b-c}^{0,b}=s^{b-c},\qquad \tilde{R}%
_{b+1,a-1}^{a,b}=-u^{a}s^{a+1},  \label{sol1} \\
\tilde{R}_{b-\varepsilon ,a+\varepsilon }^{a,b}=u^{a}s^{a+\varepsilon
}(1-s^{2})\text{\quad for\quad }a>0,\;0\leq \varepsilon \leq b\;.
\label{sol2}
\end{gather}%
To simplify the result further we now take the limit $\mathcal{R}%
_{c,d}^{a,b}(u)=\lim_{s\rightarrow 0}s^{-d}\tilde{R}_{c,d}^{a,b}$ and obtain
(\ref{Boltzmannweight}).

The derivation of the operator (\ref{crystalS}) follows along similar lines.
The $U_{q}\widehat{\mathfrak{sl}}(2)$-intertwiner must satisfy the following
identity in $\limfunc{End}[M_{\mu }(u)\otimes M_{\mu }(v)\otimes M_{\nu
}(1)] $,%
\begin{equation*}
R_{12}(u/v;\mu ,\mu )R_{13}(u;\mu ,\nu )R_{23}(v;\mu ,\nu )=R_{23}(v;\mu
,\nu )R_{13}(u;\mu ,\nu )R_{12}(u/v;\mu ,\mu )\;.
\end{equation*}%
Once more, the Yang-Baxter equation is a direct consequence of one of the
axioms of a quasi-triangular Hopf algebra such as $U_{q}\widehat{\mathfrak{sl%
}}(2)$. Notice the different dependence on the parameters $\mu $ and $\nu $
in the equation. Similar as before we can consider the value $\tilde{R}$ of
the matrix elements $R_{c,d}^{a,b}(us/v;q/s,q/s)$ at $q=0$ and then take the
limit $\mathcal{S}(u)_{c,d}^{a,b}:=\lim_{s\rightarrow 0}$ $s^{c-b}\tilde{R}%
_{c,d}^{a,b}$ to find the operator (\ref{crystalS}).

It turns out that also the phase model is obtained from the $U_{q}\widehat{%
\mathfrak{sl}}(2)$-intertwiner for $M_{\mu }(u)\otimes M_{\nu }(v)$ in the
crystal limit, albeit choosing a different specialisation for $\mu $.
Namely, set $\mu =q^{2}$ then the crystal limit of $(1\otimes
q^{N})R(-uq;q^{2},\nu )$ is the operator (\ref{phaseL}) over $\mathbb{C}%
(u)^{2}\otimes \mathcal{M}\subset M_{\mu =q^{2}}(u)\otimes \mathcal{M}$,
where we identify the two-dimensional subspace in $M_{\mu =q^{2}}(u)$
spanned by $\{v_{0},v_{1}\}$ with $\mathbb{C}(u)^{2}$. This reduction to a
subspace is justified by observing that for $\mu =q^{2}$ it can be mapped
onto the fundamental evaluation module of $U_{q}\widehat{\mathfrak{sl}}(2)$;
see Remark \ref{reduction}.

Finally, also the operator $L^{\prime }$ is another special crystal limit of
the $U_{q}\widehat{\mathfrak{sl}}(2)$-intertwiner determined by (\ref{charge}%
) and the recursion relations (\ref{rec1}) and (\ref{rec2}). We summarise
the various relations in the following table (as before we set $s\rightarrow
0$ after taking the crystal limit):\medskip

\begin{center}
\begin{tabular}{|c|c|c|c|}
\hline\hline
$U_{q}\widehat{\mathfrak{sl}}(2)$-intertwiner$^{}$ & crystal limit & $\mu $
& $\nu $ \\ \hline\hline
\multicolumn{1}{|c|}{$(1\otimes s^{-N})R(us)$} & \multicolumn{1}{|c|}{$%
\mathcal{R}(u)$} & \multicolumn{1}{|c|}{$q/s$} & \multicolumn{1}{|c|}{
arbitrary} \\ \hline
\multicolumn{1}{|c|}{$(s^{N}\otimes 1)R(us)(1\otimes s^{-N})$} &
\multicolumn{1}{|c|}{$\mathcal{S}(u)$} & \multicolumn{1}{|c|}{$q/s$} &
\multicolumn{1}{|c|}{$q/s$} \\ \hline
\multicolumn{1}{|c|}{$(1\otimes q^{N})R(-uq)$} & \multicolumn{1}{|c|}{$L(u)$}
& \multicolumn{1}{|c|}{$q^{2}$} & \multicolumn{1}{|c|}{arbitrary} \\ \hline
\multicolumn{1}{|c|}{$(1\otimes q^{N})R(-uq/s)(s^{N}\otimes 1)$} &
\multicolumn{1}{|c|}{$L^{\prime }(u)$} & \multicolumn{1}{|c|}{$q^{2}$} &
\multicolumn{1}{|c|}{$q/s$} \\ \hline
\end{tabular}%
~.
\end{center}

\end{document}